\newtheorem{theorem}{Theorem}
\newtheorem{algorithm}[theorem]{Algorithm}
\newtheorem{definition}[theorem]{Definition}
\newtheorem{lemma}[theorem]{Lemma}
\newtheorem{remark}[theorem]{Remark}
\newcommand{\N}{\mathbb{N}}
\newcommand{\R}{\mathbb{R}}
\newcommand{\E}{\mathbb{E}}
\newcommand{\Feasibility}{\mathcal{F}}
\newcommand{\I}{\mathcal{I}}
\newcommand{\vv}{\varphi}
\newcommand{\strategy}{\pi}
\newcommand{\critical}{\beta}
\newcommand{\BidSpace}{\mathcal{B}}
\newcommand{\activeBid}{\mathcal{A}}
\newcommand{\competeBid}{\mathcal{C}}
\newcommand{\promiseBid}{\mathcal{P}}
\newcommand{\ADRA}{\text{ADRA}}
\newcommand{\vb}{\boldsymbol}
\renewcommand{\vec}{\boldsymbol}
\newcommand{\comm}{\mathsf{commit}}
\newcommand{\OPT}{\mathsf{OPT}}
\newcommand{\mechanism}{\mathcal{M}}
\newcommand{\FullRank}{\xi}
\newcommand{\EmptyRank}{\overline{\FullRank}}
\newcommand{\incr}{\Gamma}
\title{Truthful, Credible, and Optimal Auctions for Matroids via Blockchains and Commitments}
\author[1]{Aadityan Ganesh}
\author[2]{Qianfan Zhang}
\affil[1]{Princeton University, \texttt{aadityanganesh@princeton.edu}}
\affil[2]{Princeton University, \texttt{qianfan@princeton.edu}}
\date{}
\begin{document}


\maketitle


\begin{abstract}
    We consider a revenue-optimizing auctioneer in single-dimensional environments with matroid feasibility constraints.
    \citet{AL20} argue that any revenue-optimal, truthful, and credible mechanism requires unbounded communication.
    Recent works \citep{FW20, EFW22, CFK24} circumvent their impossibility for the single-item setting through the use of cryptographic commitments and blockchains.
    We extend their results to matroid feasibility constraints.

    At a high level, the two-round Deferred-Revelation Auction (DRA) discussed by \citet{FW20} and \citet{CFK24} requires each bidder to submit a deposit, which is slashed upon presenting verifiable evidence indicating a deviation from the behaviour prescribed by the mechanism.
    We prove that the DRA satisfies truthfulness, credibility and revenue-optimality for all matroid environments when bidders' values are drawn from $\alpha$-strongly regular distributions for $\alpha > 0$.
    Further, we argue that the DRA is not credible for any feasibility constraint beyond matroids and for any smaller deposits than suggested by previous literature even in single-item environments. 

    Finally, we modify the Ascending Deferred-Revelation Auction (ADRA) for single-item settings proposed by \citet{EFW22} for arbitrary bidder value distributions.
    We implement a deferred-revelation variant of the deferred-acceptance auction for matroids due to \citet{BdVSV11}, which requires the same bounded communication as the ADRA.
\end{abstract}


\section{Introduction}

    We study revenue optimization in single-dimensional environments with matroid feasibility constraints and an untrusted auctioneer.
    Traditional mechanism design literature stemming from the seminal work of \citet{Myerson81} has been studying revenue-optimal mechanisms where bidders are incentivized to not be strategic and truthfully report their values to the mechanism.
    However, most of the literature assumes that the auctioneer can commit to running a mechanism and does not deviate from the committed mechanism (by, for example, fabricating bids from a hypothetical bidder) even if it results in a larger revenue to the auctioneer.
    A recent line of work has been considering a strategic auctioneer who lacks the ability to guarantee sticking to its commitments \citep{DPS14, ILPT17, LMSZ19}.
    
    For one-shot auctions, \citet{AL20} coin the term \emph{credibility} to denote auctions where the auctioneer need not be trusted to honestly implement the mechanism.
    However, they argue that credibility introduces a trilemma --- without making additional assumptions, no revenue-optimal credible mechanism that is also truthful for bidders can terminate in a bounded number of rounds.
    A research agenda has been developing around side-stepping their trilemma for single-item environments by making additional computational assumptions on the auctioneer --- such as the existence of secure cryptography that cannot be broken by the auctioneer in polynomial time \citep{FW20, EFW22} and making very mild trust assumptions (much milder than having to fully trust the auctioneer) such as the existence of a trusted public ledger implemented via a blockchain \citep{CFK24}.
    We extend the results from these works to matroid environments.
    
    We look at the Deferred-Revelation Auction (DRA; \citealp{FW20}) and the Ascending Deferred-Revelation Auction (ADRA; \citealp{EFW22}) for matroid feasibility constraints.
    At a high level, the DRA and the ADRA implement the revenue-optimal DSIC sealed-bid auction \citep{Myerson81} and the ascending-price auction \citep{BdVSV11} for matroids respectively.
    However, in order to participate in either of the two auctions, bidders will have to cryptographically commit to their bids at the start of the auction and reveal them only after the auctioneer confirms receipt of the commitments and notifies the bidders of all the commitments that it has received.
    Additionally, each committed bid will also have to lock-in a collateral, which is slashed if a bidder behaves erratically.
    This prevents the auctioneer from, for example, fabricating bids based on the bids received from the bidders or not revealing some its own fake bids. 
    
    Similar to \citet{CFK24}, we implement the DRA and ADRA over a blockchain.
    If the auctioneer communicates with bidders through private channels, it can be strategic by fabricating bids for a select subset of bidders, or getting a few bidders to reveal their bids first (by notifying them of the commitments received thus far) before even soliciting bids from the rest.
    Bidders have no way to verify whether the auctioneer's private messages that they have received are consistent with those sent to others.  
    A public ledger, such as a blockchain, resolves this issue by ensuring that all communications between parties are recorded transparently for everyone to see.

    However, moving from single-item environments to matroids vastly expands the auctioneer's strategy space.
    For example, when selling $k$ identical items, the auctioneer can announce an auction selling $k+1$ items as long as it ensures one of its fabricated bids ends up winning an item.
    More generally, as long as the final allocation to the bidders is feasible, the auctioneer can strategically announce a false feasibility constraint.
    
    Alongside fancier feasibility constraints, we depart from prior work by considering bidders' values drawn from independent, but not necessarily identical distributions.    
    The DRA and ADRA are primarily designed with pseudonymous environments in mind, where bidders are not directly identifiable. However, the auctioneer may still be able to infer certain attributes based on the pseudonymous identities, such as the level of activity associated with an identity in an NFT community.
    The prior for a bid could be a function of its publicly observable attributes.
    However, this enables the auctioneer to strategically pick distributions for its fake bids so as to maximize its revenue.

    For matroid feasibility constraints, we prove that the same collateral prescribed by \citet{FW20} and \citet{CFK24} for the DRA when bidders' values are drawn from MHR and $\alpha$-strongly regular distributions respectively and the collateral prescribed by \citet{EFW22} for arbitrary value distributions for the ADRA are sufficient for credibility.
    Our results for the DRA on a public ledger are tight in the following aspects --- for MHR distributions, the DRA is not credible for any smaller collateral than recommended by \citet{FW20} and is not credible for any non-matroid feasibility constraint.
    Further, we argue that relaxing the communication model from public to private communication will require a larger collateral for matroid constraints beyond single-item settings.

    Our work is applicable to multi-unit auctions where the auctioneer can strategically disclose its total supply of units.
    For instance, partition matroids naturally arise when items must be allocated across distinct bidder populations--- such as regional markets, customer segments, or product lines--- each with its own allocation limit.
    The DRA and ADRA allow implementing credible mechanisms in such environments.

\subsection{Related Works}

Our work fits in the credibility framework introduced by \citet{AL20}, with additional cryptography assumptions.
\citet{FW20, EFW22} and \citet{CFK24} are the works closest to ours in prior literature --- they focus on iid single-item auctions while we consider non-iid matroid constraints.
More broadly, our work would fit into the literature on mechanisms with imperfect commitments.
\citet{DFLSV20} consider designing simple, credible and approximately optimal but not necessarily truthful auctions in multi-item environments.
\citet{LMSZ19} and \citet{skreta15}, on the other hand, consider repeated auctions, where the auctioneer can commit to its behaviour in the current iteration of the auction, but not in the future iterations.
Our work, on the other hand, focuses on designing truthful single-shot auctions.

Recent works in the transaction fee mechanism (TFM) design literature have been considering designing cryptographically-assisted truthful, credible, collusion-resistant, and off-chain influence proof mechanisms.
While \citet{Roughgarden24} initiates the study of TFMs to decide inclusion of transactions in a block, \citet{SCW23} and \citet{GTW24} consider TFMs armed with cryptography.
\citet{SCW23} assumes a secure multi-party computation where the auctioneer cannot conceal any bids once submitted.
\citet{GTW24} in contrast, use the DRA as a black-box --- they (implicitly) assume a sufficiently high collateral so that the auctioneer will never be incentivized to conceal any of its bid after committing to them.
However, both of these works assume the implementation of credible-cryptographic auctions as a black-box and do not discuss the exact implementation details.

A broad line of works considers ascending-price auctions for various feasibility constraints.
\citet{MS14} formally model the class of all ascending-price mechanisms by deferred-acceptance auctions.
Near-optimal (for surplus and revenue) deferred-acceptance auctions for various environments are known \citep{BdVSV11, BGGS22, CGS22, FGGS22}.
Our work does not focus on designing new deferred-acceptance auctions, but rather on reducing the communication complexity of the auction discussed in \citet{BdVSV11} through cryptographic commitments.   

\subsection{Paper Organization}

We begin by formally describing the DRA for matroid feasibility constraints in \Cref{sec:DRA}.
We spread the discussion of our extension to the DRA over \Cref{sec:WarmUp} and \Cref{sec:UpperBound}.
In \Cref{sec:WarmUp}, we revisit the single-item setting from \citet{FW20} and discuss a modified version of their proof that acts as a natural stepping-stone for matroid environments.
We then extend the ideas from the modified proof to give upper bounds on the sufficient collateral for the DRA to be credible when the value distributions are MHR in \Cref{sec:MatroidMHR} and when the distributions are $\alpha$-strongly regular in \Cref{sec:MatroidReg}.
We then discuss our lower bounds in \Cref{sec:LowerBounds} --- we argue that the DRA is not credible for (a) feasibility constraints beyond matroids, (b) any collateral smaller than the one prescribed by \citet{FW20} when the value distributions are MHR and (c) for a private communications, instead of using a public ledger.
Finally, we describe the ADRA for matroids and argue its credibility in \Cref{sec:ADRA}.

\Cref{tab:comparison} compares prior work with our results along three axes: communication channel, assumptions on users' value distributions, and feasibility constraints.

\begin{table}[h]
\centering
\begin{tabular}{l|lll}
& Channel & Distributional Assumption & Feasibility\\ \hline
\citet{FW20} & Private & MHR & Single-item \\
\citet{EFW22} & Private & General & Single-item \\
\citet{CFK24} & Public & MHR / $\alpha$-strongly regular & Single-item \\
\Cref{sec:UpperBound} & Public & MHR / $\alpha$-strongly regular & Matroid \\
\Cref{sec:ADRA} & Public & General & Matroid
\end{tabular}
\caption{Summary of prior work and our contributions.}
\label{tab:comparison}
\end{table}

\section{Preliminaries}

We study single-dimensional environments with quasi-linear bidders for matroid feasibility constraints.
We begin the preliminaries by reviewing matroids before proceeding to the primitives from the mechanism design literature.

\subsection{Matroid Environments}
A feasibility constraint $\Feasibility \subseteq 2^{\N}$ dictates the set of bidders that can be allocated simultaneously by the mechanism.
A matroid environment is an environment given by a matroid feasibility constraint.

\begin{definition}[Matroids] \label{def:Matroids}
    A matroid $M(E, \I)$ is given by a set of elements $E$ and a family of independent sets $\I \subseteq 2^E$ such that:
    \begin{enumerate}
        \item (Downward-closure) If $W \in \I$, then for all subsets $\hat{W}$ of $W$, $\hat{W} \in \I$.
        \item (Augmentation property) If $W$ and $\hat{W}$ are two independent sets such that $|W| < |\hat{W}|$, then, there exists an element $w \in \hat{W} \setminus W$ such that $W \cup \{w\} \in \I$.
    \end{enumerate}
    The rank of $M$ is the cardinality $|I|$ of the largest independent set $I \in \I$. 
\end{definition}

\Cref{thm:Greedy} is a generalization of the augmentation property.

\begin{lemma}[See \citealp{Oxley2006}, for example] \label{thm:Greedy}
    Let $M(E, \I)$ be a matroid. Let $W$ and $\hat{W}$ be two independent sets in $\I$ with $|W| < |\hat{W}|$.
    Then, there exists $\hat{D} \subseteq \hat{W} \setminus W$ such that $W \cup \hat{D} \in \I$ and $|W \cup \hat{D}| = |\hat{W}|$.
\end{lemma}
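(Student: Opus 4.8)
The plan is to prove this by iteratively invoking the augmentation property from \Cref{def:Matroids}. I would build the desired set $\hat{D}$ greedily, one element at a time, maintaining the invariant that the set constructed so far is independent and consists of $W$ together with elements drawn only from $\hat{W} \setminus W$.

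Concretely, I would set $W_0 = W$ and, for $j \geq 0$, as long as $|W_j| < |\hat{W}|$, apply the augmentation property to the pair of independent sets $W_j$ and $\hat{W}$. Since $|W_j| < |\hat{W}|$ holds whenever the process has not yet halted, the augmentation property guarantees an element $w_{j+1} \in \hat{W} \setminus W_j$ with $W_{j+1} := W_j \cup \{w_{j+1}\} \in \I$. Each step increases the cardinality by exactly one, so after exactly $k := |\hat{W}| - |W|$ steps the process halts with $|W_k| = |\hat{W}|$. Setting $\hat{D} = \{w_1, \ldots, w_k\}$ then yields $W \cup \hat{D} = W_k \in \I$ and $|W \cup \hat{D}| = |\hat{W}|$, as required.

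The only point needing care is verifying that every chosen element lies in $\hat{W} \setminus W$ (and not merely in the instantaneous difference $\hat{W} \setminus W_j$), so that $\hat{D} \subseteq \hat{W} \setminus W$ as the statement demands. This follows from the invariant that $W_j = W \cup \{w_1, \ldots, w_j\}$ with each $w_i \in \hat{W}$: the augmentation property produces $w_{j+1} \in \hat{W} \setminus W_j \subseteq \hat{W} \setminus W$, so no element of $W$ is ever selected and all selected elements are distinct. I would formalize this invariant as a short induction on $j$. I do not anticipate any genuine obstacle; the statement is a routine consequence of applying the augmentation property exactly $|\hat{W}| - |W|$ times, and induction on the cardinality gap is the cleanest way to present the argument.
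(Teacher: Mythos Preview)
Your proposal is correct. The paper does not supply its own proof of this lemma; it merely cites \citet{Oxley2006} as a reference. Your argument---iterated application of the augmentation property, packaged as an induction on the cardinality gap $|\hat{W}|-|W|$---is the standard textbook proof and has no gaps.
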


\subsection{Auctions and Incentive Compatibility}

\subsubsection{Mechanisms}

For a feasibility constraint $\Feasibility$ and a set of $n$ bidders, an auction (or a mechanism) is described by a pair $(X, p)$ that takes as input a vector of bids $(b_1, \dots, b_n)$ placed by the $n$ bidders, each belonging to some bid space $\BidSpace$, and maps it to a distribution over feasible allocations and a vector of payments charged to each bidder.
For an allocation rule $X$, let $x = (x_1, \dots, x_n)$ denote the allocation probabilities for bidders $1, \dots, n$ respectively, i.e, for bids $\vec{b}$ placed by the bidders, $x_i(\vec{b}) = Pr_{Y \sim X(\vec{b})}(i \in Y)$.
We will use the term allocation rule for both $X$ and $x$.

Each bidder $i$ has a private value $v_i$ that is unknown to the auctioneer and the other bidders drawn from a publicly known distribution $D_i$.
A strategy $s_i:\R_{\geq 0} \xrightarrow{} \BidSpace$ for bidder $i$ maps the bidder's value $v_i$ to its bid $b_i$ in the auction.
We assume that bidders are quasi-linear --- bidders place bids so as to maximize their \emph{utility} $v_i \, x_i - p_i$ from being allocated with probability $x_i$ and charged $p_i$ by the auction.

A mechanism is \emph{direct} if it asks bidder to report their values directly, i.e, $\BidSpace = \R_{\geq 0}$.
Otherwise, the mechanism is said to be \emph{indirect}.
For example, the first-price auction that collects bids, allocates the highest bidder and charges the highest bidder its bid is direct.
The ascending-price auction, where the price for the item gradually increases starting at zero until only one bidder is willing to pay the price, is indirect --- every time the price increases, bidders bid ``yes'' if they are willing to continue in the auction and ``no'' otherwise.

\subsubsection{Incentive Compatibility}

At a high level, a mechanism is incentive compatible if there exists some ``prescribed strategy'' for each bidder that they can follow to optimize their utility, irrespective of the values of the other bidders.

A direct mechanism is \emph{dominant-strategy incentive compatible} (DSIC) if each bidder optimizes their expected utility irrespective of the bids placed by the other bidders.
Formally, an allocation rule $x$ and a payment rule $p$ is DSIC if
$$v_i \, x_i(v_i, b_{-i}) - p_i(v_i, b_{-i}) \geq v_i \, x_i(b_i, b_{-i}) - p_i(b_i, b_{-i})$$
for all bidders $i$, bids $v_i, b_i \geq 0$ of bidder $i$ and bids $b_{-i}$ of bidders other than $i$.

Similarly, an indirect mechanism is \emph{ex-post incentive compatible} (EPIC) if for some prescribed strategies $s_1, \dots, s_n$ for the $n$ bidders, bidder $i$ maximizes its utility by bidding accordingly to $s_i(v_i)$ as long as the other bidders bid according to the strategies $s_{-i}$, but can choose to misrepresent their values to the mechanism.
Formally, a mechanism with an allocation rule $x$ and payment rule $p$ is EPIC if
$$v_i \, x_i(s_i(v_i), s_{-i}(b_{-i})) - p_i(s_i(v_i), s_{-i}(b_{-i})) \geq v_i \, x_i(s_i(b_i), s_{-i}(b_{-i})) - p_i(s_i(b_i), s_{-i}(b_{-i}))$$
for all bidders $i$, bids $s_i(v_i), s_i(b_i) \in \BidSpace$ of bidder $i$ and bids $s_{-i}(b_{-i})$ of bidders other than $i$.

For an indirect mechanism $(x, p)$ and EPIC strategies $s_1, \dots, s_n$, the revelation principle states that the \emph{direct-revelation mechanism} given by $\Tilde{x}(\vec{v}) = x(s_1(v_1), \dots, s_n(v_n))$ and $\Tilde{p}(\vec{v}) = p(s_1(v_1), \dots, s_n(v_n))$ is DSIC.

\begin{theorem}
    The direct-revelation mechanism $(\Tilde{x}, \Tilde{p})$ of an indirect EPIC mechanism $(x, p)$ is DSIC.
\end{theorem}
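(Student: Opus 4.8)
The plan is to prove the claim by a direct unfolding of definitions: the direct-revelation mechanism is built so that each report is internally passed through the prescribed strategy, so that every deviation available in the direct mechanism corresponds to a deviation already covered by the EPIC guarantee of the underlying indirect mechanism $(x, p)$.

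First I would fix an arbitrary bidder $i$, a true value $v_i \in \R_{\geq 0}$, an arbitrary misreport $b_i \in \R_{\geq 0}$, and an arbitrary profile of reports $b_{-i}$ by the remaining bidders. Unfolding the definition of $(\Tilde{x}, \Tilde{p})$ gives $\Tilde{x}_i(v_i, b_{-i}) = x_i(s_i(v_i), s_{-i}(b_{-i}))$ and $\Tilde{p}_i(v_i, b_{-i}) = p_i(s_i(v_i), s_{-i}(b_{-i}))$, together with the analogous identities for the report $b_i$. Substituting these four identities into the DSIC inequality to be established turns the left- and right-hand sides into exactly the two sides of the EPIC inequality for $(x, p)$, evaluated at the profile in which the other bidders play $s_{-i}(b_{-i})$ and bidder $i$ compares the prescribed bid $s_i(v_i)$ against $s_i(b_i)$.

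Then I would invoke the EPIC hypothesis directly. Since $s_i : \R_{\geq 0} \to \BidSpace$, both $s_i(v_i)$ and $s_i(b_i)$ lie in $\BidSpace$, and $s_{-i}(b_{-i})$ is a legitimate bid profile for the other players. The EPIC condition is assumed to hold for \emph{every} choice of $s_i(v_i), s_i(b_i) \in \BidSpace$ and every $s_{-i}(b_{-i})$, so it applies in particular to this profile, which yields the desired inequality. As $i$, $v_i$, $b_i$, and $b_{-i}$ were arbitrary, $(\Tilde{x}, \Tilde{p})$ is DSIC.

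The only point that requires care — and the reason the statement is not entirely vacuous — is matching the quantifiers across the two incentive notions. In the direct mechanism the deviation $b_i$ ranges over all of $\R_{\geq 0}$, whereas the EPIC guarantee of the indirect mechanism ranges over all bids in $\BidSpace$. The strategy $s_i$ embeds the former into the latter, so each direct-mechanism deviation $b_i$ induces an indirect-mechanism deviation $s_i(b_i) \in \BidSpace$ already controlled by EPIC; no deviation available in the direct mechanism can escape the guarantee. I do not anticipate any genuine obstacle, since the argument is a clean substitution rather than a substantive computation.
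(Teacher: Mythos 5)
Your proof is correct and follows essentially the same argument the paper intends (its proof of the revelation principle, sketched in a commented-out passage, is the same substitution: a report $b_i$ in the direct mechanism corresponds to playing $s_i(b_i)$ in the indirect mechanism, which the EPIC inequality already dominates). Your explicit attention to the quantifier matching between $\R_{\geq 0}$ and $\BidSpace$ is a fine touch but does not change the substance.
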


\subsubsection{Myerson's Payment Identity}
For an allocation rule $x$, \citet{Myerson81} argued that there exists a payment rule $p$ such that $(x, p)$ is DSIC if and only if $x$ is monotone, i.e, for all bidders $i$ and bids $b_{-i}$ of the other bidders, $x_i(b_i, b_{-i})$ is monotone non-decreasing in $b_i$.
Further, through the payment identity, he characterizes the payment rule $p$ for which $(x, p)$ is DSIC.
Throughout this paper, we focus exclusively on deterministic mechanisms and, accordingly, present a version of the payment identity tailored to deterministic auctions.

We approach the payment identity via \emph{critical bids}.
For a deterministic allocation rule $x$ and bids $b_{-i}$, the critical bid $\critical_i(x, b_{-i})$ for bidder $i$ is the minimum bid to be placed by bidder $i$ so that $x_i(b_i, b_{-i}) = 1$ (by monotonicity, note that $x_i(v_i, b_{-i}) = 1$ for all $v_i \geq \critical_i(x, b_{-i})$).
The payment identity suggests charging the critical bid $\critical_i(x, b_{-i})$ from each allocated bidder $i$ for a deterministic mechanism $x$.
We state the payment identity assuming \emph{individual rationality} and \emph{no positive transfers}, i.e, a bidder guarantees itself no payments by bidding zero.

\begin{theorem}[\citealp{Myerson81}; Payment identity] \label{thm:Myerson}
An auction with an allocation rule $x$ and a payment rule $p$ is DSIC if and only if
\begin{enumerate}
    \item $x$ is a monotone allocation rule, and,
    \item bidder $i$ is charged $p_i(b_i, b_{-i}) = \critical_i(x, b_{-i})$.
\end{enumerate}
Further, for a given allocation rule $x$, the payment rule $p$ that satisfies the above two conditions is unique.\footnote{The version of the payment identity stated in \Cref{thm:Myerson} is not the most popular one. The more standard version states that the bidder $i$ must be charged a payment $p_i(b_i, b_{-i}) = \int_0^{b_i} zx'_i(z, b_{-i})\,dz$ for the auction $(x, p)$ to be DSIC. However, we find the variant in terms of the critical bids to be more useful for the discussions in the paper. For a detailed treatment of the two versions, we direct the readers to (for example) \citet{Hartline13}.}
\end{theorem}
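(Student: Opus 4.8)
The plan is to prove the two directions of the biconditional separately, exploiting throughout that the mechanism is deterministic, so that $x_i(b_i, b_{-i}) \in \{0,1\}$, and that—once monotonicity is in hand—the critical bid cleanly splits the bid space into a \emph{losing region} $[0, \critical)$ and a \emph{winning region} $[\critical, \infty)$, where I abbreviate $\critical := \critical_i(x, b_{-i})$ for a fixed bidder $i$ and fixed other bids $b_{-i}$. The definition of $\critical$ as the minimum bid with $x_i = 1$ means bidding exactly $\critical$ wins.

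For the ``if'' direction, assume $x$ is monotone and that $p_i = \critical$ on the winning region while $p_i = 0$ on the losing region (the latter forced by no positive transfers). A bidder with true value $v_i$ who submits any winning bid obtains utility $v_i - \critical$, and any losing bid yields utility $0$. If $v_i \geq \critical$, truthful reporting wins and gives $v_i - \critical \geq 0$, dominating the losing alternative; if $v_i < \critical$, truthful reporting loses and gives $0$, dominating $v_i - \critical < 0$. Hence truthful reporting is optimal against every $b_{-i}$, so $(x,p)$ is DSIC. This direction is essentially mechanical once the deterministic structure is used.

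For the ``only if'' direction, I first establish monotonicity by the standard exchange argument: summing the DSIC inequality for a bidder of value $b_i'$ deviating to $b_i$ with the one for a bidder of value $b_i$ deviating to $b_i'$ yields $(b_i' - b_i)\bigl(x_i(b_i', b_{-i}) - x_i(b_i, b_{-i})\bigr) \geq 0$, which is monotonicity. Given monotonicity, $\critical$ is well-defined. Applying DSIC to two bids in the \emph{same} region forces $p_i$ to be constant there (for two winning bids the inequalities read $b_i - p_i(b_i) \geq b_i - p_i(b_i')$ and its symmetric counterpart, giving $p_i(b_i) = p_i(b_i')$; likewise on the losing region). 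No positive transfers fixes the losing-region payment at $0$. To fix the winning payment $p^{\mathrm{win}}$, I compare across the threshold: a true value $v_i$ just above $\critical$ must weakly prefer winning, giving $v_i - p^{\mathrm{win}} \geq 0$, while a true value just below $\critical$ must weakly prefer losing, giving $p^{\mathrm{win}} \geq v_i$; letting $v_i \to \critical$ from both sides squeezes $p^{\mathrm{win}} = \critical$. Since this leaves no freedom in $p$, it simultaneously establishes the claimed payment rule and its uniqueness.

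The main obstacle is this final squeezing step. It requires care at the boundary—that the critical bid is attained, so that bidding exactly $\critical$ wins and the winning region is closed, and that values may be taken arbitrarily close to $\critical$ from both sides—together with the bookkeeping that the individual-rationality and no-positive-transfers conventions pin the losing payment to exactly $0$ rather than merely bounding it. Everything else reduces to the routine two-bid DSIC manipulations above.
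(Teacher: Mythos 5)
The paper never proves this statement: it is imported as a classical result from \citet{Myerson81}, with a footnote deferring any detailed treatment to \citet{Hartline13}. So there is no in-paper argument to compare against; what you have written is the standard self-contained proof for deterministic single-parameter environments, and it is correct. Your exchange argument gives monotonicity; the two-bid DSIC manipulations give constancy of the payment on the winning region and on the losing region; individual rationality and no positive transfers pin the losing payment to $0$; and the squeeze across the threshold forces the winning payment to equal $\critical_i(x, b_{-i})$, which simultaneously delivers the ``only if'' direction and the uniqueness claim. Two boundary cases deserve one explicit sentence each to make the squeeze airtight: (i) if $\critical_i(x,b_{-i}) = 0$, there are no values below the threshold, so the from-below half of the squeeze is vacuous; but then the bid $0$ itself is a winning bid, and the paper's convention that bidding zero guarantees no payment, combined with your constancy-on-the-winning-region step, gives $p^{\mathrm{win}} = 0 = \critical_i(x,b_{-i})$ directly. (ii) If bidder $i$ wins for no bid at all, the winning region is empty, condition 2 is vacuous, and constancy on the losing region plus the bid-zero convention gives $p_i \equiv 0$. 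Neither is a real gap---your write-up handles the generic case correctly and explicitly flags the attainment of the critical bid and the role of the conventions---but since the squeeze is the one step you call the main obstacle, it is worth closing these degenerate cases rather than leaving them implicit.
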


\subsection{Revenue Maximization and Virtual Surplus}

\citet{Myerson81} discusses a recipe to optimize the expected revenue for a Bayesian auctioneer by a reduction to surplus optimization via virtual values.

\begin{definition}[Virtual Values]
For a continuous distribution with a cumulative distribution function $F$ and a density function $f$, the virtual value function $\vv(\cdot)$ is given by
$\vv(v) = v - \frac{1-F(v)}{f(v)}$.
\end{definition}

For an allocation $x = (x_1, \dots, x_n)$ for $n$ bidders with virtual values $\vv_1(v_1), \dots, \vv_n(v_n)$,
the virtual surplus of the allocation equals $\sum_{i = 1}^n \vv_i(v_i) \, x_i$.
\citet{Myerson81} shows that the expected revenue of a DSIC mechanism equals its expected virtual surplus.

\begin{theorem}[\citealp{Myerson81}] \label{thm:MyersonVV}  
Consider a DSIC mechanism (or an EPIC mechanism with a direct-revelation mechanism) $(x, p)$ where bidders' values are drawn from the product distribution $\Pi_{i = 1}^n D_i$.
The expected revenue from bidder $i$ conditioned on the bids $v_{-i}$ of bidders other than $i$ equals
$$\E_{v_i \sim D_i}[p_i(v_i, v_{-i})] = \E_{v_i \sim D_i}[\vv_i(v_i) \, x_i(v_i, v_{-i})].$$
Further, the total revenue from the mechanism $(x, p)$ equals
$$\E_{\vec{v} \sim \Pi_{i=1}^n D_i}[\sum_{i = 1}^n p_i(v_i, v_{-i})] = \E_{\vec{v} \sim \Pi_{i=1}^n D_i}[\sum_{i = 1}^n \vv_i(v_i) \, x_i(v_i, v_{-i})].$$
\end{theorem}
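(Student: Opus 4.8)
The plan is to prove the per-bidder identity first and then lift it to the total-revenue statement by linearity of expectation. I would fix a bidder $i$ and condition on the profile $v_{-i}$ of the other bidders, so that the critical bid $\critical_i(x, v_{-i})$ is a constant; write $\critical = \critical_i(x, v_{-i})$ for brevity. Since the mechanism is deterministic, monotonicity together with the critical-bid characterization of \Cref{thm:Myerson} gives the clean description $x_i(v_i, v_{-i}) = \mathbbm{1}[v_i \geq \critical]$ and $p_i(v_i, v_{-i}) = \critical \cdot \mathbbm{1}[v_i \geq \critical]$, which turns both expectations over $v_i$ into ordinary integrals over the allocation region $[\critical, \infty)$.

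Computing the left-hand side is then immediate, since the payment is the constant $\critical$ exactly when the bidder is allocated:
$$\E_{v_i \sim D_i}[p_i(v_i, v_{-i})] = \critical \cdot \Pr_{v_i \sim D_i}[v_i \geq \critical] = \critical\,(1 - F_i(\critical)).$$
For the right-hand side I would substitute the definition of the virtual value and integrate the density against it over the same region:
$$\E_{v_i \sim D_i}[\vv_i(v_i)\, x_i(v_i, v_{-i})] = \int_{\critical}^{\infty} \vv_i(v)\, f_i(v)\,dv = \int_{\critical}^{\infty} \big(v f_i(v) - (1 - F_i(v))\big)\,dv.$$
The key observation is that the integrand is an exact derivative, namely $\frac{d}{dv}\big[-v(1 - F_i(v))\big] = v f_i(v) - (1 - F_i(v))$, so the integral telescopes to $\critical(1 - F_i(\critical)) - \lim_{v \to \infty} v(1 - F_i(v))$, matching the left-hand side up to the boundary term.

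The main obstacle is precisely this boundary term: the two sides agree if and only if $\lim_{v \to \infty} v(1 - F_i(v)) = 0$, which holds exactly when $D_i$ has finite expectation. I would flag this as the standing regularity assumption on the value distributions (without it the expected revenue is itself ill-defined), rather than attempt to remove it. Once the per-bidder identity $\E_{v_i}[p_i(v_i, v_{-i})] = \E_{v_i}[\vv_i(v_i)\, x_i(v_i, v_{-i})]$ is established for every fixed $v_{-i}$, the total-revenue claim follows by taking the expectation over $v_{-i} \sim \prod_{j \neq i} D_j$ of both sides --- valid because $\critical_i$ is a measurable function of $v_{-i}$ --- then summing over $i$ and using linearity of expectation together with Fubini's theorem to interchange the finite sum with the outer expectation.
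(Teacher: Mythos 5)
The paper does not prove \Cref{thm:MyersonVV} at all: it is invoked as a classical result of \citet{Myerson81}, so there is no internal proof to compare against. Your argument is the standard derivation of the virtual-surplus identity specialized to deterministic mechanisms, and it is correct: the critical-bid characterization from \Cref{thm:Myerson} collapses both the allocation and the payment to functions of the event $\{v_i \geq \critical\}$, both sides evaluate to $\critical\,(1 - F_i(\critical))$ once the exact-derivative observation (your substitute for the usual integration by parts) is applied, and the lift to total revenue by iterated expectation and linearity is routine. One small inaccuracy worth fixing: the vanishing of the boundary term $\lim_{v \to \infty} v\,(1 - F_i(v)) = 0$ is \emph{implied} by finite expectation but is not equivalent to it --- for instance, a tail $1 - F(v) = 1/(v \log v)$ has $v\,(1-F(v)) = 1/\log v \to 0$ yet infinite mean --- so ``exactly when'' should read ``whenever.'' Since finite expectation is the natural standing assumption (without it the expected revenue on the left-hand side need not be well defined), this slip does not affect the validity of the proof.
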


When the virtual value function of a distribution is monotone non-decreasing in value, the allocation rule that pointwise maximizes virtual value is monotone and, therefore, revenue-optimal alongside a payment rule that satisfies the payment identity (\Cref{thm:Myerson}).  
Such distributions with a monotone virtual value function are said to be \emph{regular}.  

When the virtual value $\vv$ is not monotone non-decreasing, \citet{Myerson81} describes an ironing procedure to obtain the \emph{ironed virtual value} $\overline{\vv}$ (for a detailed treatment of the ironing procedure, see for example, \citealp{Hartline13}).
The ironed virtual value $\overline{\vv}(v)$ is monotone non-decreasing in $v$. Further, as long as two values $v$ and $w$ with the same ironed virtual values get allocated identically, i.e, $x_i(v, v_{-i}) = x_i(w, v_{-i})$ whenever $\overline{\vv}_i(v) = \overline{\vv}_i(w)$, the expected revenue collected from bidder $i$ equals its expected ironed virtual surplus.
Thus, the allocation rule that pointwise maximizes the ironed virtual surplus optimizes the expected revenue when bidders' values are drawn from distributions that are irregular.

For downward-closed feasibility constraints, the ironed-virtual-surplus-optimal allocation will not allocate bidders with a negative ironed virtual value.
Let $r_i$ be the supremum over all values for which the ironed virtual value of the distribution $D_i$ is negative.
Then $r_i$ is the \emph{monopoly reserve} of the distribution $D_i$.
In order to b allocated, a bidder needs to have a value at least $r_i$, and thus, bidder $i$'s critical bid (and thereby, its payment upon getting allocated) is at least $r_i$.

\subsubsection{$\alpha$-Strongly Regular and MHR Distributions}

While regularity imposes that the virtual value function of a distribution is monotone non-decreasing, stronger monotonicity conditions can be imposed on the virtual value $\vv(\cdot)$.
A distribution is said to $\alpha$-strongly regular if for all $v \geq \hat{v}$, $\vv(v) - \vv(\hat{v}) \geq \alpha \, (v - \hat{v})$.
Note that $0$-strong regularity is identical to the regularity condition.
A distribution is said to satisfy the monotone hazard rate (MHR) condition if it is $1$-strongly regular.

\subsection{Commitment Schemes}

We conclude our preliminaries by briefly discussing our necessary cryptography primitives.

A commitment scheme $\comm$ allows credibly committing to a message without having to share the message publicly at the time of commitment.
It takes as input a message $m$ and a pad $r$ to output a commitment $\comm(m, r)$.
We provide an informal sketch of the properties desired from such a commitment scheme.

We say that the commitment scheme is perfectly binding if for a message $m$ and a pad $r$, $\comm(m, r) \neq \comm(\hat{m}, \hat{r})$ unless $m = \hat{m}$ and $r = \hat{r}$.
$\comm$ is perfectly hiding if for randomly drawn pads $r, \hat{r}$, the distribution of $\comm(m, r)$ and the distribution $\comm(\hat{m}, \hat{r})$ are indistinguishable for all messages $m$ and $\hat{m}$.
A perfectly hiding and binding commitment is ideal for a commit/reveal scheme.
A person can commit to a message $m$ by publishing $c = \comm(m, r)$.
No information about $m$ can be inferred just by viewing $c$, until the author of $m$ reveals the message.
Since the scheme is binding, the author cannot reveal a different message $m'$ with the same commitment $c$.

A commitment scheme is non-malleable if, at a high-level, an adversary cannot generate a commitment to any function $f(m)$ given the commitment $\comm(m, r)$ (unless, of course, $f$ is the identity function).
For a more detailed discussion on malleability, we refer the reader to \citet{DDN91} and \citet{FF00}.

While perfectly hiding and perfectly binding commitment schemes are impossible to construct, computationally binding and hiding schemes are sufficient for the discussions in the paper.
However, we assume that the commitment scheme is perfectly binding and perfectly hiding since it makes our arguments cleaner.

\section{The Deferred-Revelation Auction Over a Public Ledger} \label{sec:DRA}

In this section, we formally state the Deferred-Revelation Auction for matroid feasibility constraints.

For a perfectly hiding, binding and non-malleable commitment scheme $\comm$, an auctioneer optimizing its revenue and quasi-linear bidders, we describe the \emph{Deferred-Revelation Auction (DRA)} implemented over a public ledger.
We model the public ledger as an abstract functionality such that anything written on the ledger is observable to all entities participating in the auction.

At a high level, for a matroid feasibility constraint $\Feasibility$, the DRA is executed over three phases.
In the initialization phase, the auctioneer writes in the public ledger, a smart contract implementing the DRA.
The auctioneer is neither constrained to honestly implement the feasibility constraint $\Feasibility$ (however, the set of real bidders that are allocated at the end of the auction should always be feasible irrespective of the bids placed by the bidders) nor is it required to report the value distributions $D_1, \dots, D_n$ of the $n$ bidders truthfully to the smart contract.
However, we restrict the auctioneer to report only matroid feasibility constraints to the smart contract.
Bidders also submit commitments to their bids during the initialization phase.
We assume that the public ledger knows an upper bound on the maximum monopoly reserve amongst the distributions $D_1, \dots, D_n$, independent of the distributions reported by the auctioneer.
The bidders submit a collateral alongside their commitments, which can be a function of the upper bound on the largest monopoly reserve and the number of commitments written on the ledger during the initialization phase.

The revelation phase begins once the auctioneer signals the end of the initialization phase.
Bidders are expected to reveal their commitments during the revelation phase.
Since we assume a perfectly binding commitment scheme, we assume that either bidders reveal their bids honestly or they do not reveal their bids at all.
Bidders will not be able to find a random pad corresponding to a bid different to the one they committed to that also generates the same commitment.
Bids that are not revealed by the end of the revelation phase will be slashed their entire collateral.

Finally, the allocation phase implements the revenue-optimal auction given the set of bids that have been revealed.
Bidders transfer a payment to the auctioneer.
While bidders can choose to abort during the allocation phase to avoid making a payment (and not receive the good too), such behaviour can be disincentivized too by slashing their collateral.
Thus, we will assume bidders transfer the necessary payment to the auctioneer if allocated (or do not participate at all and not even submit a commitment during the initialization phase).

\begin{definition}[Deferred-Revelation Auction (DRA) over a Public Ledger] \label{def:DRA}
For a matroid feasibility constraint $\Feasibility = M(E, \I)$, the DRA is implemented over the following phases:
\begin{enumerate}
    \item (Announcement phase)
        \begin{itemize}
            \item The auctioneer announces the commencement of the auction.
            \item The auctioneer learns the (regular) distributions $D_1, \dots, D_n$ of the values of the $n$ bidders interested in participating in the auction.
            Let $\vv_1, \dots, \vv_n$ be the virtual value functions of the $n$ distributions.
        \end{itemize}
    \item (Initialization phase)
    \begin{itemize}
        \item Bidders $1 \leq i \leq \hat{n}$ confirm participation in the auction by writing their identifier $i$, their bid $b_i$ and $c_i = \comm(i, b_i, r_i)$, their commitment to bid $b_i$ on the ledger for a randomly drawn pad $r_i$.
        Note that the commitments can belong to both the $n$ interested bidders, or can be fabricated bids submitted by the auctioneer. Let $\vec{b}$ be the set of fabricated bids committed by the auctioneer.
        \item The ledger learns the distributions of the committed bidders.\footnote{As we will see in later sections, it is sufficient if the ledger learns an upper bound on the monopoly reserves of the value distributions and not necessarily the entire distributions.} As a function of the distributions and the number of commitments written on the ledger, the ledger announces a collateral $f$ to be collected from all committed bids.
        \item All bidders with committed bids can either choose to abort, or submit a collateral $f$ as demanded by the ledger.
        \item For the $\hat{n}$ bids written on the ledger, the auctioneer declares a matroid feasibility constraint $\hat{\Feasibility}$ and distributions $\hat{D}_1, \dots, \hat{D}_{\hat{n}}$ with virtual values $\hat{\vv}_1, \dots, \hat{\vv}_n$ and commits to implementing the revenue-optimal mechanism by writing a smart contract on the ledger.
        \item The auctioneer signals the end of the initialization phase, triggering the revelation phase.
    \end{itemize}
    \item (Revelation phase)
    \begin{itemize}
        \item Each bidder either writes their identifier $i$, bid $b_i$ and the random pad $r_i$ such that $c_i = \comm(i, b_i, r_i)$ on the ledger or remains silent and conceals its bid.
        \item The auctioneer triggers the end of the revelation phase.
        The collateral of all commitments without corresponding revealed bids in the revelation phase are burnt.
        \item The auctioneer calls the smart-contract that implements the revenue-optimal mechanism on the set of revealed bids.
    \end{itemize}
    \item (Allocation and payment phase)
    \begin{itemize}
        \item The smart contract chooses a subset $S \in \hat \Feasibility$ of bidders who revealed their bids with the largest virtual surplus, i.e, bidder $i$ is allocated if $i \in S$ that satisfies
        $\arg \max_{S \in \hat{\mathcal{\Feasibility}}} \{\sum_{i \in S} \hat{\vv}_i(b_i)\}$ (we assume ties are broken lexicographically).\footnote{Even though lexicographic tie-breaking is not necessary and any deterministic tie-breaking rule would suffice, it cleans up our arguments that follow.}
        \item Each allocated bidder is charged their critical bid, $\inf \{\hat{b}_i : S = \arg \max_{S \in \Feasibility} \sum_{j \in S, j \neq i} \hat{\vv}_j(b_j) + \hat{\vv}_i(\hat{b}_i)\}$, where the maximum is taken over all subsets $S$ of bidders with revealed bids.
        \item The collateral of allocated bidders that default on the payment demanded by the mechanism are burnt.
        Further, such bidders are removed from the set of allocated bidders.\footnote{We assume the smart contract ensures a fair exchange, and that the auctioneer does not default on allocating the bidders after receiving payments.}
    \end{itemize}
\end{enumerate}
\end{definition}

The auctioneer can be strategic in the DRA in exactly the following ways.
In the initialization phase, the auctioneer can fabricate any number of bids.
Further, the auctioneer can misreport its beliefs on the value distributions $D_1, \dots, D_n$.
Finally, the auctioneer can also write a smart contract optimizing revenue for some matroid feasibility constraint $\hat{\Feasibility}$ other than $\Feasibility$.
However, remember that the auctioneer's actions in the revelation phase must finally ensure that only a feasible set of real bidders are allocated by the mechanism irrespective of the bids placed by the bidders.
In the revelation phase, the auctioneer sees all the bids revealed by the bidders.
The auctioneer can choose to conceal some of its fabricated bids as a function of the bids revealed by bidders.
The mechanism does not require any input from the auctioneer or the bidders in the allocation and payment phase, and thus, the auctioneer cannot tamper with the mechanism once it triggers the end of the revelation phase.

We say that the DRA is \emph{credible} if, for the collateral charged by the mechanism, the auctioneer optimizes its expected revenue by (a) not fabricating any bids, (b) reporting the distributions $D_1, \dots, D_n$ truthfully to the mechanism and (c) reporting the feasibility constraint $\Feasibility$ truthfully to the mechanism.

Annotating the allocation and payment rules with the auctioneer's strategy will be quite useful.
Suppose $\strategy$ is an auctioneer's strategy that fabricates a set of bids $\vec{b}$.
We define $x(b_1, \dots, b_n, \vec{b} | \strategy)$ and $p(b_1, \dots, b_n, \vec{b} | \strategy)$ to be the allocation and payment rules induced by $\strategy$.
In particular, if the strategy $\strategy$ consists of revealing all of the fake bids $\vec{b}$, we denote the induced allocation and payment rules by $x(b_1, \dots, b_n, \vec{b})$ and $p(b_1, \dots, b_n, \vec{b})$.
For any strategy $\strategy$, the critical bid for bidder $i$ when the auctioneer reveals all its fake bids is given by $\critical_i$.
Finally, when the auctioneer truthfully informs the ledger of the distributions $D_1, \dots, D_n$ (and any arbitrary distributions for its fake bids), we denote the resulting allocation and payment rules by $x^{\OPT}(b_1, \dots, b_n, \vec{b} | \strategy)$ and $p^{\OPT}(b_1, \dots, b_n, \vec{b} | \strategy)$ respectively.

It is not hard to see that there exists an optimal strategy for the auctioneer that is deterministic.
Consider any strategy for the auctioneer that randomizes over a support of deterministic strategies.
Then, the expected revenue of the randomized strategy (expectation taken over both, the bids placed by bidders and the randomization over deterministic strategies) is at most that of the strategy with the highest expected revenue in the support (expectation over the bids submitted by the bidders).
Thus, for the rest of our discussions, we will assume that the auctioneer's strategy is deterministic.

Indeed, if the auctioneer is honest, we are implementing the revenue-optimal mechanism with DSIC payments.
Thus, bidders optimize their utilities by bidding truthfully.

\begin{theorem}
    For the DRA with distributions $D_1, \dots, D_n$ of bidder values and an arbitrary feasibility constraint $\Feasibility$, when the auctioneer is honest, bidders optimize their utilities by bidding truthfully, and the resulting equilibrium is revenue-optimal.
\end{theorem}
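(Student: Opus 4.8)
The plan is to observe that, when the auctioneer is honest, the allocation-and-payment phase of the DRA is exactly the direct-revelation mechanism $(\tilde x, \tilde p)$ that, on reported bids $\vec b = (b_1, \dots, b_n)$, serves the feasible set $S \in \Feasibility$ of largest virtual surplus $\sum_{i \in S} \vv_i(b_i)$ (breaking ties lexicographically) and charges each served bidder its critical bid $\critical_i$. I would prove the two assertions separately. For the incentive claim, it suffices by Myerson's payment identity (\Cref{thm:Myerson}) to verify that $\tilde x$ is monotone, since $\tilde p$ charges the critical bid by construction; monotonicity together with the critical-bid payment then yields that $(\tilde x, \tilde p)$ is DSIC, so truthful reporting is a dominant strategy and is utility-optimal for every bidder regardless of $b_{-i}$. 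For the revenue claim, I would feed the truthful allocation into \Cref{thm:MyersonVV}.

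The heart of the argument is the monotonicity of $\tilde x$, which I would establish for an \emph{arbitrary} $\Feasibility$ using only that the $D_i$ are regular, so that each $\vv_i$ is monotone non-decreasing (as specified in the announcement phase of \Cref{def:DRA}). Fix the other bidders' reports $b_{-i}$ and decompose the optimization according to whether $i$ is served:
\[
  W_i^+(b_i) = \max_{S \in \Feasibility,\, i \in S} \sum_{j \in S} \vv_j(b_j), \qquad
  W_i^- = \max_{S \in \Feasibility,\, i \notin S} \sum_{j \in S} \vv_j(b_j),
\]
so that $i$ is served whenever $W_i^+(b_i) > W_i^-$, with equalities resolved by the fixed lexicographic rule. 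The key observation is that $\vv_i(b_i)$ appears additively in every set counted by $W_i^+$ and in no set counted by $W_i^-$; hence raising the report from $b_i$ to $b_i' > b_i$ leaves $W_i^-$ unchanged and increases $W_i^+$ by exactly $\vv_i(b_i') - \vv_i(b_i) \geq 0$. Therefore $W_i^+(b_i) - W_i^-$ is non-decreasing in $b_i$, so once $i$ is served it remains served at every higher report, which is precisely monotonicity. This decomposition never invokes the matroid axioms, which is why the honest-auctioneer guarantee holds for arbitrary feasibility constraints; the matroid structure will only be needed for the credibility analysis in later sections.

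With monotonicity established, \Cref{thm:Myerson} gives that $(\tilde x, \tilde p)$ is DSIC, proving the first half of the statement. For revenue-optimality, \Cref{thm:MyersonVV} shows that under truthful bidding the expected revenue equals the expected virtual surplus $\E_{\vec v}[\sum_i \vv_i(v_i)\, \tilde x_i(\vec v)]$, and this identity holds for every feasible DSIC mechanism on $\Feasibility$. Since $\tilde x$ maximizes $\sum_{i \in S} \vv_i(v_i)$ over $S \in \Feasibility$ for each realization $\vec v$, it pointwise maximizes virtual surplus subject to feasibility and hence maximizes expected virtual surplus over all feasible allocation rules; being moreover monotone, it is DSIC-implementable and therefore attains the optimal revenue among truthful mechanisms for $\Feasibility$.

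The main conceptual step---showing that virtual-surplus maximization stays monotone without any structural assumption on $\Feasibility$---reduces, as above, to the simple additive decomposition of $W_i^+$ and $W_i^-$, so I do not expect a genuine obstacle there. The only delicate points, none of which obstruct the outline, are the following. First, the tie-breaking at the boundary $W_i^+(b_i) = W_i^-$: since $W_i^+(b_i) - W_i^-$ is non-decreasing, the served region is an up-set in $b_i$, and the fixed lexicographic rule makes the transition well-defined, so no non-monotone ``flip'' can occur even when $\vv_i$ is locally flat. Second, for non-downward-closed $\Feasibility$ the argmax may be forced to serve a bidder of negative virtual value; this is harmless because the revenue benchmark and the identity of \Cref{thm:MyersonVV} are taken over exactly the same feasibility class, so the optimality comparison remains valid.
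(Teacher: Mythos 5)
Your proposal is correct and follows essentially the same route as the paper, which states this theorem without a separate proof, treating it as an immediate consequence of Myerson's results (\Cref{thm:Myerson} and \Cref{thm:MyersonVV}): an honest auctioneer runs exactly the virtual-surplus-maximizing allocation with critical-bid payments. Your additive decomposition of $W_i^+(b_i) - W_i^-$ is the standard verification of monotonicity that the paper leaves implicit, and your observations that it needs only regularity (not matroid structure) and that tie-breaking cannot break monotonicity are accurate.
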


Conditioned on bidders placing truthful bids, we will prove that the auctioneer will also optimize its revenue by not fabricating any bids for some sufficiently large penalty $f$ for concealing bids.
The above would establish that truthful behaviour by all agents is a sub-game perfect Nash equilibrium and thus, the DRA is credible.

Since we fix bidders to bid truthfully, we will pretend that the penalty paid by the auctioneer for concealing bids is not burnt, but rather, is transferred to allocated bidders as a discount on their purchase.
Analyzing incentives from the auctioneer's viewpoint is identical to the DRA (the auctioneer's utility is the same as long as the auctioneer does not get the slashed collateral), but is more convenient since auctioneer's revenue (including the fines) equals the sum of payments made by the bidders.

\subsection{A Brief Aside on the Private Communication Model}

    While understanding the DRA implemented over a public ledger is the primary focus of the paper, we prove a few lower bounds when the auctioneer communicates with the bidders via private communication channels.
    We provide intuition on the private communication model necessary to parse our results, but encourage the reader to look at \citet{FW20} for a formal model.
    In the private communication model, the bidders learn about competing bids only when the auctioneer informs them about the other bids.

    The auctioneer's strategy is richer in the private communication model in the following ways.
    The auctioneer can commit to a different sets of bid to different bidders.
    The bidders cannot learn about the contradicting set of bids since the only way to communicate is via the auctioneer.
    Further, the auctioneer triggers the initialization, revelation and allocation phases separately to each bidder since there is no common announcement platform that can simultaneously inform all the bidders.
    This allows the auctioneer to trigger the revelation phase for some bidders, learn their bids and adapt its strategy accordingly before triggering the initialization/revelation phase for the others. 
    
    We assume that the auctioneer can provably burn the collateral of concealed bids in the private communication model.
    The above assumption is a deviation from the model described by \citet{FW20}, who assume that the collateral is split amongst all the allocated bidders.
    While a valid solution, the collateral required to make the DRA credible without a proof-of-burn could be much larger than the variant with proof-of-burn.    
    For example, if the auctioneer commits to a bid $b_{i^*}$ using the same identifier $i^*$ to more than one bidder and conceals it from everyone, the auctioneer can forward the proof-of-burn for the collateral $f$ submitted by the (fake) identity $i^*$ to all bidders.
    However, with no proof-of-burn, the only way for the auctioneer to prove that the collateral $f$ was disposed off is by transferring at least $f$ to each allocated bidder.
    We discuss the role of proof-of-burn in more detail in \Cref{sec:LowerPrivate}.


\section{Warm Up: Single Item for MHR Distributions} \label{sec:WarmUp}

\citet{FW20} and \citet{CFK24} show that a collateral equal to the largest monopoly reserve is sufficient for the DRA to be credible in the single-item environment when all value distributions are MHR.
We provide a reinterpretation of their result by considering a much simpler single-agent problem, which extends to matroid feasibility constraints in a clean way.

\begin{theorem}[\citealp{FW20, CFK24}] \label{thm:BabyFW}
    Suppose the value distributions $D_1, \dots, D_n$ are all MHR in a  single-item environment. Then, a collateral equal to the largest monopoly reserve amongst the value distributions is sufficient for the DRA implemented over a public ledger to be credible.
\end{theorem}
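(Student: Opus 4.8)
My plan is to first dispose of every deviation that does \emph{not} fabricate bids, and then reduce the fabricated-bid deviation to a one-dimensional problem about a single bidder. Since we have fixed the bidders to bid truthfully, and the auctioneer is restricted to reporting a (rank-one) matroid together with regular distributions, any strategy that fabricates no bids induces a deterministic, monotone, critical-bid mechanism on the real bidders. By Myerson's characterization (\Cref{thm:MyersonVV}) the expected revenue of \emph{any} such mechanism equals its expected virtual surplus under the \emph{true} distributions, which is maximized by the honest allocation; hence misreporting the distributions or the feasibility constraint, on its own, can never beat honest behaviour. In the single-item (rank-one) case, reporting a higher rank would force an infeasible real allocation unless the surplus winning slots are occupied by fabricated bids, so the whole question collapses to bounding the auctioneer's gain from fabricated bids.

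Next I would isolate the effect of the fabricated bids on one target bidder. Fix the truthful bids of all bidders other than some bidder $i$; together with $i$'s monopoly reserve $r_i$ they determine a competition threshold $t \geq r_i$ such that, honestly, bidder $i$ wins and pays exactly $t$ precisely when $v_i \geq t$. Because only the top bid matters in a single-item auction, the entire influence of the auctioneer's committed fabricated bids (and of any reported virtual-value transform) on bidder $i$ is to present an effective price threshold $s \geq t$: if $v_i \geq s$ then bidder $i$ still wins but now pays $s$, a gain of $s - t$; if $t \leq v_i < s$ then the fabricated bid outranks bidder $i$, so the auctioneer must either reveal it and forfeit the sale (revenue $0$) or conceal it and pay the collateral $f$ while still collecting $t$, i.e.\ it collects $\max\{0,\, t - f\}$. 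The realised loss in this middle region is therefore $\min\{t, f\}$.

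I would then write the expected profit from targeting bidder $i$ with threshold $s$ as
$$g(s) = (s - t)\bigl(1 - F_i(s)\bigr) - \min\{t, f\}\bigl(F_i(s) - F_i(t)\bigr),$$
note that $g(t) = 0$, and show $g'(s) \leq 0$ for every $s \geq t$. A direct differentiation reduces the sign condition to $\vv_i(s) \geq \max\{0,\, t - f\}$. This is exactly where the MHR hypothesis enters: $1$-strong regularity together with $\vv_i(r_i) = 0$ gives $\vv_i(s) \geq s - r_i \geq t - r_i$, and since the collateral equals the largest monopoly reserve we have $f \geq r_i$, whence $t - r_i \geq \max\{0,\, t - f\}$. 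Thus $g'(s) \leq 0$ and $g(s) \leq g(t) = 0$, so no threshold is profitable. Summing this marginal argument over all bidders shows that a collateral $f = \max_i r_i$, the largest monopoly reserve, neutralises every fabricated-bid deviation, which combined with the first step yields credibility.

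The step I expect to be the main obstacle is making the reduction to a single threshold $s$ fully rigorous when the auctioneer commits \emph{many} fabricated bids and conceals an adaptively chosen subset after seeing the revealed bids: a fine grid of committed bids lets it tailor the price to each realisation of $v_i$, but every fabricated bid ranked above the eventual winner must be concealed at cost $f$, and this per-fabrication cost must be tracked carefully so that the continuum of such strategies is still dominated by the single-threshold bound above. The second delicate point is the sign analysis of $g'$ in the regime $t > f$, where the $\min\{t, f\}$ term is active; this is precisely the regime that forces MHR rather than mere regularity, and it is also what pins down the largest monopoly reserve --- rather than any smaller quantity --- as the correct collateral.
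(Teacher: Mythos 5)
Your single-threshold calculus is sound --- the computation $g'(s)\le 0 \iff \vv_i(s)\ge \max\{0,\,t-f\}$, combined with $\vv_i(s)\ge s-r_i$ (\Cref{thm:MHRLightTail}) and $f\ge r_i$, is a correct integrated form of the paper's key MHR step. But the reduction of arbitrary deviations to this one-dimensional problem has two genuine holes. First, the multi-bid case you flag as ``the main obstacle'' is not merely a bookkeeping issue: a ``ladder'' of committed fakes $b_1>b_2>\dots>b_k>t$ with adaptive concealment is \emph{not} dominated by the single-threshold strategy at $s=b_1$. In the region $b_2\le v_i<b_1$ the auctioneer conceals one bid, pays $f$ once, and still collects $b_2>t$, so its loss there is $\min\{t,\,t-b_2+f\}$, which can be strictly smaller than your $\min\{t,f\}$; hence $g(s)\le 0$ alone does not bound these strategies. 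The claim can be repaired inside your framework (your derivative computation shows the ladder's profit is non-increasing in $b_1$ on $[b_2,\infty)$, and merging $b_1$ into $b_2$ only adds penalties, so by induction the ladder is dominated by the single threshold at the \emph{bottom} rung $b_k$), but this argument is absent. The paper sidesteps it entirely with a per-realization bound: whenever bidder $i$ wins below its full-revelation critical bid, at least one penalty $f\ge r_i$ has been paid, so the net payment is at most $v_i-r_i\le\vv_i(v_i)$ (\Cref{thm:BabyMoreThanCritical}), while above the critical bid Myerson's identity (\Cref{thm:MyersonVV}) converts payments into virtual surplus (\Cref{thm:BabyBeatCritical}); this covers every adaptive concealment policy at once.

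The more fundamental gap is your benchmark. You compare the strategic revenue extracted from bidder $i$ against the honest revenue $t\cdot\Pr[v_i\ge t]$ from bidder $i$, and your assertion that every deviation presents an effective threshold $s\ge t$ is true only if the auctioneer reports the real bidders' distributions honestly. The DRA allows it to misreport them, and to do so \emph{jointly} with fabricating bids. Concretely, with two iid exponential bidders and $v_2=5$ fixed, declaring a transform with $\hat{\vv}_2(5)\approx 0$ drops bidder $1$'s threshold from $5$ to roughly $1$, so the per-bidder-$1$ strategic revenue $\approx e^{-1}$ exceeds the honest $5e^{-5}$; the compensating loss appears in bidder $2$'s accounting, which a bidder-by-bidder comparison against honest revenue cannot see. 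Your first step handles misreporting only in isolation (no fakes), and your second step assumes $s\ge t$, so joint deviations fall through the two-case split. The paper avoids this by never using per-bidder honest revenue as the benchmark: it bounds each bidder's strategic payment by $\vv_i(v_i)\,x_i(\cdot\mid\strategy)$ with the \emph{true} virtual value, regardless of what distributions and feasibility constraint the auctioneer declares, and only at the end compares the total true virtual surplus of the induced feasible allocation with that of $x^{\OPT}$. Any correct completion of your argument will need this change of benchmark (or an explicit proof that misreporting real bidders' distributions is never part of an optimal joint deviation, which is not obvious).
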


To prove \Cref{thm:BabyFW}, we fix the bids $v_{-i}$ of all other bidders and the fabricated bids $\vec{b}$ inserted by the auctioneer, which will induce a single-agent allocation and payment rule $x_i(\cdot, v_{-i}, \vec{b} | \strategy)$ and $p_i(\cdot, v_{-i}, \vec{b} | \strategy)$.
We will prove that the auctioneer optimizes its expected revenue by not concealing any of the fake bids.
Indeed, if the auctioneer always reveals all its fabricated bids, the auctioneer essentially implements a different DSIC mechanism.
The revenue of the other mechanism is only dominated by the optimal revenue.

We proceed with the proof by considering two cases --- when bidder $i$'s value $v_i$ is at least its critical bid $\critical_i$ and when $v_i < \critical_i$.
When $v_i \geq \critical_i$, concealing bids will not ``un-allocate'' bidder $i$ and will only result in a penalty to be paid by the auctioneer.
In \Cref{thm:BabyMoreThanCritical}, we will prove that concealing bids will not increase the auctioneer's expected revenue when $v_i < \critical_i$ either.

\begin{lemma} \label{thm:BabyBeatCritical}
    For a single-item environment with regular value distributions $D_1, \dots, D_n$, suppose the auctioneer plays a strategy $\strategy$ that involves inserting a set of fake bids $\vec{b}$.
    For a set of bids $v_{-i}$ of bidders apart from $i$,
    $$\E_{v_i \sim D_i}[p_i(v_i, v_{-i}, \vec{b} | \strategy) \times \mathbbm{1}(v_i \geq \critical_i)] \leq \E_{v_i \sim D_i}[\vv_i(v_i) \, x_i(v_i , v_{-i}, \vec{b} | \strategy) \times \mathbbm{1}(v_i \geq \critical_i)].$$
\end{lemma}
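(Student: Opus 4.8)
The plan is to benchmark the strategy $\strategy$ against the ``reveal-everything'' strategy, for which the induced mechanism is honest and hence amenable to Myerson's machinery. Fix $v_{-i}$ and $\vec{b}$ throughout. When the auctioneer reveals all of its fabricated bids, the committed smart contract is simply the honest single-item mechanism for the reported distributions run on the enlarged field $(v_{-i}, \vec{b})$ together with $i$; being an allocate-to-highest-(reported)-virtual-value rule with critical-bid payments, it is a deterministic DSIC mechanism in bidder $i$'s own bid (its allocation is monotone since the reported virtual value is non-decreasing in the bid). By definition $\critical_i$ is its critical bid, so under full revelation $x_i(v_i, v_{-i}, \vec{b}) = \mathbbm{1}(v_i \geq \critical_i)$ and $p_i(v_i, v_{-i}, \vec{b}) = \critical_i \, \mathbbm{1}(v_i \geq \critical_i)$. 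Applying \Cref{thm:MyersonVV} to this mechanism --- crucially, that identity relates payments to the \emph{true} virtual value $\vv_i$ for \emph{any} DSIC mechanism, not merely the revenue-optimal one --- yields the posted-price identity
\begin{equation*}
\critical_i \, \Pr_{v_i \sim D_i}[v_i \geq \critical_i] = \E_{v_i \sim D_i}[\vv_i(v_i) \, \mathbbm{1}(v_i \geq \critical_i)].
\end{equation*}

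Next I would establish two structural facts about $\strategy$ in the regime $v_i \geq \critical_i$. Because the auctioneer may only conceal (a subset of) its own fabricated bids and can never suppress the truthfully revealed real bids, the competition bidder $i$ faces under $\strategy$ is always a subset of the competition it faces under full revelation. This gives (a) $x_i(v_i, v_{-i}, \vec{b} \mid \strategy) = 1$ whenever $v_i \geq \critical_i$: since $i$ wins against the full field, it certainly wins against any subfield. For the payment, the contract charges $i$ the critical bid computed against whatever set is actually revealed; since the critical bid is monotone non-decreasing in the set of competitors (it is the reported-virtual-value preimage of the largest competing reported virtual value), deleting fabricated competitors can only lower it, giving (b) $p_i(v_i, v_{-i}, \vec{b} \mid \strategy) \leq \critical_i$ whenever $v_i \geq \critical_i$. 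I expect (b) to be the delicate step: the concealment in $\strategy$ is adaptive, so the realized revealed set may itself depend on $v_i$; the bound must therefore be argued pointwise in $v_i$, and one must check that the contract's critical-bid computation is taken against the realized revealed set rather than the hypothetical full field.

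Finally I would assemble the pieces. Bounding the integrand by (b) and then invoking the posted-price identity and (a),
\begin{align*}
\E_{v_i \sim D_i}\!\left[p_i(v_i, v_{-i}, \vec{b} \mid \strategy)\, \mathbbm{1}(v_i \geq \critical_i)\right]
&\leq \critical_i \, \Pr_{v_i \sim D_i}[v_i \geq \critical_i] \\
&= \E_{v_i \sim D_i}[\vv_i(v_i)\, \mathbbm{1}(v_i \geq \critical_i)] \\
&= \E_{v_i \sim D_i}\!\left[\vv_i(v_i)\, x_i(v_i, v_{-i}, \vec{b} \mid \strategy)\, \mathbbm{1}(v_i \geq \critical_i)\right],
\end{align*}
where the last equality substitutes $x_i(\,\cdot \mid \strategy) = 1$ on the event $\{v_i \geq \critical_i\}$. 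This is exactly the claimed inequality. The conceptual takeaway is that in the high-value regime concealment keeps $i$ allocated but can only shave its price, so revealing everything --- which by \Cref{thm:MyersonVV} extracts precisely the true virtual surplus --- weakly dominates any concealment.
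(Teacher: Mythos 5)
Your proposal is correct and follows essentially the same route as the paper's proof: benchmark $\strategy$ against full revelation, observe that on the event $\{v_i \geq \critical_i\}$ concealment keeps bidder $i$ allocated while only lowering its critical-bid payment, apply \Cref{thm:MyersonVV} to the full-revelation (posted-price) mechanism, and substitute $x_i(\cdot \mid \strategy) = 1$ on that event. Your write-up merely makes explicit a few details the paper leaves implicit (the posted-price identity and the pointwise handling of adaptive concealment), but the decomposition and key facts are identical.
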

\begin{proof}
    When $v_i \geq \critical_i$, bidder $i$ wins the auction and is allocated the item.
    Concealing bids will only decrease bidder $i$'s critical bid, and hence, the payment made by $i$, not to mention the penalty to be paid by the auctioneer.
    Thus, the auctioneer's optimal strategy (optimal over all strategies that fabricate the bids $\vec{b}$) is to reveal all of its fabricated bids, which induces the allocation rule $x_i(\cdot, v_{-i}, \vec{b})$ with corresponding DSIC payments $p_i(\cdot, v_{-i}, \vec{b})$. Thus,
    \begin{equation}
        \notag
        \begin{split}
            \E_{v_i \sim D_i}[p_i(v_i, v_{-i}, \vec{b} | \strategy) \times \mathbbm{1}(v_i \geq \critical_i)] &\leq \E_{v_i \sim D_i}[p_i(v_i, v_{-i}, \vec{b}) \times \mathbbm{1}(v_i \geq \critical_i)] \\
            &= \E_{v_i \sim D_i}[\vv_i(v_i) \, x_i(v_i , v_{-i}, \vec{b}) \times \mathbbm{1}(v_i \geq \critical_i)].
        \end{split}
    \end{equation}
The equality follows from \Cref{thm:MyersonVV}.
When $v_i \geq \critical_i$, $x(v_i, v_{-i}, \vec{b}) = x(v_i, v_{-i}, \vec{b} | \strategy) = 1$, and therefore,
    \begin{equation}
        \notag
        \begin{split}
            \E_{v_i \sim D_i}[p_i(v_i, v_{-i}, \vec{b} | \strategy) \times \mathbbm{1}(v_i \geq \critical_i)] &\leq \E_{v_i \sim D_i}[\vv_i(v_i) \, x_i(v_i , v_{-i}, \vec{b} | \strategy) \times \mathbbm{1}(v_i \geq \critical_i)]. \qedhere
        \end{split}
    \end{equation}

\end{proof}

\begin{lemma}  \label{thm:BabyMoreThanCritical}
    In a single-item environment with MHR value distributions $D_1, \dots, D_n$, let $r_i$ be the monopoly reserve of distribution $D_i$.
    Consider any strategy $\strategy$ of the auctioneer that inserts a set of fake bids $\vec{b}$.
    Then, for a collateral $f = \max_i \{r_i\}$,
    $$\E_{v_i \sim D_i}[p_i(v_i , v_{-i}, \vec{b} | \strategy) \times \mathbbm{1}(v_i < \critical_i)] \leq \E_{v_i \sim D_i}[\vv_i(v_i) \, x_i(v_i , v_{-i}, \vec{b} | \strategy) \times \mathbbm{1}(v_i < \critical_i)].$$
\end{lemma}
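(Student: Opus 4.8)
The plan is to work in the single-agent picture obtained by fixing $v_{-i}$ and the fabricated bids $\vec{b}$, and to argue pointwise in $v_i$ wherever possible. On the event $\{v_i < \critical_i\}$, if bidder $i$ is not allocated under $\strategy$, then individual rationality and no positive transfers give $p_i = 0$ while $x_i = 0$, so both integrands vanish. It therefore suffices to control the sub-event where bidder $i$ is allocated \emph{despite} $v_i < \critical_i$, and on the remainder the claim is an equality $0 = 0$.

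First I would observe that allocating bidder $i$ when $v_i < \critical_i$ forces the auctioneer to conceal at least one fabricated bid. Indeed, by the definition of $\critical_i$ as the critical bid under full revelation, revealing all of $\vec{b}$ leaves $i$ unallocated for every $v_i < \critical_i$; since the real bids $v_{-i}$ and bidder $i$'s own bid are unchanged, the only way to turn $i$'s loss into a win is to suppress a fabricated competitor. Hence the auctioneer forfeits a collateral of at least $f = \max_j r_j \ge r_i$. Because bidder $i$ wins in the realized auction, its payment is its critical bid in that auction, which is at most $v_i$ by monotonicity. Accounting for the forfeited collateral as a discount to the allocated bidder, the (modified) payment obeys $p_i(v_i, v_{-i}, \vec{b} \mid \strategy) \le v_i - f \le v_i - r_i$.

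The MHR hypothesis enters when comparing $v_i - r_i$ with $\vv_i(v_i)$. The inverse hazard rate $v - \vv_i(v) = (1 - F_i(v))/f_i(v)$ is non-increasing, and it equals $r_i$ at $v = r_i$ since $\vv_i(r_i) = 0$; therefore $\vv_i(v_i) \ge v_i - r_i$ for every $v_i \ge r_i$. Combining with the previous paragraph, on $\{\, r_i \le v_i < \critical_i \,\} \cap \{\, i \text{ allocated}\,\}$ we obtain the pointwise bound $p_i(v_i, v_{-i}, \vec{b} \mid \strategy) \le v_i - r_i \le \vv_i(v_i) = \vv_i(v_i)\, x_i(v_i, v_{-i}, \vec{b} \mid \strategy)$, which is exactly the per-point form of the claimed inequality.

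The main obstacle is the region $v_i < r_i$, where the MHR bound reverses ($\vv_i(v_i) < v_i - r_i$) and so the pointwise estimate can fail. I expect to close this gap not by a pointwise calculation but by invoking the auctioneer's revenue-maximization. Allocating $i$ with $v_i < r_i$ requires concealing a fabricated bid (costing at least $f \ge r_i$) while collecting a payment of at most $v_i < r_i$, so this deviation contributes strictly negative net revenue from $i$ and is dominated by instead revealing the suppressed bid, which contributes nonnegatively. Since we only need to bound the auctioneer's \emph{optimal} revenue, we may restrict attention to strategies that never make such a revenue-losing allocation, for which $x_i = 0$ whenever $v_i < r_i$ and both integrands again vanish. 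Integrating the pointwise bound over $\{v_i \ge r_i\}$ and the trivial bound elsewhere yields the stated inequality. The same template should later accommodate $\alpha$-strongly regular distributions by replacing $\vv_i(v_i) \ge v_i - r_i$ with $\vv_i(v_i) \ge \alpha\,(v_i - r_i)$, at the expense of a correspondingly larger collateral.
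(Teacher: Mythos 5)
Your first three paragraphs are exactly the paper's argument: allocating bidder $i$ below $\critical_i$ forces the concealment of at least one fabricated bid, hence a forfeited collateral $f=\max_j r_j\ge r_i$; the extracted payment is at most $v_i$, so the net payment is at most $v_i-r_i$; and the MHR light-tail bound $\vv_i(v_i)\ge v_i-r_i$ closes the pointwise comparison. The genuine difference is your fourth paragraph, and your instinct there is correct in a way the paper does not acknowledge. The paper's proof invokes the MHR bound ``for all $v_i\ge r_i$'' and never explains why the allocation event $\{v_i<\critical_i,\,x_i=1\}$ is contained in $\{v_i\ge r_i\}$. That containment is automatic only when the auctioneer reports the true distributions, since a virtual-surplus maximizer never allocates a bid with negative reported virtual value; but the DRA explicitly allows the auctioneer to report arbitrary regular $\hat D_i$, and then the containment can fail.

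In fact, without some repair the lemma as literally stated is false. Take a single real bidder with $D_1$ uniform on $[0,1]$, so $\vv_1(v)=2v-1$ and $r_1=f=\tfrac12$; let the auctioneer report $\hat D_1$ exponential with mean $a<\tfrac12$ (reported reserve $a$), insert one fake bid whose reported virtual value exceeds $1$ (so $\critical_1>1$ and the indicator is identically one on the support), and conceal that bid exactly when $v_1\in[a,b)$ for some $b<\tfrac12$. On that event bidder $1$ wins and pays the reported reserve $a$, so the left side equals $(a-\tfrac12)(b-a)$, while the right side equals $\int_a^b(2v-1)\,dv=(b-a)(a+b-1)$; the difference is $(b-a)(\tfrac12-b)>0$. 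So some restriction on $\strategy$ is necessary, and your domination argument supplies a valid one: a strategy that allocates $i$ with $v_i<\min\{r_i,\critical_i\}$ earns strictly negative net revenue on those bid profiles and is pointwise dominated by revealing everything there, so it suffices to establish the inequality for undominated strategies, which is all that the proof of \Cref{thm:BabyFW} requires (that proof only needs to upper-bound the revenue of an optimal deviation). The one thing to flag is that what you then prove is a restricted version of the lemma --- for undominated strategies, or under a hypothesis guaranteeing that bidders allocated below their critical bids satisfy $v_i\ge r_i$ --- rather than the unrestricted ``any strategy'' statement; if you write this up, restate the lemma with that qualification, since the unrestricted version admits the counterexample above.
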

\begin{proof}
    When $v_i < \critical_i$, the auctioneer has to conceal at least one of its fabricated bids to allocate bidder $i$.
    By including bidder $i$, the maximum payment the auctioneer can extract is $v_i$. Therefore,
    \begin{equation}
        \notag
        \begin{split}
            p_i(v_{i} , v_{-i}, \vec{b} | \strategy) \leq v_i \, x_i(v_i , v_{-i}, \vec{b} | \strategy) - r_i
            &\leq (v_i - r_i) \, x_i(v_i , v_{-i}, \vec{b} | \strategy)\\
            &\leq \vv_i(v_i) \, x_i(v_i , v_{-i}, \vec{b} | \strategy).
        \end{split}
    \end{equation}
    The second inequality follows since $x_i(v_i, v_{-i}, \vec{b} | \strategy) \in \{0, 1\}$. 
    The second line follows from a standard property of MHR distributions --- $(v_i - r_i) \leq \vv_i(v_i)$ for all $v_i \geq r_i$ (\Cref{thm:MHRLightTail} in \Cref{appendix:mhr-properties}).
    Taking expectation over all $v_i < \critical_i$ completes the proof.
\end{proof}

\Cref{thm:BabyFW} is a direct consequence of \Cref{thm:BabyBeatCritical} and \Cref{thm:BabyMoreThanCritical}.

\begin{proof}[Proof of \Cref{thm:BabyFW}]
    Combining \Cref{thm:BabyBeatCritical} and \Cref{thm:BabyMoreThanCritical}, we see that the auctioneer optimizes its virtual surplus from bidder $i$ by not concealing any of its fabricated bids $\vec{b}$.
    \begin{equation} \label{eqn:CombiningMoreandLessCrit}
        \begin{split}
            \E_{v_i \sim D_i}[p_i(v_i, v_{-i}, \vec{b} | \strategy)] &= \E_{v_i \sim D_i}[p_i(v_i, v_{-i}, \vec{b} | \strategy) \times \mathbbm{1}(v_i \geq \critical)] \\
            & \qquad + \E_{v_i \sim D_i}[p_i(v_i, v_{-i}, \vec{b} | \strategy) \times \mathbbm{1}(v_i < \critical)] \\
            &\leq \E_{v_i \sim D_i}[\vv_i(v_i) \, x_i(v_i, v_{-i}, \vec{b} | \strategy) \times \mathbbm{1}(v_i \geq \critical)] \\
            & \qquad + \E_{v_i \sim D_i}[\vv_i(v_i) \, x_i(v_i, v_{-i}, \vec{b} | \strategy) \times \mathbbm{1}(v_i < \critical)] \\
            &= \E_{v_i \sim D_i}[\vv_i(v_i) \, x_i(v_i , v_{-i}, \vec{b} | \strategy)].
        \end{split}
    \end{equation}

    We finish the proof by taking expectation over the values $v_{-i}$ of the other bidders and summing over all $i$.
    \begin{equation}
        \notag
        \begin{split}
            \E_{\vec{v} \sim \Pi_{i = 1}^n D_i}[\sum_{i = 1}^n p_i(v_i, v_{-i}, \vec{b} | \strategy)]
            &\leq \sum_{i = 1}^n \E_{v_{-i} \sim D_{-i}}\Big[\E_{v_i \sim D_i}[\vv_i(v_i) \, x_i(v_i , v_{-i}, \vec{b} | \strategy)]\Big] \\
            &\leq \sum_{i = 1}^n \E_{v_{-i} \sim D_{-i}}\Big[\E_{v_i \sim D_i}[\vv_i(v_i) \, x^{\OPT}_i(v_i , v_{-i})]\Big] \\
            &= \E_{\vec{v} \sim \Pi_{i = 1}^n D_i}[\sum_{i = 1}^n p^{\OPT}_i(v_i, v_{-i})].
        \end{split}
    \end{equation}
    The first line uses \Cref{eqn:CombiningMoreandLessCrit}, while the second observes that $x^{\OPT}$ pointwise optimizes virtual surplus.
    The last equality follows from \Cref{thm:MyersonVV}.
    Thus, the auctioneer optimizes revenue by reporting the distributions $D_1, \dots, D_n$ truthfully in the initialization phase and not inserting any fabricated bids.
\end{proof}

\begin{remark} \label{rem:PrivateBaby}
    Note that we did not explicitly use the public ledger assumption to prove \Cref{thm:BabyFW}.
    We only needed to ensure that (a) a bidder's critical bid (as communicated to the bidder by the auctioneer) does not increase on concealing bids, and (b) the penalty for concealing a bid from a bidder is at least the monopoly reserve of the bidder.
    Thus, a collateral equal to the monopoly reserve is sufficient in case of a single item in the private communication model.
    In \Cref{sec:LowerMHR}, we prove that the largest monopoly reserve is also necessary.
    Thus, there exists no separation between the required collateral that will make the DRA credible in the public and private communication models for a single-item.
\end{remark}

\section{Beyond Single Item and MHR Distributions} \label{sec:UpperBound}

In this section, we recreate the arguments in \Cref{sec:WarmUp} for matroid feasibility constraints.
The simplicity of the framework that analyzes one bidder at a time allows us to identify the salient properties of the feasibility constraint so that analogs of \Cref{thm:BabyBeatCritical} and \Cref{thm:BabyMoreThanCritical} continue to hold.

For \Cref{thm:BabyBeatCritical}, conditioned on the bids $v_{-i}$ and fabricated bids $\vec{b}$, we only require the critical bid of any agent to not increase when some subset of bids are concealed by the auctioneer (and remember that \Cref{thm:BabyBeatCritical} holds for all regular distributions, and not just MHR distributions).
\Cref{thm:BabyMoreThanCritical} requires that the distributions are MHR, and further that all allocated bidders with values smaller than their critical bids can \emph{each} be given a discount equal to the penalty $f$.
In other words, the number of allocated bidders with a value smaller than the critical bid is at least the number of bids concealed by the auctioneer.

In \Cref{sec:MatroidMHR}, we verify that matroid feasibility constraints indeed satisfy the above requirements.
We then relax the MHR requirement for \Cref{thm:BabyMoreThanCritical} in \Cref{sec:MatroidReg} to show that the DRA is credible for any $\alpha$-regular distribution for $\alpha > 0$ (however, we require a much larger collateral than when all value distributions are MHR).

\subsection{Matroid Feasibility Constraints and MHR Distributions} \label{sec:MatroidMHR}

\begin{theorem} \label{thm:MatroidMHR}
    For a matroid feasibility constraint $\Feasibility = M(E, \I)$ and MHR value distributions $D_1, \dots, D_n$ with corresponding monopoly reserves $r_1, \dots, r_n$, the DRA over a public ledger is credible for a collateral $f = \max_i \{r_i\}$.
\end{theorem}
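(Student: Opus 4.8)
The plan is to replay the two-case argument of \Cref{sec:WarmUp} essentially verbatim, isolating the two places where single-item structure was used and replacing each with the corresponding matroid property. Concretely, I fix bidders to bid truthfully and fix a deterministic auctioneer strategy $\strategy$ that fabricates bids $\vec{b}$ and conceals some subset $C \subseteq \vec{b}$ of them in the revelation phase. Write $A^{\mathrm{full}}$ for the set of (real and fake) bidders the smart contract allocates when all of $\vec{b}$ is revealed, and $A(\strategy)$ for the set it allocates under $\strategy$; both are max-$\hat{\vv}$-weight independent sets of the announced matroid $\hat{\Feasibility}$ over their respective revealed ground sets, and $A(\strategy)$ is exactly $A^{\mathrm{full}}$'s problem with the ground set shrunk by $C$. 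As in the warm-up I will split each real bidder $i$ according to whether $v_i \geq \critical_i$ or $v_i < \critical_i$, where $\critical_i$ is $i$'s critical bid under full revelation, and prove the matroid analogues of \Cref{thm:BabyBeatCritical} and \Cref{thm:BabyMoreThanCritical}.

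The whole argument will rest on two structural facts about max-weight independent sets in a matroid, both consequences of the exchange property (\Cref{thm:Greedy}). First, deleting a single element from the ground set changes the (lexicographically tie-broken) max-weight independent set by at most one swap: if the deleted element was not selected the solution is unchanged, and if it was selected then either it was a coloop of the positive-weight restriction (the solution just loses it) or the rank is preserved and the new solution is the old one with that element replaced by a single new element. Telescoping this over the $|C|$ deletions and taking a union of the per-step differences will give $|A(\strategy) \setminus A^{\mathrm{full}}| \leq |C|$. Second, and by the same single-swap fact, deleting a competitor never removes a previously-winning element, so concealing bids can only \emph{lower} any real bidder's critical bid. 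I expect this single-swap lemma to be the main obstacle: it is the one genuinely matroid-theoretic ingredient, and getting the tie-breaking and the coloop case right is where care is needed, whereas everything downstream is bookkeeping.

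Given these, the $v_i \geq \critical_i$ case will go through exactly as \Cref{thm:BabyBeatCritical}: critical-bid monotonicity says such a bidder still wins under $\strategy$ at a weakly smaller critical bid, so revealing all of $\vec{b}$ dominates, and \Cref{thm:MyersonVV} applied to the resulting DSIC mechanism converts the full-revelation payment into true virtual surplus, yielding $\E[p_i \, \mathbbm{1}(v_i \geq \critical_i)] \leq \E[\vv_i(v_i)\,x_i(\cdots \mid \strategy)\,\mathbbm{1}(v_i \geq \critical_i)]$. For the $v_i < \critical_i$ case, such a bidder lies outside $A^{\mathrm{full}}$, so every allocated under-critical real bidder lies in $A(\strategy) \setminus A^{\mathrm{full}}$; by the first structural fact there are at most $|C|$ of them, so I can injectively charge each one the penalty of a distinct concealed bid. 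Since $f = \max_j r_j \geq r_i$ and the auctioneer collects at most $v_i$ from such a bidder, its net contribution is at most $(v_i - r_i)\,x_i \leq \vv_i(v_i)\,x_i$, the last step being the MHR inequality $(v_i - r_i) \leq \vv_i(v_i)$ (\Cref{thm:MHRLightTail}).

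Summing the two cases over all real bidders and taking expectations will reproduce \Cref{eqn:CombiningMoreandLessCrit}, bounding the auctioneer's revenue under $\strategy$ by $\E[\sum_i \vv_i(v_i)\,x_i(\cdots \mid \strategy)]$. Finally, because only $\Feasibility$-feasible sets of real bidders may ever be allocated, the real part of $x(\cdots \mid \strategy)$ is feasible in the true matroid $\Feasibility$; hence its virtual surplus is dominated pointwise by that of $x^{\OPT}$, the true-virtual-surplus maximizer over $\Feasibility$. This last step absorbs both the possibility that the auctioneer misreported the feasibility constraint and that it misreported the distributions, and chaining it with the previous bound shows that no strategy beats honest behaviour, establishing credibility for $f = \max_i r_i$.
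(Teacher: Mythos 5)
Your proposal is correct and follows the paper's overall architecture exactly --- the same two-case split at the full-revelation critical bid $\critical_i$, the same use of DSIC-ness of the full-revelation mechanism plus \Cref{thm:MyersonVV} in the $v_i \geq \critical_i$ case, the same ``charge each under-critical allocated bidder the penalty of a distinct concealed bid, then apply \Cref{thm:MHRLightTail}'' argument in the $v_i < \critical_i$ case, and the same final pointwise comparison of virtual surplus against $x^{\OPT}$ over the true matroid $\Feasibility$. Where you genuinely diverge is in how the two matroid-structural facts are established. The paper proves its \Cref{thm:MatroidConceal} (concealing a set $C$ never un-allocates a bidder outside $C$) in one shot via Brylawski's symmetric basis-exchange property (\Cref{thm:MatroidBasisExchange}), after padding both allocations with zero-virtual-value elements to full rank, and then gets the counting fact (at most $|C|$ newly allocated bidders, hence the injection used again later in \Cref{thm:RegEmptyRank}) as a separate consequence of \Cref{thm:Greedy}. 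You instead prove a single-element-deletion lemma --- removing one element from the ground set changes the (deterministically tie-broken) max-weight independent set by at most one swap --- and telescope it over the $|C|$ deletions, which yields \emph{both} facts simultaneously: $A(\strategy) \supseteq A^{\mathrm{full}} \setminus C$ and $|A(\strategy) \setminus A^{\mathrm{full}}| \leq |C|$. Your route is more elementary and self-contained (it needs only greedy optimality and the augmentation property, not the symmetric exchange theorem) and correctly identifies its own delicate points: the tie-breaking must make the optimum unique for the per-deletion induction, and the no-replacement (``coloop'') case must be handled. The paper's route avoids any induction by invoking a stronger off-the-shelf exchange theorem for an arbitrary concealed set at once, and its \Cref{thm:MatroidConceal} is stated in a form that is reused verbatim in the $\alpha$-strongly regular analysis. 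Both are valid proofs of \Cref{thm:MatroidMHR}.
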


We begin by proving that concealing bids does not increase the critical bids of any bidder in a matroid feasibility constraint.

\begin{lemma} \label{thm:MatroidConceal}
    Fix the bids $\vec{v}$ and fabricated bids $\vec{b}$ of bidders in a matroid feasibility constraint given by $M(E, \I)$, with corresponding virtual values $\hat{\vv}_i(\vec{v})$ and $\hat{\vv}_i(\vec{b})$ respectively.
    Let $W$ be the set of bidders allocated by the virtual-surplus-optimal mechanism $\hat{x}^{\OPT}$.
    Then, $\hat{x}^{\OPT}$ allocates all bidders in $W \setminus C$, irrespective of the set of fabricated bids $C$ concealed by the auctioneer. 
\end{lemma}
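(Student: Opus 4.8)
The plan is to reduce the statement to the standard greedy algorithm for computing a maximum-weight independent set in a matroid. Since the virtual-surplus-optimal mechanism $\hat{x}^{\OPT}$ selects $\argmax_{S \in \I} \sum_{i \in S} w_i$, where each element $i$ carries weight $w_i$ equal to its reported virtual value and downward-closure (\Cref{def:Matroids}) lets us discard negative-weight elements for free, the winning set $W$ is exactly the output of the greedy procedure: fix a total order on $E$ by decreasing weight, breaking ties lexicographically as in \Cref{def:DRA}, process the elements in this order, and add an element to the running set whenever it has positive weight and is independent from the elements already chosen. Crucially, concealing bids only deletes elements from the ground set and changes neither the weights nor the relative order of the survivors, so the two instances share a common order.

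The key step is a clean characterization of membership in the greedy solution in terms of the closure (span) operator. For an element $e$, let $P_{>e} = \{ f \in E : f \text{ precedes } e \text{ in the order and } w_f > 0 \}$ denote the positive-weight elements above $e$. Because greedy chooses a maximal independent subset of $P_{>e}$ before it reaches $e$, that chosen subset is a basis of $P_{>e}$ and hence has the same closure; therefore greedy adds $e$ exactly when $w_e > 0$ and $e \notin \Span(P_{>e})$. I would establish this equivalence using \Cref{thm:Greedy} together with the basic closure axioms of a matroid, which is the only place the matroid structure is genuinely used.

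With this characterization in hand, the lemma follows from monotonicity of the closure operator. Concealing the set $C$ replaces the ground set $E$ by $E \setminus C$ while preserving all weights and the processing order, so for every $e \in W \setminus C$ the relevant prefix shrinks to $P'_{>e} = P_{>e} \setminus C \subseteq P_{>e}$. Since $\Span(P'_{>e}) \subseteq \Span(P_{>e})$ and $e \notin \Span(P_{>e})$, we still have $e \notin \Span(P'_{>e})$, while $w_e > 0$ is unchanged; hence greedy run on $E \setminus C$ also selects $e$, i.e.\ $e$ remains allocated by $\hat{x}^{\OPT}$. This is precisely the reformulation of ``critical bids do not increase'' that \Cref{sec:WarmUp} flagged as the sole requirement for the analog of \Cref{thm:BabyBeatCritical}.

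The main obstacle I anticipate is not the final monotonicity argument but making the greedy characterization airtight: I must argue carefully that the greedy selection restricted to $P_{>e}$ spans all of $P_{>e}$ (so that replacing the selected set by $P_{>e}$ inside the closure is harmless), and that the fixed lexicographic tie-breaking renders the processing order of the surviving elements identical in the full and restricted instances. Both are routine consequences of the matroid axioms and of the fact that deleting elements cannot reorder the remaining ones, but they are exactly the points where a careless argument could slip, so I would spell them out explicitly rather than appeal to them in passing.
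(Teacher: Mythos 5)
Your proof is correct, but it takes a genuinely different route from the paper's. The paper argues by contradiction via the symmetric basis-exchange property (\Cref{thm:MatroidBasisExchange}): it pads $W$ and the post-concealment optimum $\hat{W}$ to bases using zero-virtual-value dummies, exchanges $D = W \cap C$ against some $\hat{D} \subseteq \hat{W}$, and contradicts the optimality of either $W$ or $\hat{W}$, using the lexicographic priority to handle exact ties. You instead identify the virtual-surplus maximizer with the matroid greedy algorithm run in decreasing order of positive virtual value, characterize membership by the certificate ``$w_e > 0$ and $e \notin \Span(P_{>e})$,'' and finish by monotonicity of the closure operator, since concealment only shrinks each prefix $P_{>e}$. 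Your route trades Brylawski's (nontrivial) exchange theorem for the standard matroid greedy theorem plus elementary closure axioms, is direct rather than by contradiction, and produces a per-element certificate of allocation that survives deletion --- which is precisely the ``critical bids do not increase'' form consumed downstream in \Cref{thm:MatroidBeatCritical}. The two points you flag are indeed the ones that must be nailed down: (i) that the mechanism's argmax with lexicographic tie-breaking coincides with greedy's output under the weight-then-lexicographic processing order --- and this consistency is not cosmetic, since with an inconsistent tie-breaking rule the lemma is already false for two equal bids in a rank-one matroid (concealing an unrelated third bid could flip the winner from bidder $1$ to bidder $2$); and (ii) that greedy's selection upon reaching $e$ is a basis of $P_{>e}$, so rejection of $e$ is equivalent to $e \in \Span(P_{>e})$. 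Both are routine consequences of the matroid axioms, and once they are stated explicitly your argument is complete.
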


Remember that the virtual values in \Cref{thm:MatroidConceal} need not correspond to the distributions $D_1, \dots, D_n$.
It holds for any distributions and any corresponding virtual values reported by the auctioneer to the smart contract implementing the revenue-optimal mechanism for the reported distributions.

The following property of matroids will be useful for proving \Cref{thm:MatroidConceal}.

\begin{theorem}[\citealp{Brylawski73}; Symmetric basis-exchange property] \label{thm:MatroidBasisExchange}
    Let $M = (E, \I)$ be a matroid, and let $W, \hat{W} \in \I$ be two independent sets such that $|W| = |\hat{W}|$. Then, for any subset $D$ of $W$, there exists a subset $\hat{D}$ of $\hat{W}$ such that both $(W \setminus D) \cup \hat{D}$ and $(\hat{W} \setminus \hat{D}) \cup D$ are independent sets of $M$.
\end{theorem}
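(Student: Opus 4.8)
The plan is to deduce the symmetric basis-exchange property from the matroid intersection theorem after a couple of reductions. First I would reduce to the case where $W$ and $\hat W$ are \emph{bases}: truncating $M$ to rank $|W|$ turns both into bases while leaving the independence of every set of size at most $|W|$ unchanged, so any certificate found in the truncation lifts back to a certificate in $M$. Next, restricting the ground set to $W \cup \hat W$ keeps both as bases, and after reducing to $D \subseteq W \setminus \hat W$ (the elements of $D \cap \hat W$ need a short separate argument, since they must leave the $W$-side but remain on the $\hat W$-side) I would contract the common part $W \cap \hat W$. This reduces everything to \emph{disjoint} bases, so assume $E = W \cup \hat W$ with $W \cap \hat W = \emptyset$, $|W| = |\hat W| = r = \rank M$, and $D \subseteq W$.

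The key observation is that the two target sets are \emph{complementary} in $E$: writing $S = (W \setminus D) \cup \hat D$ for a candidate $\hat D \subseteq \hat W$, one checks that $E \setminus S = (\hat W \setminus \hat D) \cup D$. Since the bases of the dual matroid $M^{\ast}$ are exactly the complements of bases of $M$, demanding that \emph{both} target sets be bases of $M$ is equivalent to asking for a single set $S$ that is simultaneously a basis of $M$ and of $M^{\ast}$, subject to $S \cap W = W \setminus D$ (equivalently $S \supseteq W \setminus D$ and $S \cap D = \emptyset$). Thus the problem becomes: find a common basis of $M$ and $M^{\ast}$ that contains $A := W \setminus D$ and avoids $D$.

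To produce it, I would contract $A$ and delete $D$ in both matroids, obtaining $M_1 = (M / A) \setminus D$ and $M_2 = (M^{\ast} / A) \setminus D$, both on ground set $\hat W$ and both of rank $|D|$; a common basis of $M$ and $M^{\ast}$ of the prescribed form corresponds precisely to a common independent set of $M_1$ and $M_2$ of size $|D|$. By the matroid intersection theorem such a set exists iff $\rank_{M_1}(T) + \rank_{M_2}(\hat W \setminus T) \ge |D|$ for every $T \subseteq \hat W$. Expanding the two ranks through the contraction formula and the dual rank identity $\rank_{M^{\ast}}(Z) = |Z| - r + \rank_M(E \setminus Z)$, this inequality collapses to $\rank_M(A \cup T) + \rank_M(D \cup T) \ge r + |T|$, which is immediate from submodularity: the union $(A \cup T) \cup (D \cup T) = W \cup T$ has rank $r$ since it contains the basis $W$, while the intersection is exactly $T$ (as $A \cap D = \emptyset$), of rank $|T|$ because $T \subseteq \hat W$ is independent. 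Setting $\hat D = S \cap \hat W$ then finishes the argument.

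The crux, and the only genuinely non-routine point, is the \emph{simultaneity}: the same $\hat D$ must make both $(W \setminus D) \cup \hat D$ and $(\hat W \setminus \hat D) \cup D$ independent. Iterating the single-element symmetric exchange (the $|D| = 1$ case, which is Brualdi's theorem) does not obviously work, because each exchanged element leaves one basis but enters the other, so a later swap can undo an earlier one and no single deletion or contraction lets an induction on $|D|$ recurse cleanly. The complementarity reformulation is exactly what dissolves this difficulty: it converts the two simultaneous independence demands into the single requirement that $S$ be a common basis of $M$ and $M^{\ast}$, after which feasibility reduces to one submodular inequality.
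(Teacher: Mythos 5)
Your proof is correct, but note that the paper does not actually prove this statement: it imports the symmetric basis-exchange property wholesale from \citet{Brylawski73} and uses it as a black box in the proof of \Cref{thm:MatroidConceal}, so any derivation is by definition a different route. Your route is sound. The preliminary reductions (truncation to rank $|W|$, restriction to $W \cup \hat{W}$, contraction of the common part) all preserve certificates in the right direction, and the piece you flag as needing a ``short separate argument'' really is short: each $d \in D \cap \hat{W}$ can be contracted and placed into $\hat{D}$, since $d$ lies in $(\hat{W}\setminus\hat{D}) \cup D$ regardless, and independence in $M/d$ lifts to independence in $M$ after re-adding $d$ to both target sets. The core of your argument also checks out: with $E = W \sqcup \hat{W}$ and $|E| = 2r$, the two target sets are complements, so the simultaneous independence requirement is exactly that $S = (W\setminus D)\cup \hat{D}$ be a common basis of $M$ and $M^{*}$ containing $A = W \setminus D$ and avoiding $D$; expanding the matroid-intersection rank condition for $M_1 = (M/A)\setminus D$ and $M_2 = (M^{*}/A)\setminus D$ via $\rank_{M^{*}}(Z) = |Z| - r + \rank_M(E\setminus Z)$ does collapse to $\rank_M(A\cup T) + \rank_M(D \cup T) \geq r + |T|$, which follows from submodularity because the union contains the basis $W$ and the intersection is the independent set $T$. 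As for what each approach buys: the paper's citation keeps the exposition lean and invokes the classical result directly, whereas your derivation is self-contained but rests on the matroid intersection theorem of Edmonds---machinery heavier than the statement itself, though entirely standard. A small bonus of your proof is that it delivers the stronger bijective form of the exchange, $|\hat{D}| = |D|$, which the statement as written does not demand.
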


\begin{proof}[Proof of \Cref{thm:MatroidConceal}]
    Suppose for some set $C$ of concealed bids, the virtual-surplus-optimal mechanism allocates a set $\hat{W}$ of bidders such that $W \setminus C \not \subseteq \hat{W}$.
    Let $D = W \cap C$.
    We will apply the symmetric basis-exchange property of matroids (\Cref{thm:MatroidBasisExchange}) to the sets $W, \hat{W}$ and $D$.

    For convenience, add sufficiently many bidders with virtual value zero (and lowest in priority for tie-breaking) that can be added to any feasible set of cardinality strictly smaller than the rank of the matroid.
    After adding such bidders, we can assume that the cardinality of both $W$ and $\hat{W}$ both equal the rank of the matroid.
    Including these bids will not change the virtual surplus of either $W$ or $\hat{W}$.

    By the symmetric basis-exchange property, there exists a set $\hat{D} \subseteq \hat{W}$ such that $(W \setminus D) \cup \hat{D}$ and $(\hat{W} \setminus \hat{D}) \cup D$ both belong to the set of feasible allocations $\I$.
    Note that $(W \setminus D) \cup \hat{D}$ does not include any of the concealed bids $C$.
    Thus, if $(W \setminus D)$ has a virtual surplus greater than $(\hat{W} \setminus \hat{D})$, then $(W \setminus D) \cup \hat{D}$ has a virtual surplus greater than $\hat{W}$, the virtual-surplus-optimal allocation that does not include bids in $C$.\footnote{If $(W \setminus D)$ and $(\hat{W} \setminus \hat{D})$ have the same virtual surplus, and $(W \setminus D)$ has a higher priority, then $(W \setminus D) \cup \hat{D}$ will be chosen by the virtual-surplus-optimizer over $\hat{W}$, leading to a contradiction.}
    A contradiction. Similarly, if $(\hat{W} \setminus \hat{D})$ has a virtual value larger than $(W \setminus D)$, then $(\hat{W} \setminus \hat{D}) \cup D$ has a greater virtual surplus than $W$, a contradiction again.

    Thus, $\hat{W}$ must contain $W \setminus D$, which concludes the proof.    
\end{proof}

Now that we have proven bidders cannot be ``un-allocated'' by concealing bids, the proof of \Cref{thm:BabyBeatCritical} holds for matroid environments \emph{mutatis mutandis}.

\begin{lemma} \label{thm:MatroidBeatCritical}
    Let $M(E, \I)$ be a matroid feasibility constraint and let bidders' values be drawn from the regular distributions $D_1, \dots, D_n$.
    Let $\critical_i$ be bidder $i$'s critical bid conditioned on the bids $v_{-i}$ and fake bids $\vec{b}$.
    Then, for any strategy $\strategy$ of the auctioneer that injects the bids $\vec{b}$,
    $$\E_{v_i \sim D_i}[p_i(v_i, v_{-i}, \vec{b} | \strategy) \times \mathbbm{1}(v_i \geq \critical_i)] \leq \E_{v_i \sim D_i}[\vv_i(v_i) \, x_i(v_i , v_{-i}, \vec{b} | \strategy) \times \mathbbm{1}(v_i \geq \critical_i)].$$
\end{lemma}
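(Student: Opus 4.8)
The plan is to follow the single-item argument of \Cref{thm:BabyBeatCritical} essentially verbatim, with \Cref{thm:MatroidConceal} supplying the one place where single-item structure was previously used. Recall that in the single-item proof the only facts invoked on the event $\{v_i \geq \critical_i\}$ were that bidder $i$ stays allocated and that concealing bids can only lower bidder $i$'s critical bid, and hence its payment. For matroids, \Cref{thm:MatroidConceal} establishes exactly that a real bidder who is allocated under full revelation cannot be un-allocated by concealing fabricated bids, so the same reduction to the full-revelation mechanism goes through.

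Concretely, fix $v_{-i}$ and $\vec{b}$ and condition on the event $\{v_i \geq \critical_i\}$. First I would observe that on this event bidder $i$ lies in the full-revelation allocation $W$ by the definition of the critical bid, and since $i$ is a genuine bidder it never belongs to the concealed set $C$; \Cref{thm:MatroidConceal} then gives $i \in W \setminus C$, so $x_i(v_i, v_{-i}, \vec{b} \mid \strategy) = 1$. Next I would establish critical-bid monotonicity under concealment: applying \Cref{thm:MatroidConceal} to the profile in which bidder $i$ bids an arbitrary $w \geq \critical_i$ shows that $i$ remains allocated under $\strategy$ for every such $w$, whence the critical bid of $i$ induced by $\strategy$ is at most $\critical_i$. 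Because each allocated bidder pays their critical bid, this yields $p_i(v_i, v_{-i}, \vec{b} \mid \strategy) \leq p_i(v_i, v_{-i}, \vec{b})$ pointwise on the event.

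Taking expectations and using $p_i(\cdot, v_{-i}, \vec{b}) = 0$ off the event $\{v_i \geq \critical_i\}$ (no positive transfers, since $x_i(\cdot, v_{-i}, \vec{b}) = \mathbbm{1}(v_i \geq \critical_i)$ for the full-revelation mechanism), I would bound the left-hand side by $\E_{v_i \sim D_i}[p_i(v_i, v_{-i}, \vec{b})]$ and then apply the payment identity (\Cref{thm:MyersonVV}) to the DSIC full-revelation mechanism to rewrite this as $\E_{v_i \sim D_i}[\vv_i(v_i)\, x_i(v_i, v_{-i}, \vec{b})\,\mathbbm{1}(v_i \geq \critical_i)]$. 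Finally, since $x_i(v_i, v_{-i}, \vec{b}) = x_i(v_i, v_{-i}, \vec{b} \mid \strategy) = 1$ on the event (by the first step), I can swap the full-revelation allocation for the $\strategy$-induced allocation to reach the claimed inequality.

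The genuinely matroid-specific difficulty---that concealing competing fabricated bids can never drop bidder $i$ from the optimal independent set---has already been absorbed into \Cref{thm:MatroidConceal} via the symmetric basis-exchange property, so no new combinatorial work is needed here. The only point requiring mild care is the bookkeeping around the indicator in the payment identity, which is harmless because the full-revelation mechanism charges nothing precisely when $v_i < \critical_i$; beyond that the argument is identical to the single-item case.
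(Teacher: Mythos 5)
Your proposal is correct and matches the paper's own proof: the paper establishes \Cref{thm:MatroidConceal} precisely so that the single-item argument of \Cref{thm:BabyBeatCritical} carries over \emph{mutatis mutandis}, which is exactly the reduction you carry out. Your additional bookkeeping (the no-positive-transfers observation and the pointwise swap of $x_i(\cdot,v_{-i},\vec{b})$ for $x_i(\cdot,v_{-i},\vec{b}\mid\strategy)$ on the event $\{v_i\geq\critical_i\}$) just makes explicit what the paper leaves implicit.
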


Next, we prove an analog to \Cref{thm:BabyMoreThanCritical}.
Apart from the distributions $D_1, \dots, D_n$ being MHR, the key property that we used to prove \Cref{thm:BabyMoreThanCritical} was the following --- the auctioneer needs to conceal at least one bid in order to include a bidder with value below its critical bid.
This way, we were able to associate the penalty $f$ paid by the auctioneer as a discount offered to the allocated bidder.
In more general environments, suppose that the numbers of bids concealed by the auctioneer is at least the number of allocated bidders with bids smaller than their critical values.
Then, we will be able to once again associate the fine paid by the auctioneer as a discount $f$ per allocated bidder with $v_i < \critical_i$.
\Cref{thm:MatroidMoreThanCritical} will hold verbatim in such environments.
Indeed, the above condition is straightforward for matroids by \Cref{thm:Greedy}.
The auctioneer will have to conceal at least $k$ bids to allocate $k$ more bidders for a matroid feasibility constraint.

\begin{lemma} \label{thm:MatroidMoreThanCritical}
    Consider a matroid feasibility constraint $M(E, \I)$. Let the distributions $D_1, \dots, D_n$ of bidder values be MHR, and have monopoly reserves $r_1, \dots, r_n$ respectively.
    Then, for a penalty $f = \max_i \{r_i\}$, and for any strategy $\strategy$ of the auctioneer that injects the fake bids $\vec{b}$,
    $$\E_{v_i \sim D_i}[p_i(v_i , v_{-i}, \vec{b} | \strategy) \times \mathbbm{1}(v_i < \critical_i)] \leq \E_{v_i \sim D_i}[\vv_i(v_i) \, x_i(v_i , v_{-i}, \vec{b} | \strategy) \times \mathbbm{1}(v_i < \critical_i)].$$
\end{lemma}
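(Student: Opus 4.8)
The plan is to mirror the single-agent argument behind \Cref{thm:BabyMoreThanCritical}, so that for each real bidder $i$ allocated with $v_i < \critical_i$ the auctioneer's net payment (after the burnt collateral is booked as a discount to the allocated bidders) is at most $\vv_i(v_i)$. Fix $v_{-i}$, the fabricated bids $\vec{b}$, and a deterministic strategy $\strategy$ concealing a set $C \subseteq \vec{b}$. The only genuinely new ingredient relative to the warm-up is that now several real bidders can be dragged below their critical bids simultaneously, so I must verify there are enough burnt deposits to hand each of them a full discount of $f$. Concretely, writing $L$ for the set of real bidders that $\strategy$ allocates with $v_i < \critical_i$, the heart of the proof is the counting inequality $|L| \le |C|$.

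To establish $|L| \le |C|$ I would compare the allocation $S'$ produced by $\strategy$ against the allocation $S^*$ that the virtual-surplus optimizer would choose if all fabricated bids were revealed. By definition of the critical bid, a real bidder lies in $S^*$ exactly when $v_i \ge \critical_i$, so every element of $L$ lies in $S' \setminus S^*$. \Cref{thm:MatroidConceal} already rules out the reverse ``un-allocation'': concealing $C$ cannot drop any bidder of $S^* \setminus C$, i.e. $S^* \setminus C \subseteq S'$. Padding with dummy zero-virtual-value bidders exactly as in \Cref{thm:MatroidConceal} so that $|S^*| = |S'| = \rank(M)$, the two equal-cardinality independent sets satisfy $|S' \setminus S^*| = |S^* \setminus S'|$; and since $S' \cap C = \emptyset$ while $S^* \setminus C \subseteq S'$, the set $S^* \setminus S'$ is precisely $S^* \cap C$. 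Hence $|L| \le |S' \setminus S^*| = |S^* \cap C| \le |C|$, which is exactly the matroid statement ``conceal $k$ bids to allocate $k$ extra bidders'' flagged via \Cref{thm:Greedy}.

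Given $|L| \le |C|$, the remaining steps are verbatim the warm-up. The total collateral burnt is $f\,|C| \ge f\,|L|$, so I can credit each bidder in $L$ with a discount of $f = \max_j r_j$. For such a bidder the net payment obeys $p_i(v_i, v_{-i}, \vec{b} \mid \strategy) \le (v_i - f)\,x_i(v_i, v_{-i}, \vec{b} \mid \strategy) \le (v_i - r_i)\,x_i(v_i, v_{-i}, \vec{b} \mid \strategy)$, since $x_i \in \{0,1\}$ and $f \ge r_i$, and then the MHR inequality $(v_i - r_i) \le \vv_i(v_i)$ from \Cref{thm:MHRLightTail} (valid for an allocated bidder, whose value is at least its reserve) upgrades this to $p_i \le \vv_i(v_i)\,x_i$. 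Summing over $i \in L$ and taking the expectation over $v_i \sim D_i$ yields the claimed bound.

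I expect the counting inequality $|L| \le |C|$ to be the crux: it is where the matroid hypothesis is used and where a non-matroid constraint would break, since a single concealed bid could then free up room for several extra admissions, leaving too few burnt deposits to fund the discounts. Once \Cref{thm:MatroidConceal} is in hand the inequality collapses to the cardinality bookkeeping above, but without the underlying symmetric-exchange/greedy structure there is no reason the penalties would suffice. The only other point needing a line of care is that the per-bidder MHR bound is invoked solely for allocated bidders, whose values exceed their (true) reserve, so that \Cref{thm:MHRLightTail} is applied in the correct regime.
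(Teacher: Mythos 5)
Your proof is correct and follows the paper's intended route exactly: the paper reduces this lemma to the warm-up argument plus the counting claim that the number of allocated bidders below their critical bids is at most the number of concealed bids (which it attributes to \Cref{thm:Greedy}), and you establish precisely this injection --- via \Cref{thm:MatroidConceal} together with the equal-cardinality padding bookkeeping --- before running the warm-up chain $p_i \le v_i - f \le v_i - r_i \le \vv_i(v_i)$ verbatim. If anything, your derivation of $|L| \le |C|$ is more explicit than the paper's one-line appeal to the augmentation property, and your closing caveats (that the matroid structure is exactly what funds the discounts, and that \Cref{thm:MHRLightTail} is applied only in the regime $v_i \ge r_i$) match the paper's implicit assumptions.
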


We skip the proof of \Cref{thm:MatroidMHR} that proceeds identically to \Cref{thm:BabyFW}, combining \Cref{thm:MatroidBeatCritical} and \Cref{thm:MatroidMoreThanCritical}.

\begin{remark} \label{rem:MatroidPrivate}
Unlike the single-item environment, the largest monopoly reserve is not sufficient for general matroid environments under the private communication model.
\Cref{thm:MatroidMoreThanCritical} is no longer true under private communication.
To see the same, consider the environment with $2$ items and $2$ bidders.
Let the value distributions of both the bidders have a monopoly reserve equal to $1$.
Suppose the auctioneer informs both the bidders of two fabricated bids, one having a value $10$ and the other having a value $1 + \epsilon$, but does not inform the bidders of each other's bids.
If both the bidders have a value greater than $1+\epsilon$, the auctioneer can allocate both of them by hiding just one fabricated bid, the one with value $1+\epsilon$.
Thus, we cannot pretend to give a discount equal to $f$ to all allocated bidders whose values are smaller than their critical bids.

If, on the other hand, the collateral collected from each bid equals $k$ times the largest monopoly reserve (where $k$ is the rank of the matroid), then concealing a bid is similar to giving a discount $f$ to each of the $\leq k$ allocated bidders.
Thus, a collateral equal to $k$ times the largest monopoly reserve is sufficient for the DRA to be credible in the private communication model.
\end{remark}

\subsection{Matroid Feasibility Constraints and $\alpha$-strongly Regular Distributions} \label{sec:MatroidReg}

We show that the DRA is credible for a much larger collateral, for $\alpha$-strongly regular distributions.
We defer the proof to \Cref{appendix:proof-of-MatroidReg}.

\begin{theorem} \label{thm:MatroidReg}
    Consider a matroid feasibility constraint $\Feasibility = M(E, \I)$. Let $D_1, \dots, D_n$ be $\alpha$-strongly regular bidder value distributions with corresponding reserves $r_1, \dots, r_n$ ($0 < \alpha < 1$). Then, any collateral $f$ satisfying
    \begin{equation} \label{eqn:RegColIneq}
        \frac{1}{\alpha} \times \Bigg(\frac{\max_i \{r_i\}}{(1-\alpha) \, f + \alpha \max_i \{r_i\}}\Bigg)^{\frac{1}{1-\alpha}} \times \Big(\frac{f}{\max_i \{r_i\}} \Big) \leq \frac{1}{n}
    \end{equation}
    makes the DRA credible.
\end{theorem}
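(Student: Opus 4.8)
The plan is to follow the two–case decomposition of \Cref{thm:MatroidMHR} verbatim, reusing \Cref{thm:MatroidConceal} and \Cref{thm:MatroidBeatCritical} unchanged and replacing only the below-critical estimate \Cref{thm:MatroidMoreThanCritical}, which is the single place the MHR hypothesis was invoked. As before, I fix a bidder $i$, the truthful bids $\vec{v}_{-i}$ of the others and the fabricated bids $\vec{b}$. \Cref{thm:MatroidConceal} together with \Cref{thm:Greedy} guarantees that to allocate $k$ below-critical real bidders the auctioneer must conceal at least $k$ fabricated bids, so I may again redistribute the burnt collateral as a discount of $f$ to each allocated bidder whose value lies below its critical bid. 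Hence, writing $\critical_i$ for the critical bid when all of $\vec{b}$ is revealed, for $v_i<\critical_i$ the effective (post-discount) payment obeys $\tilde p_i\le (v_i-f)\,x_i(v_i,\vec{v}_{-i},\vec{b}\mid\strategy)$, and the auctioneer profits from including such a bidder only when $v_i>f$.

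The crux is that, unlike the MHR case, the pointwise inequality $(v_i-f)\le\vv_i(v_i)$ fails: $\alpha$-strong regularity gives only $\vv_i(v)\ge\alpha\,(v-r_i)$, equivalently a bound $h_i'(v)\le 1-\alpha$ on the inverse hazard rate $h_i(v):=(1-F_i(v))/F_i'(v)$ (with $F_i'$ the density), together with $h_i(r_i)=r_i$ since $\vv_i(r_i)=0$. So the first genuinely new step is to integrate this differential inequality into a tail bound. Using $\tfrac{d}{dv}\ln(1-F_i(v))=-1/h_i(v)$ and $h_i(v)\le(1-\alpha)v+\alpha r_i$, and noting that $r\mapsto r/\big((1-\alpha)v+\alpha r\big)$ is increasing, I obtain for $v\ge r_i$
\[
1-F_i(v)\;\le\;\big(1-F_i(r_i)\big)\left(\frac{r_i}{(1-\alpha)v+\alpha r_i}\right)^{\frac{1}{1-\alpha}}\;\le\;\left(\frac{\max_i\{r_i\}}{(1-\alpha)v+\alpha\max_i\{r_i\}}\right)^{\frac{1}{1-\alpha}}.
\]

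Next I convert this into a bound on the below-critical excess. Since the auctioneer can charge at most $v_i$, the expected amount by which bidder $i$'s effective payment can exceed its virtual surplus is at most $\E_{v_i\sim D_i}[(v_i-f)^+]=\int_f^\infty\!\big(1-F_i(v)\big)\,dv$; evaluating the antiderivative of $\big((1-\alpha)v+\alpha\max_i\{r_i\}\big)^{-1/(1-\alpha)}$ against the tail bound, and using $f\ge\max_i\{r_i\}$ to simplify $(1-\alpha)f+\alpha\max_i\{r_i\}\le f$, gives
\[
\E_{v_i\sim D_i}\big[(v_i-f)^+\big]\;\le\;\frac{f}{\alpha}\left(\frac{\max_i\{r_i\}}{(1-\alpha)f+\alpha\max_i\{r_i\}}\right)^{\frac{1}{1-\alpha}}.
\]
Hypothesis \eqref{eqn:RegColIneq} is exactly the statement that this per-bidder excess is at most $\tfrac1n\max_i\{r_i\}$.

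Finally I would sum over the $n$ real bidders, following the template of \Cref{thm:BabyFW}: the above-critical contributions are dominated by the honest virtual surplus through \Cref{thm:MatroidBeatCritical}, while the below-critical contributions add a total excess of at most $\max_i\{r_i\}$, which the honest optimal revenue $\OPT$ absorbs, closing the chain $\E[\sum_i\tilde p_i\mid\strategy]\le\E[\sum_i\vv_i(v_i)\,x_i^{\OPT}]=\OPT$ via \Cref{thm:MyersonVV} and the pointwise optimality of $x^{\OPT}$. I expect the tail estimate to be the main obstacle: it is the only essentially new ingredient, and it is what forces a much larger collateral, since the heavy tail of an $\alpha$-strongly regular distribution with $\alpha<1$ makes below-critical inclusion occasionally profitable no matter how the discount is assigned, so no pointwise domination as in the MHR proof is available. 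A secondary delicate point is the aggregation step—arguing that the summed below-critical excess is genuinely dominated rather than merely small—which is precisely where the factor $1/n$ (an effective union bound over the $n$ bidders who might be profitably sneaked in below their critical bids) enters.
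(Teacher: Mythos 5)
Your preparatory work is fine: the discount-redistribution via the matroid injection, the pointwise bound $\tilde p_i \le (v_i-f)\,x_i$ below the critical bid, the tail estimate (which is exactly \Cref{thm:MHRWeirdBound}, imported by the paper from prior work), and the integral computation showing $\E[(v_i-f)^+] \le \tfrac{1}{n}\max_i\{r_i\}$ under \eqref{eqn:RegColIneq} are all correct. The genuine gap is the aggregation step, which you yourself flag as ``delicate'' but then dispose of with the phrase that $\OPT$ ``absorbs'' the excess --- and this is not a detail, it is the heart of the proof. Your chain gives, per bidder, below-critical payment $\le$ below-critical virtual surplus $+\,\tfrac{1}{n}\max_i\{r_i\}$; summing over bidders and adding the above-critical bound from \Cref{thm:MatroidBeatCritical} yields
\begin{equation}
    \notag
    \E\Big[\sum_i p_i(v_i,v_{-i},\vec b\,|\,\strategy)\Big] \;\le\; \E\Big[\sum_i \vv_i(v_i)\,x_i(v_i,v_{-i},\vec b\,|\,\strategy)\Big] \;+\; \max_i\{r_i\}.
\end{equation}
The first term on the right is at most $\OPT$ by pointwise optimality, but it can also \emph{equal} (or approach) $\OPT$ --- for instance under strategies that conceal with vanishingly small probability --- so your chain only establishes strategic revenue $\le \OPT + \max_i\{r_i\}$. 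That does not rule out a profitable deviation, hence does not establish credibility. Pointwise optimality of $x^{\OPT}$ provides no slack with which to absorb an additive constant.

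The paper closes exactly this hole with machinery absent from your proposal. It introduces, for each bidder $i$, the event $\FullRank_i$ (bidder $i$ can be inserted into the optimal allocation excluding $i$ only by displacing a \emph{real} bidder) and observes that when $\FullRank_i$ holds or $\critical_i < f$, concealment cannot raise the revenue extracted from $i$ at all, reducing those cases to the full-reveal virtual surplus. On the complementary event $\EmptyRank_i$ (together with $f \le \critical_i$), the below-critical payments are charged not to $\OPT$ as slack but to \emph{additional feasible virtual surplus}: the paper constructs an explicit randomized allocation that runs the full-reveal allocation $x(\vec v, \vec b)$ and, in addition, selects one bidder uniformly at random among those satisfying $\{f \le \critical_i, \EmptyRank_i\}$, allocating that bidder whenever $v_i \ge r_i$. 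Feasibility of this augmentation is precisely what $\EmptyRank_i$ and \Cref{thm:MatroidConceal} guarantee (the extra bidder can replace an allocated fake bid), the factor $\tfrac{1}{n}$ is the probability of selecting bidder $i$, and the charged quantity is $\tfrac{1}{n}\E[\vv_i(v_i)\,\mathbbm{1}(v_i \ge r_i,\, f \le \critical_i,\, \EmptyRank_i)]$ --- which is why the paper retains the per-bidder factors $r_i \, Pr(v_i \ge r_i)$ and the event probabilities that your unconditional bound $\E[(v_i-f)^+] \le \tfrac{1}{n}\max_i\{r_i\}$ throws away. Since the augmented allocation is feasible, its virtual surplus is at most $\OPT$, and only then does the chain close. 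Without this charging construction (or a substitute argument showing that whenever below-critical excess is collected, the strategy's own virtual surplus falls short of $\OPT$ by at least as much), your proof cannot conclude.
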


\begin{remark}
    For $\gamma = \frac{f}{\max_i \{r_i\}}$, where $r_1, \dots, r_n$ are the monopoly reserves of the value distributions $D_1, \dots, D_n$, the condition in \Cref{eqn:RegColIneq} reduces to
    \begin{equation}
    \notag
        \frac{1}{\alpha} \times \Bigg(\frac{1}{(1-\alpha) \, \gamma + \alpha}\Bigg)^{\frac{1}{1-\alpha}} \times \gamma \leq \frac{1}{n}.
    \end{equation}
    \citet{CFK24} verify that for $\gamma(n, \alpha) = \Big(\frac{n}{\alpha} \Big)^{\frac{1-\alpha}{\alpha}}(1-\alpha)^{-\frac{1}{\alpha}}$, collecting a collateral $f = \gamma(n, \alpha) \cdot \max_i \{r_i\}$ satisfies the inequality in \Cref{eqn:RegColIneq}.
    Therefore, a collateral equal to the largest monopoly reserve multiplied by some polynomial in $n$ (for a fixed $\alpha$) is sufficient for the DRA to be credible.
    However, choosing $f = \gamma(n, \alpha) \cdot \max_i \{r_i\}$ might be much larger than necessary.
    For example, as $\alpha$ tends to $1$, $\gamma(n , \alpha) \xrightarrow{} \infty$.
    However, from \Cref{eqn:RegColIneq}, it can be observed that the smallest $\gamma \geq 1$ required to satisfy the inequality is decreasing in $\alpha$.
    While we do not focus on obtaining tighter closed form expressions for a sufficient collateral $f$ for $\alpha$-strongly regular distributions, we emphasis that scaling the largest monopoly reserve by some polynomial in $n$ is sufficient.
\end{remark}

\begin{remark}
    For regular value distribution that are not $\alpha$-strongly regular for any $\alpha > 0$, \citet{CFK24} show that the DRA is not credible for any collateral $f$, even for the single-item environment with iid bidders.
    Consequently, our findings for matroid feasibility constraints cannot be extended beyond $\alpha$-strongly regular value distributions.
\end{remark}

\section{Lower Bounds} \label{sec:LowerBounds}

In this section, we prove lower bounds on the deferred-revelation auction for MHR distributions.
We argue that our results are tight in the following ways.
The DRA is not credible for all MHR distributions when (a) the feasibility constraint is not a matroid, irrespective of the deposits collected from the bidders, (b) the collateral is smaller than the largest monopoly reserve and finally, when (c) the auctioneer uses private channels to communicate with the bidders, but collects the same deposits as suggested for communication over a public ledger. 

\subsection{Credibility Beyond Matroids}
\label{sec:lowerbound-beyond-matroids}

In this section, we consider the DRA for feasibility constraints beyond matroids.
Irrespective of the feasibility constraint and the collateral collected, we argue that the auctioneer will be able to increase its revenue by reporting a non-matroid feasibility constraint $\hat{\Feasibility}$ to the DRA.

\begin{theorem} \label{thm:CreativeConstraint}
    When the auctioneer can report non-matroid feasibility constraints $\hat{\Feasibility}$ to the ledger,
    there exist MHR distributions for which the DRA is not credible irrespective of the (downward-closed) feasibility constraint $\Feasibility$ or the collateral $f$.
\end{theorem}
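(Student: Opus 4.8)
The plan is to falsify credibility by breaking the single structural fact that powered the upper bound. In the matroid proof of \Cref{thm:MatroidMoreThanCritical}, \Cref{thm:Greedy} guarantees that the auctioneer must conceal at least $k$ fabricated bids in order to allocate $k$ extra real bidders below their critical values; this is what let us hand each below-critical winner its own private discount of $f$ and conclude $p_i \le \vv_i \, x_i$. A non-matroid $\hat{\Feasibility}$ is precisely an object where the augmentation property of \Cref{def:Matroids} fails, so there are feasible sets $W, \hat{W}$ with $|W| < |\hat{W}|$ that cannot be bridged one element at a time. I would exploit exactly this gap: arrange that concealing a single fabricated bid simultaneously unlocks two real winners, so the shared discount $f$ is no longer enough to absorb both reserves, and total payments overshoot total virtual surplus. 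This is the public-ledger analogue of the phenomenon already flagged in \Cref{rem:MatroidPrivate}, now realized through a genuine non-matroid rather than through private channels.

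Concretely, I would report the minimal downward-closed non-matroid $\hat{\I} = \{\emptyset, \{x\}, \{1\}, \{2\}, \{1,2\}\}$ on a fabricated element $x$ and two real bidders $1, 2$, with $\{x,1\}, \{x,2\} \notin \hat{\I}$, so that the singleton fake $x$ blocks the pair $\{1,2\}$ whenever its reported virtual value is large. The auctioneer sets $\hat{\vv}_x$ high enough that revealing $x$ makes $\{x\}$ the virtual-surplus maximizer and pushes both real bidders below their reveal-all critical bids $\critical_1, \critical_2$; concealing the single element $x$ then makes $\{1,2\}$ the maximizer and allocates both. This is a conceal-one/unlock-two event, exactly the configuration forbidden for matroids by \Cref{thm:Greedy}. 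The implemented real winner set is $\{1,2\}$, which is legal under any downward-closed $\Feasibility$ that permits two simultaneous winners; when $\Feasibility$ is more restrictive I would embed the two bidders at any co-feasible pair of elements, so the deviation respects the true constraint irrespective of $\Feasibility$.

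To convert this into strict profit I would compare the deviation's revenue against $\OPT$ through \Cref{thm:MyersonVV}. On the event that both bidders are below their reveal-all critical bids but are charged close to their values, the two unlocked winners can only share one penalty of $f$, whereas sustaining $p_i \le \vv_i \, x_i$ for each of them separately would require a total discount of $r_1 + r_2$. Choosing MHR distributions whose monopoly reserves satisfy $r_1 + r_2 > f$ — always achievable by shifting the distributions upward, which is why the construction is uniform in $f$ — forces the combined payment to strictly exceed the combined (ironed) virtual surplus on a positive-probability event, so the deviation beats honest revenue. Since a suitable instance exists for every collateral $f$, no choice of $f$ can restore credibility once non-matroid reports are allowed.

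The step I expect to be the main obstacle is the bookkeeping that makes the surplus survive netting against the penalty $f$ under the individual-rationality ceiling $p_i \le v_i$. Because MHR gives $v_i - r_i \le \vv_i(v_i)$, no single payment can exceed its own virtual value by more than its reserve, so the argument cannot rely on one payment overshooting; it must rely on two winners splitting a single penalty. I would therefore isolate a concrete event (both reported virtual values nonnegative, but their sum below $\hat{\vv}_x$, and true values large enough that concealment is strictly optimal for the auctioneer), verify on that event both that concealing strictly dominates revealing and that the shared-discount inequality $f < r_1 + r_2$ makes the pointwise comparison strict, and then take expectations to contradict the $p_i \le \vv_i \, x_i$ accounting underlying \Cref{thm:MatroidMoreThanCritical}.
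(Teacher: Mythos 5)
There is a genuine gap, and it sits exactly in the step you flagged as the main obstacle: your deviation never actually collects more than the honest auction does. In your construction, once the blocker $x$ is concealed, the residual feasibility structure over the real bidders is $\{\emptyset,\{1\},\{2\},\{1,2\}\}$, in which bidders $1$ and $2$ do not compete with each other; by \Cref{thm:Myerson}, each is charged its critical bid, which is just its monopoly reserve $r_i$ --- not ``close to their values'' as your third paragraph assumes. So on the event that both bidders clear their reserves, concealing yields revenue $r_1 + r_2 - f$ versus honest revenue $r_1 + r_2$, and on any event where $x$ is revealed the revenue from real bidders is $0$. Every branch of the deviation is pointwise dominated by honesty, so it proves nothing. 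The underlying conceptual error is that you argue the bookkeeping of \Cref{thm:MatroidMoreThanCritical} (one penalty per below-critical winner) breaks for non-matroids --- which is true --- and then conclude non-credibility. Breaking an inequality used in the sufficiency proof does not exhibit a profitable deviation: to contradict credibility you must produce a strategy whose expected revenue strictly exceeds the honest optimum, and ``conceal one to unlock two'' can never do this, because the unlocked bidders pay at most what honesty would have extracted from them anyway, while the auctioneer additionally pays the penalty.

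The paper's construction is built around the opposite failure: concealment must \emph{raise} a real bidder's critical bid, i.e., the property that fails is \Cref{thm:MatroidConceal}, not the penalty count. Concretely, the paper fabricates two bids of value $f+2$: a free rider $x$ that is feasible alongside every feasible set, and a blocker $y$ that is feasible only alone. Revealing both is revenue-neutral (the free rider's virtual value covers the blocker's, so bidder $i$'s critical bid remains the reserve $1$), while concealing only $x$ activates $y$ as phantom competition and lifts bidder $i$'s payment from $1$ to $f+2$; the gain $f+1$ exceeds the penalty $f$ for every $f$, which is what makes the result hold irrespective of the collateral. Note finally a quantifier problem in your last step: the theorem fixes the MHR distributions before $f$ (the paper uses exponentials throughout and lets the \emph{fake bids} scale as $f+2$), whereas you shift the distributions so that $r_1 + r_2 > f$, which only yields the weaker statement that for every $f$ there exist distributions breaking credibility.
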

\begin{proof}
    Let the bidders' values all be drawn iid from the exponential distribution with an expected value $1$, a cumulative distribution function $1 - e^{-t}$ and a density $e^{-t}$.
The virtual value for a bidder with value $v$ is given by $\vv(v) = v-1$, and thus, the monopoly reserve equals $1$.

Consider the following deviation by the auctioneer.
The auctioneer fabricates the existence of two bidders $x$ and $y$, and reports the feasibility constraint $\hat{\Feasibility} = \{W \cup \{x\}, W : W \in \Feasibility \} \cup \{y\}$.
In other words, any feasible set $W$ under $\Feasibility$ remains feasible under the constraint $\hat{\Feasibility}$.
In addition, the bidder $x$ can be allocated alongside any feasible set $W$ without violating $\hat{\Feasibility}$.
However, if user $y$ is allocated, none of the remaining bidders can be allocated.

In the initialization phase, the auctioneer reports that the values of the bidders $x$ and $y$ are drawn from the exponential distribution with mean $1$, similar to the real bidders.
For a given collateral $f$, the auctioneer fabricates bids $b_x = b_y = f+2$ for both the bidders $x$ and $y$.

In the revelation phase, the auctioneer chooses a real bidder $i$ such that $\{i\}$ is feasible (such an $i$ exists since $\Feasibility$ is downward closed).
If the auctioneer observes a bidder $j \neq i$ with a value $v_j$ larger than $1$, the auctioneer reveals both the bids $b_x$ and $b_y$.
Note that fabricating and revealing the bids $x$ and $y$ changes neither the virtual-value-optimal allocation, nor the payments collected from the bidders.
The auctioneer does not pay any penalty since it reveals both the fabricated bids.

Suppose that all bidders apart from $i$ bid below the monopoly reserve $1$.
The auctioneer then deviates to the following strategy.
It reveals both $b_x$ and $b_y$ when bidder $i$ bids $v_i \leq f+2$.
As in the previous case, the allocation and payments remain unchanged and thus, the auctioneer's revenue does not decrease from being strategic.

However, whenever $v_i > f+2$, the auctioneer conceals $b_x$ and reveals $b_y$.
Since all other bidders bid below the monopoly reserve, and bid $x$ is concealed, the auction reduces to a second-price auction between the bidders $i$ and $y$.
Thus, bidder $i$ is charged $b_y = f+2$.
On the other hand, the auctioneer conceals a bid, and therefore pays a penalty $f$.
Its net revenue equals $2$.
However, when the auctioneer is truthful, bidder $i$ is allocated and is charged the monopoly reserve $1$, which is smaller than the revenue from the strategic deviation.
\end{proof}

While modelling the DRA beyond single-item settings, we assumed that the ledger does not know the feasibility constraint, and the auctioneer can be strategic in informing the ledger.
Another reasonable model could be where the ledger inherently knows the feasibility constraint, but does not know the number of bidders participating in the auction.
For example, in a blockchain with multi-dimensional resources, the ledger can ensure that the quantity of each resource consumed by the set of allocated transactions is at most the budget for the corresponding resource. However, the ledger might still not know the number of transactions beforehand and also cannot distinguish between real and fabricated transactions.

Even for the alternate model where the auctioneer cannot misreport feasibility constraints, we argue that the DRA need not be credible for all MHR distributions for any non-matroid feasibility constraint $\Feasibility$ irrespective of the collateral.

\begin{theorem} \label{thm:non-matroid-impossibility}
    For any fixed downward-closed non-matroid feasibility constraint $\Feasibility$, the DRA is not credible even with a single real bidder whose value is drawn from the exponential distribution with mean $1$ irrespective of the collateral $f$.
\end{theorem}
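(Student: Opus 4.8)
The plan is to exhibit, for any fixed downward-closed non-matroid $\Feasibility$, a profitable deviation for the auctioneer that uses a single real bidder $e$ plus fabricated bids, mirroring the gadget in the proof of \Cref{thm:CreativeConstraint} but built entirely out of the combinatorial structure of $\Feasibility$ itself (since the constraint may no longer be misreported). The engine is the failure of the augmentation property: because $\Feasibility$ is a downward-closed non-matroid, there is a ground subset $A$ with two \emph{maximal} feasible subsets of unequal size, equivalently a pair of feasible sets $W,\hat W$ with $|\hat W|=|W|+1$ admitting no augmenting element. From such a witness I would isolate three objects: the real-bidder element $e$, a \emph{partner} set $R$ with $\{e\}\cup R$ feasible, and a \emph{blocker} set $S$ with $S$ feasible but $\{e\}\cup S$ infeasible, chosen so that the fabricated elements $R\cup S$ admit \emph{no} feasible subset of size $|R|+1$. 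The last property is exactly what a matroid forbids (compare \Cref{thm:MatroidConceal}) and is the whole point: the fakes can never reconstruct the feasible set $\{e\}\cup R$ without $e$.

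With the gadget fixed, the auctioneer reports (say) exponential priors for its fakes and sets every fabricated bid in $R\cup S$ to a common huge value $H$, so each has reported virtual value $H':=H-1$, while the real bidder has $\vv(v_e)=v_e-1$ and monopoly reserve $1$. I would then analyze the auctioneer's two relevant revelation options for each realized $v_e$. Revealing all fakes: the best $e$-free feasible set among the fakes has surplus at most $|R|H'$ by the tightness property, so $\{e\}\cup R$ (surplus $\vv(v_e)+|R|H'$) wins exactly when $\vv(v_e)\ge 0$, allocating $e$ at critical bid $1$ with no penalty. Concealing $R$ and revealing $S$: since $\{e\}\cup S$ is infeasible, $e$ must out-bid the blocker surplus, so its critical bid is at least $1+H'$; the auctioneer then collects at least $1+H'$ minus the penalty $|R|f$ for the concealed partners. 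Taking $H'=|R|f+1$ makes concealment strictly dominate whenever $v_e$ clears the raised critical bid $\critical_B\ge 1+H'$.

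Integrating the pointwise-optimal choice over $v_e\sim\mathrm{Exp}(1)$ gives strategic revenue $e^{-1}+e^{-\critical_B}\big(\critical_B-|R|f-1\big)$, whose second term is strictly positive by the choice of $H'$, hence strictly exceeds the honest revenue $e^{-1}$ for \emph{every} collateral $f$. This establishes non-credibility. I would close by remarking that this is precisely the threshold phenomenon: the reveal-all regime recovers the honest reserve only because the fakes cannot substitute for $e$, which fails the instant augmentation fails, complementing the positive results of \Cref{thm:MatroidMHR}.

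The main obstacle is the combinatorial extraction of the gadget with its tightness guarantee, namely that revealing every fake still leaves $e$ allocated at its reserve. An arbitrary non-matroid may have many overlapping maximal feasible sets, so a crude choice of $R$ and $S$ can allow some all-fake feasible subset of $R\cup S$ of size $|R|+1$, which would let the fakes win outright, strand $e$ in the low regime, and destroy the comparison. I expect to resolve this by passing to a \emph{minimal} witness (the ground set $A$ of smallest size carrying two maximal feasible subsets of unequal cardinality, forcing $A=B_1\cup B_2$) and arguing that one may pick $e$ lying in a maximum feasible subset but indispensable to it, so that deleting $e$ strictly lowers the maximum feasible size on $A$; setting $R$ to complete $e$ to this maximum set and $S$ to the smaller maximal set then yields the required tightness together with $\{e\}\cup S$ infeasible. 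Verifying the indispensability of $e$ and controlling the overlaps $W\cap\hat W$ is the delicate point where the non-matroid hypothesis is used in full.
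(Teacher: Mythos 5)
Your proposal is correct, and while it follows the same high-level playbook as the paper's proof (a combinatorial witness of augmentation failure, large fabricated bids on it, concealing the partner set exactly when the lone real bidder's value clears the inflated critical bid, and integrating against the exponential distribution as in \Cref{thm:CreativeConstraint}), both technical ingredients are genuinely different, and each difference buys something. For the gadget, the paper proves \Cref{lem:non-matroid} by induction on the ground set using the family $\mathcal{V}$ of violating pairs and an intersection-maximization argument; you instead take a minimum-cardinality witness $A$, and your flagged ``delicate point'' does resolve cleanly: for every maximum feasible subset $C$ of $A$ and your smaller maximal set $S=B_2$, minimality forces $C\cup B_2=A$ (else delete a point outside the union and get a smaller witness), so if the intersection $K$ of all maximum feasible subsets were contained in $B_2$, then $A=K\cup\bigcup_C(A\setminus C)\subseteq B_2$ would make $B_2=A$ feasible, a contradiction; any $e\in K\setminus B_2$ is your indispensable element, $e\notin B_2$ gives $\{e\}\cup S$ infeasible by maximality of $B_2$, and $R=C\setminus\{e\}$ completes the gadget. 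For the auction analysis, your calibration $H'=|R|f+1$ combined with the one-sided bound that the post-concealment critical bid is at least $1+H'$ is strictly more robust than the paper's argument: the paper fixes fake bids at $f+2$ and asserts the critical bid equals $|X|(f+1)+1$, which presumes the real bidder cannot be feasibly allocated together with a nonempty proper subset of $Y$; that property is not implied by the statement of \Cref{lem:non-matroid} (such mixed sets have cardinality at most $|X|$, so the lemma does not exclude them, and when one of size $|Y'|\geq 1$ exists the paper's deviation nets only $|X|-|Y'|+1-|Y'|f<1$ for large $f$), and holds only because the lemma's inductive construction happens to produce it. Your choice of $H'$ makes the profit comparison immune to this issue, since displacing even a single blocker already repays the entire penalty $|R|f$ with a surplus of $2$.
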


We defer the proof of \Cref{thm:non-matroid-impossibility} to \Cref{appendix:proof-of-CreativeConstraint}.

\subsection{A Tight Lower Bound for MHR Distributions} \label{sec:LowerMHR}

In this section, we prove that collecting the largest monopoly reserve as collateral is necessary for the DRA to be credible.
This complements \Cref{thm:MatroidMHR}, which proves that the largest monopoly reserve is also sufficient.

We first consider the single-item environment with a single participating bidder.
As we will see, the lower bound for the single-bidder environment immediately extends to single-item multi-agent environments.
The strategy for the single-item single-bidder environment is also quite useful in establishing lower bounds for iid environments too, albeit requiring a more careful analysis.
We prove that the collateral needs to be at least the monopoly reserve for the DRA to be credible for two iid environments, the $k$-item $k$-bidder setting and the single item $n$-bidder setting.
Thus, the necessary collateral is large in both cases, when there is high supply and when there is high competition.

\begin{lemma} \label{thm:11MHR}
    Let the bidder's valuation be drawn from the exponential distribution with expected value $1$ in a single-item single-bidder environment.
    Then, for any $\epsilon > 0$, the DRA is not credible for a collateral $f = 1 -\epsilon$.
\end{lemma}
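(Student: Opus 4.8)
The plan is to exhibit a single fabricated (shill) bid together with a concealment rule that strictly improves on the honest revenue whenever $f = 1 - \epsilon < 1$. First I would record the honest benchmark: since the lone bidder's value is exponential with mean one, its virtual value is $\varphi(v) = v - 1$ and its monopoly reserve is $r = 1$, so the revenue-optimal DRA simply posts price $1$, allocating if and only if $v \geq 1$ and collecting expected revenue $\int_1^\infty e^{-v}\,dv = e^{-1}$.

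Next I would describe the deviation. In the initialization phase the auctioneer fabricates one bid of value $b = 1 + \delta$ for a small $\delta > 0$ to be fixed later (reporting the same exponential prior for it). After observing the real bidder's revealed value $v$, the auctioneer follows the rule: reveal the fake bid when $v \geq b$; conceal it when $1 \leq v < b$; and reveal it when $v < 1$. I would then argue region-by-region what each choice yields. (a) For $v \geq b$ the revealed auction is a second-price-with-reserve contest between virtual values $v - 1$ and $b - 1$; the real bidder wins and pays the critical bid $b$, with no penalty, so revenue is $b$. (b) For $1 \leq v < b$, revealing would let the fake bid win (virtual value $b-1 > v-1 \geq 0$) and earn nothing, so the auctioneer instead conceals the single fake bid, the real bidder wins at the reserve $1$, and after the penalty $f$ the net is $1 - f = \epsilon$. (c) For $v < 1$ the auctioneer reveals, the fake bid wins, and revenue is $0$, matching the honest outcome.

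Then I would compute the deviation's expected revenue against the honest benchmark:
\[
\Delta(b) = b\,e^{-b} + \epsilon\,(e^{-1} - e^{-b}) - e^{-1} = (b - \epsilon)\,e^{-b} + (\epsilon - 1)\,e^{-1}.
\]
I would check $\Delta(1) = 0$ and compute $\Delta'(b) = e^{-b}\,(1 - b + \epsilon)$, so that $\Delta'(1) = \epsilon\, e^{-1} > 0$. Since $\Delta$ vanishes at $b = 1$ and has strictly positive derivative there, every $b = 1 + \delta$ with $\delta$ small enough (concretely any $0 < \delta < \epsilon$) gives $\Delta(b) > 0$, so the deviation strictly beats the honest revenue and the DRA is not credible for $f = 1 - \epsilon$.

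The only delicate points, which I would verify carefully, are the bookkeeping in region (b): that concealing exactly one fabricated bid costs precisely the penalty $f$ (justifying the $1 - f = \epsilon$ net), and that revealing is genuinely dominated there because a revealed fake bid with strictly higher virtual value captures the item and yields no revenue. Everything else is an elementary one-dimensional calculus check; I expect no serious obstacle beyond confirming that the stated concealment rule is the auctioneer's best response in each region, so that the three-way split of expected revenue is valid and the comparison against the posted-price benchmark $e^{-1}$ is legitimate.
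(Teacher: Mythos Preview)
Your proposal is correct and follows essentially the same approach as the paper: fabricate a single bid $1+\delta$, conceal it precisely when the real bidder's value falls in $[1,1+\delta)$, and show the resulting revenue difference is positive for small $\delta$. The only cosmetic difference is in the final calculus step: the paper rewrites the difference as $\delta\,e^{-(1+\delta)}-(1-\epsilon)(e^{-1}-e^{-(1+\delta)})$ and argues the ratio $\tfrac{\delta\,e^{-(1+\delta)}}{e^{-1}-e^{-(1+\delta)}}\to 1$ as $\delta\to 0$, whereas you observe $\Delta(1)=0$ and $\Delta'(1)=\epsilon\,e^{-1}>0$, which additionally gives the explicit range $0<\delta<\epsilon$.
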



\begin{proof}
    Suppose that the DRA collects a collateral $1-\epsilon$.
    Consider the auctioneer deviating by inserting a fake bid $1 + \delta$ (we will determine $\delta$ later) and concealing it if the bidder's value is between $1$ and $1+\delta$.
    We will show that the above deviation is profitable by explicitly computing the difference between the revenue from the above deviation and from being honest.

    When the bidder has a value at least $1$ (i.e, the monopoly reserve), it obtains a revenue $1$ when being strategic and when being honest.
    For the bidder's value greater than or equal to $1+\delta$ (which happens with a probability $e^{-(1+\delta)}$), the auctioneer receives an additional $\delta$ from being strategic.
    When the value $v_i \in [1, 1+\delta]$ (which happens with a probability $e^{-1} - e^{-(1+\delta)}$), the auctioneer conceals its fabricated bid and receives a penalty $1 - \epsilon$.

    Summarizing the above, the expected difference in revenue from deviating and being honest equals $\delta \, e^{-(1+\delta)} - (1-\epsilon) \, (e^{-1} - e^{-(1+\delta)})$
    which is positive if and only if the auctioneer's deviation is profitable.
    Rearranging terms in the inequality, the deviation is profitable exactly when
    $$\frac{\delta \, e^{-(1+\delta)}}{(e^{-1} - e^{-(1+\delta)})} > (1-\epsilon).$$
    As $\delta \xrightarrow{} 0$, the left hand side approaches $1$.
    Thus, there exists a sufficiently small $\delta$ such that the auctioneer's deviation is profitable.
\end{proof}

\begin{theorem} \label{thm:nnMHR}
    For a single-item environment with bidder values drawn from MHR values distributions $D_1, \dots, D_n$ with corresponding reserves $r_1, \dots, r_n$, a collateral of at least $\max_i \{r_i\}$ is necessary for the DRA to be credible.
\end{theorem}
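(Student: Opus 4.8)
The plan is to reduce the multi-agent statement to the single-bidder deviation of \Cref{thm:11MHR}, applied to the bidder $i^\ast$ attaining the largest reserve $r_{i^\ast} = \max_i r_i$. First I would record the obvious generalization of \Cref{thm:11MHR} to an arbitrary MHR distribution $D$ with reserve $r$: inserting a lone fake bid at value $r+\delta$ and concealing it exactly when the real value lies in $[r, r+\delta)$ yields expected gain $\delta\,(1-F(r+\delta)) - f\,(F(r+\delta)-F(r))$ over honest behaviour, so the deviation is profitable as soon as $\delta\,(1-F(r+\delta))/(F(r+\delta)-F(r)) > f$. Letting $\delta \to 0$, the left-hand side tends to $(1-F(r))/f(r)$, which equals $r$ because $\vv(r)=0$ at the monopoly reserve. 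Hence for any $f < r$ a small enough $\delta$ makes the single-bidder deviation strictly profitable.

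For the $n$-bidder auction I would insert a single fabricated bid of value $r_{i^\ast}+\delta$, reported to be drawn from $D_{i^\ast}$ (so its reported virtual value is $\phi^\ast := \vv_{i^\ast}(r_{i^\ast}+\delta) > 0$), report every genuine distribution and the true feasibility constraint honestly, and in the revelation phase let the auctioneer play its revenue-maximizing best response: reveal the fake bid unless concealing it strictly increases revenue. The key structural fact is that, since $\phi^\ast \to 0^+$ as $\delta \to 0$, the fake bid can be the unique highest reported virtual value only when at most one genuine bidder exceeds its own reserve. I would therefore partition the value space according to how many real bidders clear their reserves.

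Restricted to the event $B_m$ that bidder $m$ is the only possible bidder above its reserve (all $j\neq m$ have $v_j < r_j$, hence negative virtual value), the honest auction is precisely a single-item single-bidder auction for $m$ with reserve $r_m$, and the fake bid reduces the situation to the generalized single-bidder analysis above: bidder $m$ pays the inflated critical bid $\vv_m^{-1}(\phi^\ast)$ when $v_m$ clears that threshold, while in the sliver below it the auctioneer conceals only when the preserved sale $r_m$ exceeds $f$. A first-order expansion in $\delta$ gives a per-bidder contribution proportional to $\phi^\ast\, f_m(r_m)\,\max(r_m - f, 0)$, using $1-F_m(r_m) = r_m f_m(r_m)$; this is nonnegative for every $m$ and strictly positive for $m = i^\ast$ since $r_{i^\ast} > f$. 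On the complementary event that two or more bidders clear their reserves, the fake bid cannot win (two genuine virtual values exceed it for small $\delta$), so the auctioneer always reveals, never pays a penalty, and the added competitor can only raise payments --- the deviation is weakly revenue-improving there. Summing the strictly positive $i^\ast$ term against nonnegative contributions elsewhere shows the deviation beats honest revenue, so $f \ge \max_i r_i$ is necessary.

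The main obstacle I anticipate is the accounting on the sliver regions and, above all, getting the concealment best response right. A naive bound that charges the full penalty $f$ whenever the fake bid would win overcounts the loss and can even flip the sign of the net gain when some competitor has a reserve below $f$; the correct per-bidder loss is $\min(r_m, f)$, because the auctioneer simply declines to conceal (forfeiting only the cheap sale $r_m$) whenever $r_m < f$. It is exactly this $\max(r_m - f, 0)$ cancellation that keeps every competitor's contribution nonnegative and localizes the strict gain to bidder $i^\ast$. The remaining care is to verify that the measure-$O(\delta)$ slivers --- where a genuine virtual value lands within $\phi^\ast$ of the fake bid, or where ties occur --- contribute only $o(\phi^\ast)$ and so cannot overturn the strictly positive leading term.
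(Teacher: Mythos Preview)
Your route differs substantially from the paper's. The paper treats the theorem as a \emph{tightness} statement and proves it by exhibiting one instance: a single ``large'' exponential bidder together with $n-1$ bidders having deterministic value $\eta$, with $\eta$ taken small enough that the honest revenue is at most $\tfrac{1}{e}+\eta$. The deviation of \Cref{thm:11MHR}, applied to the large bidder alone, then beats that bound for small $\eta$, and the argument is finished in a few lines. You instead attempt the stronger universal claim that \emph{every} MHR profile $D_1,\dots,D_n$ requires collateral $\max_i r_i$.

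For that stronger target there is a genuine gap. Your strictly positive leading-order term lives on the event $B_{i^\ast}$ that every $j\neq i^\ast$ falls below its own reserve, but $\Pr(B_{i^\ast})$ can vanish: any MHR distribution whose support begins at or above its monopoly reserve (for example $D_j=\mathrm{Unif}[1,2]$, where $r_j=1$) gives $\Pr(v_j<r_j)=0$, so the sum of your per-bidder contributions is identically zero to first order and the argument yields nothing. A correct universal proof would need a different decomposition that tracks the sliver where the \emph{second}-highest genuine virtual value lies in $[0,\phi^\ast)$, rather than conditioning on all competitors being below reserve. A smaller point: your assertion that ``the fake bid cannot win when two or more bidders clear their reserves'' is false --- two genuine virtual values can both sit in $(0,\phi^\ast)$ --- though you correctly anticipate that this is an $O(\delta^2)$ event and does not disturb the leading order on its own. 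If all you need is the paper's existential statement, the paper's direct single-bidder reduction is both shorter and avoids these issues.
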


Next, we focus on iid environments.
We show that restricting bidders to have values drawn from the same distribution (which is a reasonable assumption in many blockchain application with pseudonymous identities) does not allow a smaller collateral to be sufficient.
We will prove the above results in the $k$-item $k$-bidder environment and the single-item $n$-bidder environment.

\begin{theorem} \label{thm:kkMHR}
    When bidder values are drawn iid from the exponential distribution with expected value $1$, the DRA is not credible for any fine $f < 1$ in the $k$-item $k$-bidder environment.
\end{theorem}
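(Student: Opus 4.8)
The plan is to reduce the $k$-item $k$-bidder iid setting to the single-bidder deviation from \Cref{thm:11MHR}, exploiting the fact that with exactly $k$ identical items and $k$ bidders, the matroid is free (every bidder can be allocated), so the only way concealment changes the allocation is by trading a fabricated winner for a real one. First I would have the auctioneer fabricate a single fake bid with value $1+\delta$ (for a small $\delta$ to be chosen later), submitting it as a $(k+1)$-st participant under a $k$-uniform matroid reported to the ledger. Since the true supply is $k$ and there are $k$ real bidders, revealing the fake bid can only displace a real bidder whose virtual value is below $\hat\vv(1+\delta)=\delta$, i.e.\ a real bidder with value below $1+\delta$.

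The key calculation is to compare the auctioneer's revenue under the honest mechanism against the deviation where it conceals the fake bid precisely when doing so lets a real bidder with value in $[1,1+\delta]$ win at a higher price. Conditioning on the other $k-1$ real bidders, each real bidder's marginal contribution is governed by whether its value clears the effective reserve; because the items are identical and supply equals demand, the fake bid acts as a reserve of $1+\delta$ against exactly one real slot. I would isolate the single ``marginal'' real bidder $i$ — the one whose value sits near the monopoly reserve — and show that the auctioneer's incentive to conceal the fake bid for $v_i\in[1,1+\delta]$ mirrors the single-bidder computation: concealment yields an extra $\delta$ in price when $v_i\geq 1+\delta$ (probability $e^{-(1+\delta)}$) but costs the penalty $f$ when $v_i\in[1,1+\delta]$ (probability $e^{-1}-e^{-(1+\delta)}$). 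The deviation is profitable whenever
\begin{equation}
\notag
\frac{\delta\,e^{-(1+\delta)}}{e^{-1}-e^{-(1+\delta)}} > f,
\end{equation}
and the left-hand side tends to $1$ as $\delta\to 0$, so any $f<1$ is beaten by a sufficiently small $\delta$.

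The main obstacle, and the place I would spend the most care, is arguing that the presence of the other $k-1$ real bidders does not dilute the single-bidder incentive. Unlike the single-bidder case, concealing the fake bid might interact with several real bidders at once, and one must verify that the auctioneer can always arrange its concealment decision so that the net effect is exactly the displacement of one marginal real bidder at the reserve, leaving all other payments untouched. I would handle this by exploiting the symmetry and the matroid-freeness: with $k$ items and $k$ real bidders plus one fake bid, the virtual-surplus-optimal allocation under the reported $(k+1)$-matroid always drops the single lowest-virtual-value participant, so the auctioneer's concealment choice reduces cleanly to a per-bidder decision, and the displaced bidder is exactly the one whose value lies just above the reserve. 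Once this reduction is justified, summing the per-bidder revenue differences (or, by symmetry and linearity of expectation, focusing on a single representative marginal bidder) recovers the single-bidder inequality above, completing the argument that $f<1$ cannot be credible.
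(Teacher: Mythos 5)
Your deviation is the right one---and the same as the paper's: insert a single fake bid $1+\delta$, reveal it unless every real bidder clears the reserve and the minimum real bid lands in $[1,1+\delta]$, in which case conceal. But the ``key calculation'' you give is the single-bidder calculation from \Cref{thm:11MHR}, and it is not the correct accounting for the $k$-bidder environment. The gain event is not ``$v_i \geq 1+\delta$'' with probability $e^{-(1+\delta)}$: the fake bid only sets prices when \emph{all} $k$ bidders exceed $1+\delta$, an event of probability $e^{-k(1+\delta)}$, and in that event it raises \emph{every} winner's payment from $1$ to $1+\delta$, so the gain is $k\delta$, not $\delta$. Likewise the penalty event is ``the minimum of $k$ iid exponentials lies in $[1,1+\delta]$,'' which has probability $e^{-k}-e^{-k(1+\delta)}$, not $e^{-1}-e^{-(1+\delta)}$. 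If you actually carry out the per-bidder decomposition you sketch, each bidder's payment difference is $\delta\cdot\mathbbm{1}(\text{all } k \text{ bidders} \geq 1+\delta)$, and summing over bidders recovers exactly these $k$-dependent quantities---not the single-bidder inequality you wrote. The theorem still holds because $e^{-k}-e^{-k(1+\delta)} = e^{-k}(1-e^{-k\delta}) \approx k\delta e^{-k}$, so the factor of $k$ in the gain cancels against the factor of $k$ in the loss probability and the ratio $k\delta e^{-k(1+\delta)}/(e^{-k}-e^{-k(1+\delta)})$ still tends to $1$ as $\delta \to 0$. That cancellation is the entire content of the proof, and your proposal asserts it (``the other $k-1$ bidders do not dilute the incentive'') rather than proving it.

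This is not a harmless bookkeeping slip, because the threshold is genuinely sensitive to which joint event triggers concealment. For instance, if the auctioneer concealed whenever \emph{some} bidder falls in $[1,1+\delta]$---dropping the requirement that all bidders clear the reserve---the penalty probability would be $\approx k\delta e^{-1}$ while the gain stays $k\delta e^{-k(1+\delta)}$, and the ratio would tend to $e^{-(k-1)}$ rather than $1$; such a deviation only rules out fines below $e^{-(k-1)}$, far short of the claimed bound. So the step where you ``spend the most care'' is exactly where the proof lives, and the reduction to per-bidder, single-exponential probabilities is the wrong way to discharge it: one must compute the expected difference with the correct joint probabilities, as the paper does, and observe the limit. (A minor wording issue in the same passage: revelation, not concealment, is what yields the extra $\delta$; concealment restores the price $1$ and incurs the fine.)
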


\begin{theorem} \label{thm:1nMHR}
    A fine equal to the monopoly reserve $1$ is necessary for the DRA to be credible in the single-item iid environment where all bidder values are drawn from the exponential distribution with expected value $1$.
\end{theorem}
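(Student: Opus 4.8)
The plan is to generalize the single-bidder deviation of \Cref{thm:11MHR} to $n$ iid bidders by inserting a single fabricated bid. Since the value distributions are iid and regular, the honest mechanism is the second-price auction with reserve equal to the monopoly reserve $1$: the highest real bidder wins provided $v_{(1)} \geq 1$ and pays $\max(1, v_{(2)})$, where $v_{(1)} \geq v_{(2)} \geq \dots$ are the order statistics of the real bids. I would have the auctioneer fabricate one bid $B = 1 + \delta$ (with $\delta > 0$ chosen small at the end) and prescribe the strategy: reveal $B$ unless the top real bid $v_{(1)}$ lies in $[1, 1+\delta)$, in which case conceal it and pay the penalty $f$. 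The goal is to show this strategy strictly beats honesty in expected revenue whenever $f < 1$, for every fixed $n$.

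The next step is a case analysis partitioning on the top order statistics, comparing the deviation revenue $D$ against the honest revenue $H$ pointwise. When $v_{(1)} < 1$, the revealed fake bid wins but collects nothing, matching the honest outcome of no sale, so $D - H = 0$; when $v_{(1)} \geq 1+\delta$ and $v_{(2)} \geq 1+\delta$, the fake bid is never the price-setter and again $D - H = 0$. Two regions remain. In the \emph{gain region} $\{v_{(1)} \geq 1+\delta,\ v_{(2)} < 1+\delta\}$, revealing $B$ raises the winner's payment from $\max(1, v_{(2)})$ to $1+\delta$, so $D - H = (1+\delta) - \max(1, v_{(2)}) > 0$; revealing is optimal here since concealing would forfeit both the boosted price and the penalty. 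In the \emph{loss region} $\{v_{(1)} \in [1, 1+\delta)\}$, concealing reproduces the honest allocation and payment $\max(1, v_{(2)})$ but burns the collateral, giving $D - H = -f$. (As a sanity check, setting $v_{(2)} \equiv 0$ recovers the $n=1$ expression $\delta\, e^{-(1+\delta)} - f(e^{-1} - e^{-(1+\delta)})$ of \Cref{thm:11MHR}.)

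It then remains to evaluate $\Delta := \E[D - H]$ to leading order in $\delta$. For the gain, I would keep only the sub-event $\{v_{(2)} < 1\}$ (exactly one bidder above $1+\delta$, the rest below $1$), where the gain is exactly $\delta$, discarding the nonnegative remainder from $v_{(2)} \in [1, 1+\delta)$; this yields $\E[\text{gain}] \geq \delta\, n\, e^{-(1+\delta)}(1-e^{-1})^{n-1}$. For the loss, $\E[\text{loss}] = f\big[(1-e^{-(1+\delta)})^n - (1-e^{-1})^n\big]$, which by the mean value theorem applied to $t \mapsto (1-e^{-t})^n$ is at most $f\, \delta\, n\, e^{-1}(1-e^{-(1+\delta)})^{n-1}$. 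Combining,
$$\Delta \;\geq\; \delta\, n\, e^{-1}\Big[e^{-\delta}(1-e^{-1})^{n-1} \;-\; f\,(1-e^{-(1+\delta)})^{n-1}\Big].$$
As $\delta \to 0$ the bracket tends to $(1-f)(1-e^{-1})^{n-1}$, which is strictly positive precisely because $f < 1$, so $\Delta > 0$ for all sufficiently small $\delta$, establishing the claim.

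The main obstacle, and the conceptual heart of the argument, is that the gain and loss both concentrate on the single event that the \emph{top} real bidder sits near the reserve $1$, and they share the identical leading coefficient $n\, e^{-1}(1-e^{-1})^{n-1}$ in $\delta$; the entire comparison therefore collapses to the constant $1$ (the marginal price boost) versus $f$ (the forfeited collateral). The delicate bookkeeping is to confirm that this matching of leading coefficients is genuine — that dropping the $O(\delta^2)$ gain terms and over-bounding the loss probability via the mean value theorem both preserve positivity — rather than an artifact that could be swamped by lower-order corrections.
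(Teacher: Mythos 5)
Your proposal is correct and follows essentially the same approach as the paper's proof: the identical deviation (a single fabricated bid $1+\delta$, concealed exactly when the top real bid lands in $[1, 1+\delta)$), the same case decomposition on the top two order statistics, and the same leading-order-in-$\delta$ comparison in which the gain coefficient $\delta\, n\, e^{-1}(1-e^{-1})^{n-1}$ is weighed against $f$ times the matching loss probability. If anything, your write-up is slightly tighter than the paper's: where the paper informally ``pretends'' the case $v_{(2)} \in [1, 1+\delta)$ contributes zero, you correctly discard it as a nonnegative gain, and you make the loss bound explicit via the mean value theorem.
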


We defer the proofs in this section to \Cref{sec:proofofkkMHR}.

\subsection{Separating Public Ledger and Private Communication for MHR Distributions} \label{sec:LowerPrivate}

Observe that the lower bounds discussed in \Cref{sec:LowerMHR} hold even for the private communication model.
However, while the largest monopoly reserve is both necessary and sufficient with an access to a public ledger, we show that the lower bound is not tight with private communication and the DRA would require a collateral strictly greater than the monopoly reserve to be credible.
This separates the smallest sufficient collateral in the public and private communication models.
Recall that we consider the private communication model where the auctioneer can provably burn the collateral associated with concealed bids.
Therefore, the requirement for a larger collateral goes beyond just convincing bidders that the concealed bid was actually penalized.

\begin{theorem} \label{thm:PrivatePublicSep}
    Consider a $k$-agent $k$-item environment where the auctioneer communicates to bidders through private channels.
    When all values are drawn iid from the exponential distribution with expected value $1$, there exists a sufficiently large $k$ such that the DRA is not credible for a penalty equal to the monopoly reserve $1$.
\end{theorem}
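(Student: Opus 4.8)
The plan is to exhibit, for every sufficiently large $k$, a strategic deviation in the private-channel $k$-item $k$-bidder environment that strictly beats the honest revenue even when the collateral is the full monopoly reserve $f=1$, thereby contradicting credibility. Since $f=1$ is exactly the threshold at which the DRA is credible over a public ledger (combine \Cref{thm:kkMHR}, which rules out $f<1$, with \Cref{thm:MatroidMHR}, which shows $f=\max_i\{r_i\}=1$ suffices for this MHR matroid instance), producing a profitable private-channel deviation at $f=1$ is precisely what separates the two communication models. The engine of the deviation is the amortization of the penalty already foreshadowed in \Cref{sec:LowerPrivate} and \Cref{rem:MatroidPrivate}: because a single proof-of-burn can be forwarded to every bidder, the auctioneer may commit one fabricated bid under a shared identifier to all $k$ bidders and, by concealing it \emph{once}, make every bidder's private view treat it as burnt. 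Thus a single concealment of cost $f$ can do the work that, over a public ledger, the augmentation property (\Cref{thm:Greedy}) forces to be paid for by one distinct burnt bid per below-critical winner (this is exactly the discount count that drives \Cref{thm:MatroidMoreThanCritical}).

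Concretely, I would run the single-bidder construction of \Cref{thm:11MHR} in parallel across the $k$ bidders while sharing one burn. For each bidder the auctioneer presents a private view in which that bidder is marginal against a competitor anchored just above the reserve at value $1+\delta$, so that any bidder with $v_i>1+\delta$ is over-charged by $\delta$ relative to the honest reserve payment, precisely as in \Cref{thm:11MHR}; bidders whose realized value lands in the concealment window $[1,1+\delta]$ must instead be un-blocked. Using the adaptive, sequential triggering available in the private model, the auctioneer reveals bidders one at a time, learns their values, and routes a single shared concealment so that it discounts every window bidder at once, rather than paying the penalty separately for each such bidder. Taking expectations over the iid exponential$(1)$ values and summing over bidders, the aggregate over-charge scales like $k\cdot\delta\,e^{-(1+\delta)}$, while the total penalty contributed by the one shared burn is $O(1)$ in $k$ (it is paid at most once, whenever some bidder falls in the window). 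For a fixed small constant $\delta$ the gain is therefore $\Theta(k)$ against an $O(1)$ penalty, so letting $k\to\infty$ makes the net strictly positive, which is exactly why the statement is phrased for sufficiently large $k$.

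The hard part will be the feasibility bookkeeping that reconciles each bidder's inflated private view with a globally feasible allocation of the real winners: to charge bidder $i$ above its reserve, its view must contain enough high fabricated co-winners to make it marginal, yet those co-winners cannot all be honored physically once many real bidders are allocated, so the shared and concealed fabrications must be arranged so that one burn frees the real allocation across all targeted bidders without collapsing their perceived critical bids back to the reserve. I would resolve this by exploiting that in the private model each bidder sees only its own view and never the other real bidders, so the same high fabricated identifiers may serve as co-winners across all views while the shared pivot identifier is the only bid concealed and charged against by the forwarded proof-of-burn; the remaining steps are to verify that the induced real allocation is feasible, that truthful bidding stays each bidder's best response within its own view (so that the per-bidder revenue equivalence of \Cref{thm:MyersonVV} still anchors the comparison to the honest benchmark), and that decoupling the revealed $1+\delta$ anchors from the single concealed blocker leaves the over-charged bidders' payments intact. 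Establishing this decoupling together with the feasibility count is the crux of the argument.
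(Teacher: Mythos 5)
Your proposal matches the paper's proof essentially step for step: the paper likewise places $(k-1)$ large revealed fake bids plus one shared $1+\delta$ pivot identifier in each bidder's private view, conceals that pivot only from bidders whose values land in $[1,1+\delta]$ while the single burned collateral is shared across all of them, and compares the aggregate over-charge $k\,\delta e^{-(1+\delta)}$ against the one-time ``license fee'' of $1$, which is dominated for large $k$. The feasibility/decoupling issue you flag as the crux dissolves immediately in the paper's treatment: fabricated identifiers never need to be physically allocated (only the set of real winners must be feasible, and any $k$ real bidders fit in $k$ items), and since each bidder observes only its own private view, the concealment decision for one bidder leaves every other bidder's critical bid untouched.
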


We defer the proof to \Cref{appendix:proof-of-PrivatePublicSep}.

\section{Ascending Deferred Revelation Auction for Matroids} \label{sec:ADRA}

Even for a single item, \citet{FW20} and \citet{CFK24} show that the DRA is not credible when the bidders' value distributions are not regular.
\citet{EFW22} move beyond the DRA to propose an ascending variant of the deferred-revelation auction which is credible for all distributions.
In this section, we will generalize their mechanism for all matroid feasibility constraints.

\subsection{The Mechanism}

At a high level, the \emph{Ascending Deferred-Revelation Auction (ADRA)} for a single item proposed by \citet{EFW22} implements a ``fast-forward'' version of the ascending price auction.
Similar to the DRA, the ADRA asks bidders to commit to their bids before the auction starts.
However, unlike the ascending-price auction which gradually increases the price of the item until all but one bidders drop out of the auction, the ADRA is aggressive in increasing its prices.
For example, the ADRA can double the price of the item in each iteration.
Note that it is possible that \emph{all} active bidders drop out of the auction simultaneously when the price is updated from say, $p$ to $2p$.
In such scenarios, the ADRA uses the bidders' commitments to decide the winner.
The bidders are asked to reveal their bids $b_1, \dots, b_n$, and the ADRA simulates the ascending-price auction between $p$ and $2p$ with the revealed bids to determine the winner.

Unlike the DRA where all the collateral has to be collected at the start of the mechanism, the ADRA starts by collecting the largest monopoly reserve amongst all value distributions from each bidder, and asks bidders to add to the collateral to proceed to successive stages of the auction.
Intuitively, strategically aborting a bid becomes progressively costly as the auction proceeds, and thus, the auctioneer cannot wait for many rounds before deciding whether to abort any of its fabricated bids.

We extend the ADRA to matroid environments by modifying a surplus-optimal ascending-price auction due to \citet{BdVSV11}.
They consider a notion of matroid feasibility that is much more general than the one we consider.
We discuss a simpler, yet equivalent restatement of their mechanism tailored specifically for our setting.

\begin{algorithm}[Surplus-optimal ascending-price auction $\mechanism$ for matroids, \citealp{BdVSV11}] \label{alg:APA}
For a feasibility constraint $\Feasibility$ containing $n$ bidders:
    \begin{enumerate}
        \item Initialize level $\ell = 0$ and price $p_{\ell} = 0$.
        \item At level $\ell = 0$, all bidders are $\mathsf{competing}$.
        Throughout the auction, bidders are either $\mathsf{competing}$ or have $\mathsf{dropped}$ out from the auction.
        Similarly, bidders have either been $\mathsf{promised}$ an item, or not.
        Let $\competeBid_{\ell}$ and $\promiseBid_{\ell}$ respectively be the set of $\mathsf{competing}$ and $\mathsf{promised}$ bidders at level $\ell$.
        \item \label{Bul:APALoop} While there exists a $\mathsf{competing}$ bidder that has not been $\mathsf{promised}$ an item, i.e, $\competeBid_\ell \not \subseteq \promiseBid_{\ell}$:
        \begin{enumerate}
            \item Update level $\ell \xleftarrow{} \ell+1$.
            \item Increase the price $p_{\ell} \xleftarrow{} p_{\ell-1} + \varepsilon$ for some small $\varepsilon > 0$.
            \item Ask all $\mathsf{competing}$ bidders with a value smaller than $p_{\ell}$ to $\mathsf{drop}$ out from the auction.\footnote{If multiple bidders quit when increasing the price from $p_{\ell}$ to $p_{\ell + 1}$, check whether there exists some subset $S'$ of the bidders that quit such that $S' \cup S \in \Feasibility$ for all feasible allocations $S \subseteq \competeBid_{\ell} \cup \promiseBid_{\ell - 1}$. In other words, identify whether there exists some subset of the bidders that quit that can be allocated irrespective of the allocation to the remaining bidders.
            Allocate all bidders in $S'$ and charge them a price $p_{\ell - 1}$. Update their status to $\mathsf{promised}$.}
            \item Update the set $\competeBid_{\ell}$ of $\mathsf{competing}$ bidders.
            \item \label{Bul:ClinchCheck} For each $\mathsf{competing}$ bidder $i$ such that $S \cup \{i\} \in \Feasibility$ for all feasible sets $S \subseteq \competeBid_{\ell} \cup \promiseBid_{\ell-1}$:
            \begin{enumerate}
                \item Bidder $i$ can be allocated irrespective of how the remaining bidders are allocated.
                $\mathsf{Promise}$ to allocate bidder $i$ at a price $p_{\ell}$.
                \item Update the status of bidder $i$ to $\mathsf{promised}$.
            \end{enumerate}
            \item Update the set $\promiseBid_{\ell}$ of all the $\mathsf{promised}$ bidders.
        \end{enumerate}
        \item Allocate all $\mathsf{promised}$ bidders at the price they were $\mathsf{promised}$ the item.
    \end{enumerate}
\end{algorithm}

While the ascending-price auction $\mechanism$ discussed in \Cref{alg:APA} optimizes surplus, we apply the virtual-pricing transformation of ascending-price auctions (see, for example, \citealp{GH23}) to $\mechanism$ to obtain an ascending-price mechanism that maximizes the ironed virtual surplus.

\begin{definition}[Virtual-pricing transformation of $\mechanism$, \citealp{GH23}]
    For the surplus optimal mechanism $\mechanism$, the virtual-pricing transformation $\hat{\mechanism}$ implements $\mechanism$ in ironed virtual value space, i.e, whenever $\mechanism$ posts a price $p_{\ell}$, $\hat{\mechanism}$ posts a price
    $$\theta^i_{\ell} = \sup \{\theta: \overline{\vv}_i(\theta) \leq p_{\ell}\}$$
    to bidder $i$.
    The mechanism $\hat{\mechanism}$ is identical to $\mechanism$ except for the choice of prices posted to the bidders.
\end{definition}

\begin{theorem} \label{thm:APAOpt}
As $\varepsilon \xrightarrow{} 0$, bidding truthfully (i.e, $\mathsf{competing}$ until the price reaches its value and then $\mathsf{dropping}$ out) is EPIC in the virtual-pricing transformation $\hat{\mechanism}$.
Further, when all bidders participate truthfully,
$\hat{\mechanism}$ optimizes the ironed virtual surplus for all values $v_1, \dots, v_n$ of the $n$ bidders as $\varepsilon$ approaches $0$.
\end{theorem}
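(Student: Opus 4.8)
The plan is to reduce the analysis of $\hat{\mechanism}$ run on the true values to the analysis of the surplus-optimal auction $\mechanism$ of \Cref{alg:APA} run in ironed-virtual-value space, and then import the surplus-optimality and the clinching incentive guarantees of $\mechanism$ from \citet{BdVSV11}. The key structural observation is that the price correspondence $\theta^i_\ell = \sup\{\theta : \overline{\vv}_i(\theta) \le p_\ell\}$ makes a bidder of value $v_i$ facing the value-space price $\theta^i_\ell$ behave exactly like a bidder of weight $\overline{\vv}_i(v_i)$ facing the price $p_\ell$: since $\overline{\vv}_i$ is monotone non-decreasing, $v_i > \theta^i_\ell$ holds if and only if $\overline{\vv}_i(v_i) > p_\ell$, up to the boundary of a flat region of the ironed virtual value. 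Consequently, when every bidder plays the prescribed strategy $s_i$ (compete while the posted price is below $v_i$, then $\mathsf{drop}$), the entire trajectory of $\hat{\mechanism}$ on $(v_1,\dots,v_n)$ --- every $\mathsf{drop}$, every $\mathsf{Promise}$, and the final allocation --- coincides with the trajectory of $\mechanism$ on the transformed profile $(\overline{\vv}_1(v_1),\dots,\overline{\vv}_n(v_n))$ under truthful bidding. I would establish this equivalence level by level as the first step.

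Given the equivalence, the optimality claim is almost immediate: $\mechanism$ selects the feasible set of largest total weight, so on the transformed profile it selects $\arg\max_{S \in \I}\sum_{i\in S}\overline{\vv}_i(v_i)$, the ironed-virtual-surplus-optimal set. I would also check that negative ironed virtual values are handled correctly: since $\mechanism$ starts at price $p_0 = 0$, the effective value-space reserve posted to bidder $i$ is $\theta^i_0 = \sup\{\theta:\overline{\vv}_i(\theta)\le 0\} = r_i$, so a bidder with $v_i < r_i$ never competes and is excluded, matching the downward-closed optimizer, which never allocates a negative ironed virtual value. Because a matroid is downward closed, dropping such elements preserves feasibility, so the max-weight feasible set genuinely maximizes ironed virtual surplus.

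For the EPIC claim I would argue directly that, fixing $v_{-i}$ and the others' truthful strategies, the clinching (promise) price $\theta^\ast_i$ that bidder $i$ faces in $\hat{\mechanism}$ depends only on when the competitors $\mathsf{drop}$ and on the matroid feasibility test in \Cref{Bul:ClinchCheck} (and its analogue when bidders $\mathsf{drop}$), and not on $v_i$ itself. Under $s_i$, bidder $i$ is $\mathsf{promised}$ an item at price $\theta^\ast_i$ exactly when $v_i > \theta^\ast_i$ and then pays $\theta^\ast_i$, yielding utility $(v_i - \theta^\ast_i)^+$. A deviating strategy amounts to an alternative, possibly history-dependent dropout threshold: dropping earlier can only forfeit a profitable item, while persisting past $v_i$ can only cause $i$ to be promised at a price exceeding $v_i$, i.e.\ negative utility; neither improves on $s_i$. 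Equivalently, I would verify that the induced direct allocation is monotone in $v_i$ and charges exactly the critical bid $\critical_i$, so that \Cref{thm:Myerson} certifies the induced direct-revelation rule is DSIC and hence $s_i$ is ex-post optimal. The independence of $\theta^\ast_i$ from $v_i$ is precisely the deferred-acceptance property of $\mechanism$ inherited from \citet{BdVSV11}.

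The main obstacle will be the flat regions of the ironing map $\overline{\vv}_i$, where $\theta \mapsto \overline{\vv}_i(\theta)$ is not injective and $\theta^i_\ell$ jumps across an entire flat segment as $p_\ell$ crosses the common ironed value. On such a segment I must ensure two things: that all values sharing an ironed virtual value are allocated identically, which is exactly the hypothesis under which expected revenue equals expected ironed virtual surplus (as discussed after \Cref{thm:MyersonVV}), and that the discretized ascending dynamics do not open a profitable deviation for a value sitting inside a flat region. I expect to control both effects in the limit $\varepsilon \to 0$: as the grid spacing vanishes, the posted prices $\theta^i_\ell$ converge to the exact thresholds, the selected set converges to the exact ironed-virtual-surplus maximizer, and the payments converge to the exact critical bids, so both the optimality and the exactness of the EPIC property hold as $\varepsilon \to 0$. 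Making the flat-region tie-breaking consistent with the lexicographic rule so that these limits are clean is the delicate part of the argument.
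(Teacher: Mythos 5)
Your proposal is correct and takes essentially the approach the paper intends: the paper offers no explicit proof of \Cref{thm:APAOpt}, deferring instead to \citet{BdVSV11} for the surplus-optimality and clinching incentives of $\mechanism$ and to \citet{GH23} for the virtual-pricing transformation, and your argument is precisely the unpacking of those citations --- the monotone price map $\theta^i_\ell = \sup\{\theta : \overline{\vv}_i(\theta) \le p_\ell\}$ makes truthful play in $\hat{\mechanism}$ on $(v_1,\dots,v_n)$ coincide, level by level, with truthful play of $\mechanism$ on the transformed profile $(\overline{\vv}_1(v_1),\dots,\overline{\vv}_n(v_n))$, from which ironed-virtual-surplus optimality and the EPIC property (via the value-independence of the promise price, or equivalently monotonicity plus critical-bid payments and \Cref{thm:Myerson}) follow. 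Your attention to flat regions of the ironing and to the $\varepsilon \to 0$ limit supplies details the paper leaves implicit, and is handled correctly.
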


We will adapt the virtual-pricing transformation $\hat{\mechanism}$ into an ADRA.
Intuitively, the ADRA implements a ``stop-start'' version of $\hat{\mechanism}$.
In each iteration, the ADRA aggressively increases the virtual price $p^{\vv}_{\ell}$ according to some pricing function $\incr$.
If a set $Q$ of bidders quit when the price increases in iteration $\ell$, the ADRA asks the bidders to reveal their committed bids, and simulates $\hat{\mechanism}$ between the virtual prices $p^{\vv}_{\ell-1}$ and $p^{\vv}_{\ell}$.
The ADRA knows the bids of all the bidders in $Q$ and can simulate $\hat{\mechanism}$ assuming that the bidders would have bid truthfully.
Bidders not in $Q$ will not quit in the simulation too since they decided to remain active in the ADRA even at the virtual price $p^{\vv}_{\ell+1}$.

Note that once the ADRA knows the bids of the users, simulating the outcome of $\hat{\mechanism}$ for $\varepsilon \xrightarrow{} 0$ is quite straightforward.
We know that a bidder $i$ with a virtual value $\vv_i(v_i)$ will remain $\mathsf{active}$ until the virtual price reaches $\vv_i(v_i)$.
Thus, we can arrange bidders in $Q$ in ascending order of virtual values $\vv_t(v_t), \vv_{t+1}(v_{t+1}), \dots, \vv_{t+r}(v_{t+r})$ and directly increase the virtual price to $\vv_i(v_i)$ for $t \leq i \leq t+r$ before finally updating it to $p^{\vv}_{\ell +1}$.

As with the ADRA for single-items, we will collect an additional collateral from the bidders in each successive round.
For a virtual price $p^{\vv}_{\ell + 2}$ in level $\ell + 2$, we will require a total collateral locked-in by each bidder to be $\max_i \vv_i^{-1}(p^{\vv}_{\ell + 2})$ during level $\ell$.
The bidder loses all of its collateral if any evidence of suspicious behaviour is found (for example, the bidder $\mathsf{quit}$, but refused to reveal its bid, or the bidder did not quit as soon as the price surpassed the bid that it committed to), in which case, the bidder is said to have $\mathsf{aborted}$ its bid.

Whenever a bidder $\mathsf{aborts}$ at some level $\ell$, we restart the execution of the virtual-pricing transformation $\hat{\mechanism}$ from scratch until the virtual-price reaches $p^{\vv}_{\ell}$, this time without the bid the aborted bid.
As discussed earlier, the ADRA can use the bids that have already been revealed to determine when the bidders will $\mathsf{quit}$.
All bidders with a virtual value greater than $p^{\vv}_{\ell}$ will not $\mathsf{quit}$ in the fresh simulation of $\hat{\mechanism}$ and thus, the mechanism does not need to know their values for the simulation.

For convenience, we run a fresh simulation of $\hat{\mechanism}$ up to a virtual price $p^{\vv}_{\ell}$ even when bidders $\mathsf{quit}$ and reveal their bids in each level, instead of a ``stop-start'' implementation of $\hat{\mechanism}$.
Since $\hat{\mechanism}$ is deterministic, the states reached by the fresh start and the stop-start implementation will be identical.


\begin{definition}[Ascending Deferred-Revelation Auction (ADRA) for matroids] \label{def:ADRA}
    For a matroid feasibility constraint $M(E, \I)$ and a price update rule $\incr$ such that $\incr(p) > p$, the ADRA is implemented over the following phases:
    \begin{enumerate}
        \item (Announcement phase)
        \begin{enumerate}
            \item The auctioneer announces the commencement of the auction.
            \item The auctioneer learns the distributions $D_1. \dots, D_n$ of the values of the $n$ bidders interested in participating in the auction.
            Let $\overline{\vv}_1, \dots, \overline{\vv}_n$ be the respective ironed virtual value functions.
        \end{enumerate}
        \item (Initialization phase)
        \begin{enumerate}
            \item Bidders $1 \leq i \leq \hat{n}$ confirm participation in the auction by writing their identifier $i$, their bid $b_i$ and $c_i = \comm(i, b_i, r_i)$, their commitment to bid $b_i$ on the ledger for a randomly drawn pad $r_i$.
            Note that the commitments can belong to both, the $n$ interested bidders or can be fabricated bids submitted by the auctioneer.
            \item The ledger learns the distributions of the committed bidders.\footnote{Unlike the DRA where it was sufficient for the ledger to learn only an upper bound on the monopoly reserve of the distributions, we require the ledger to learn the entire distribution of values of the bidders for ADRA. However, the value distribution of the fabricated bids are adversarially determined. The auctioneer can choose distributions for its fake bids that would maximize its revenue from deviating.} The ledger collects a collateral $f_0$ equal to the largest monopoly reserve amongst the value distributions.
            \item All bidders with committed bids can either choose to $\mathsf{abort}$, or submit a collateral as demanded by the ledger and remain $\mathsf{active}$.
            \item For the $\hat{n}$ bids written on the ledger, the auctioneer declares an arbitrary matroid feasibility constraint $\hat{\Feasibility}$.
            \item The auctioneer signals the end of the initialization phase, triggering the ascending-price phase.
        \end{enumerate}
        \item (Ascending-price phase)
        \begin{enumerate}
            \item Initialize level $\ell$ to $0$.
            Further, initialize the virtual price $p^{\vv}_0 = 0$.
            \item Throughout the auction, bidders in the auction will have one of the three states --- $\mathsf{active}$, $\mathsf{quit}$ or $\mathsf{aborted}$.
            Separately, bidders might have been $\mathsf{promised}$ an item or not.
            Let the set of $\mathsf{active}$ bidders at level $\ell$ be $\activeBid_{\ell}$ and the set of promised bidders be $\promiseBid_{\ell}$.
            \item \label{bul:ADRALoop} While there exists an $\mathsf{active}$ bidder that has not been promised an item, i.e, $\activeBid_{\ell} \not \subseteq \promiseBid_{\ell}$:
            \begin{enumerate}
                \item Increment level $\ell \xleftarrow{} \ell + 1$.
                Update virtual price to $p^{\vv}_{\ell} \xleftarrow{} \incr(p^{\vv}_{\ell - 1})$.
                \item \label{bul:ADRAQuit} Ask all bidders with a bid $b_i$ such that $\overline{\vv}_i(b_i) < p^{\vv}_{\ell}$ to $\mathsf{quit}$ and reveal their bids by writing their identifier $i$, bid $b_i$ and the random pad $r_i$ such that $c_i = \comm(i, b_i, r_i)$ on the ledger.
                \item Increase the collateral collected from all $\mathsf{active}$ bidders to $f_{\ell} = \max_i \overline{\vv}^{-1}(\incr^2(p^{\vv}_{\ell}))$, so that when bidders reveal their bids in the next level, their bids will be smaller than the collateral.
                \item A bidder aborts at level $\ell$ if:
                \begin{enumerate}
                    \item Bidder $i$ quits at level $\ell$, but does not reveal its bid or writes a bid $b'_i$ and a random pad $r'_i$ such that $c_i \neq \comm(i, b'_i, r'_i)$.
                    \item Bidder $i$ reveals $b_i$, but the bidder should not have quit during level $\ell$, i.e, $b_i$ does not satisfy
                    $p^{\vv}_{\ell - 1} \leq \overline{\vv}_i(b_i) < p^{\vv}_{\ell}$.
                \end{enumerate}
                \item If there exists a bidder that $\mathsf{quit}$ or $\mathsf{aborted}$ during level $\ell$:
                \begin{enumerate}
                    \item Burn the collateral of all the bidders that $\mathsf{abort}$.
                    \item Simulate $\hat{\mechanism}$ without the $\mathsf{aborted}$ bidders up to a virtual-price $p^{\vv}_{\ell}$ using the revealed bids for bidders that have $\mathsf{quit}$.
                    $\mathsf{Active}$ bidders will not quit in the simulation of $\hat{\mechanism}$ that stops at $p^{\vv}_{\ell}$ and thus, the simulation does not need to know their bid.
                    \item Let $\promiseBid_{\ell}$ be the set of all bidders $\mathsf{promised}$ by the simulation of $\hat{\mechanism}$.
                    Note that $\mathsf{promised}$ bidders do not $\mathsf{quit}$ by default. They $\mathsf{quit}$ only as dictated by \Cref{bul:ADRAQuit}.
                \end{enumerate}
            \end{enumerate}
            \item \label{bul:ADRAFinal} If the set of all $\mathsf{active}$ bidders have been promised an item, i.e, $\activeBid_{\ell} \subseteq \promiseBid_{\ell}$:
            \begin{enumerate}
                \item Increase level $\ell \leftarrow \ell + 1$. This is the final level.
                \item Ask all $\mathsf{active}$ bidders to reveal their bids by writing their identifier $i$, bid $b_i$ and the random pad $r_i$ such that $c_i = \comm(i, b_i, r_i)$ on the ledger.
                \item A bidder aborts at level $\ell$ if:
                \begin{enumerate}
                    \item Bidder $i$ quits at level $\ell$, but does not reveal its bid or write a bid $b'_i$ and a random pad $r'_i$ such that $c_i \neq \comm(i, b'_i, r'_i)$.
                    \item Bidder $i$ reveals $b_i$, but the bidder should not have quit during level $\ell$, i.e, $b_i$ does not satisfy
                    $p^{\vv}_{\ell - 1} \leq \overline{\vv}_i(b_i)$.
                \end{enumerate}
                \item Burn the collateral of all $\mathsf{aborted}$ bidders.
                \item Simulate $\hat{\mechanism}$ without the $\mathsf{aborted}$ bidders up to completion using the revealed bids, allocate bidders and charge payments as dictated by $\hat{\mechanism}$.
            \end{enumerate}
        \end{enumerate}
    \end{enumerate}
\end{definition}


Similar to the DRA, the auctioneer can be strategic in the ADRA by declaring a matroid feasibility constraint $\hat{\Feasibility} \neq \Feasibility$, while the final allocation to the real bidders must be feasible, irrespective of $\hat{\Feasibility}$ and values $v_1, \dots, v_n$ of the real bidders.
Indeed, the auctioneer could also fabricate bids and strategically $\mathsf{abort}$ them based on the bids revealed by the bidders.
It can also be strategic in deciding the value distributions of its fabricated bids.
However, remember that the auctioneer cannot misreport the distributions $D_1, \dots, D_n$ of the values of the real bidders, like in the DRA.
We want to argue that the ADRA is credible i.e, the auctioneer optimizes its expected revenue by not fabricating any bids.

\subsection{Analysis}

Observe that the communication complexity of the ADRA for matroids is at most the communication complexity for the single-item environment.
The competition to get allocated is only lower in general matroid environments in comparison to the single item, and thus, the auction terminates at a lower level than the single-item setting, thereby requiring lower communication.
\citet{EFW22} show that there exists $\incr$ for which the ADRA required only a constant communication in the single-item environment.
For the same choice of $\incr$, the ADRA also requires only a constant communication for arbitrary matroid environments.

\begin{theorem}
    There exists $\incr$ such that the expected number of levels of the ADRA is $O(\log (\E[\max_{i \in [n]} v_i]))$.
\end{theorem}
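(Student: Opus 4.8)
The plan is to choose a geometrically growing price-update rule and to argue that the virtual price escapes the range of the bidders' virtual values after only logarithmically many levels, then pull the logarithm outside the expectation with Jensen's inequality. Concretely, I would take $\incr(p) = 2p + 1$, so that starting from $p^{\vv}_0 = 0$ the virtual price at level $\ell$ is exactly $p^{\vv}_\ell = 2^\ell - 1$. This grows like $2^\ell$ and satisfies the required monotonicity $\incr(p) > p$.

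The first step is to bound the number of levels pointwise, for a fixed value profile $(v_1, \dots, v_n)$. Write $\Phi = \max_i \overline{\vv}_i(v_i)$ and $V = \max_i v_i$; since an ironed virtual value never exceeds its value, $\Phi \le V$. I would observe that the while-loop in \Cref{bul:ADRALoop} must terminate no later than the first level at which $p^{\vv}_\ell > \Phi$: once the virtual price strictly exceeds every bidder's ironed virtual value, every remaining bidder is asked to $\mathsf{quit}$ in \Cref{bul:ADRAQuit}, so $\activeBid_\ell = \emptyset \subseteq \promiseBid_\ell$ and the loop exits, after which only the single final level of \Cref{bul:ADRAFinal} remains. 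Crucially, this argument uses only that raising the price drives bidders out; the matroid structure can only cause the termination condition $\activeBid_\ell \subseteq \promiseBid_\ell$ to be satisfied earlier, through promotions, and so can only decrease the level count. This is precisely the sense in which the matroid case is dominated by the single-item analysis. Solving $2^\ell - 1 > \Phi$ then yields that the number of levels $L$ obeys $L \le \log_2(\Phi + 1) + 2 \le \log_2(V + 1) + 2$, the bound holding trivially with $L = O(1)$ in the degenerate case $\Phi < 0$.

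The second step is to take expectations over the value profile and apply Jensen's inequality. Since $x \mapsto \log_2(x+1)$ is concave on $[0,\infty)$ and $V = \max_i v_i \ge 0$, I obtain $\E[L] \le \E[\log_2(V+1)] + 2 \le \log_2(\E[V]+1) + 2 = O(\log \E[\max_i v_i])$, where the $O(\cdot)$ absorbs the additive constant (the bound reads as $O(1)$ in the regime $\E[\max_i v_i] < 1$).

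The main obstacle is the pointwise termination claim of the first step: I must argue rigorously that the loop cannot persist past the level at which the virtual price overtakes $\Phi$, and in particular that fabricated bids and matroid promotions never extend the horizon. The cleanest route is the observation above — raising the virtual price past $\Phi$ empties $\activeBid_\ell$ regardless of the feasibility constraint — together with the fact that, in the honest execution relevant for communication complexity, the truthful bidders $\mathsf{quit}$ exactly when $\overline{\vv}_i(v_i) < p^{\vv}_\ell$, as guaranteed by the EPIC property of \Cref{thm:APAOpt}. Everything else (the closed form $p^{\vv}_\ell = 2^\ell - 1$ and the Jensen step) is routine.
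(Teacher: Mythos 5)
Your proof is correct, but it takes a different route from the paper. The paper does not argue the bound from scratch: it observes that in a matroid environment competition is only weaker than in the single-item case (promotions via clinching can only make the termination condition $\activeBid_\ell \subseteq \promiseBid_\ell$ hold earlier), so the matroid ADRA terminates at a level no later than the single-item ADRA on the same bids, and then it invokes the single-item result of \citet{EFW22} as a black box for the same choice of $\incr$. You instead give a self-contained quantitative argument: an explicit doubling rule $\incr(p) = 2p+1$, a pointwise termination bound $L \le \log_2(\Phi+1)+2$ once the virtual price escapes $\Phi = \max_i \overline{\vv}_i(v_i)$, the inequality $\Phi \le \max_i v_i$, and Jensen's inequality to move the logarithm outside the expectation. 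Notably, your ``domination'' observation (matroid promotions can only shorten the run) plays exactly the role of the paper's reduction step, so the core insight is shared; what your version buys is that the reader never needs to open \citet{EFW22}, and what it costs is that you must justify two facts the citation-based proof avoids, both of which hold but deserve a line each: (i) that the ironed virtual value never exceeds the value (true, since on an ironed quantile interval $[q_1,q_2]$ the ironed virtual value equals the chord slope $\frac{q_2 v(q_2)-q_1 v(q_1)}{q_2-q_1} \le v(q_2)$, the smallest value in the interval), and (ii) the Jensen step, which is immediate from concavity of $x \mapsto \log_2(x+1)$. Both your bound and the paper's statement carry the same harmless caveat that the $O(\cdot)$ hides an additive constant, relevant only when $\E[\max_i v_i]$ is small.
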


When the auctioneer is honest, the ADRA is just a fancy implementation of the ascending-price auction in \Cref{alg:APA}, and is thus EPIC for bidders to bid truthfully.
By \Cref{thm:APAOpt}. the ADRA is also revenue-optimal when the auctioneer is honest.

\begin{theorem}
    For the ADRA with distributions $D_1, \dots, D_n$ of bidder values and a matroid feasibility constraint $\Feasibility$, bidding truthfully is EPIC for bidders conditioned on the auctioneer playing the honest strategy.
    The resulting equilibrium in the ADRA is revenue-optimal.
\end{theorem}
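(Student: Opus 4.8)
The plan is to reduce the ADRA, under an honest auctioneer, to the virtual-pricing transformation $\hat{\mechanism}$ and then invoke \Cref{thm:APAOpt}. When the auctioneer is honest it reports $\hat{\Feasibility} = \Feasibility$, fabricates no bids, and runs each simulation of $\hat{\mechanism}$ faithfully; as already noted, determinism of $\hat{\mechanism}$ guarantees that the stop-start execution reaches the same states as a single uninterrupted run. Hence the allocation and payments produced by the ADRA on committed bids $\vec{b}$ coincide with those of $\hat{\mechanism}$ when each bidder $i$ plays the threshold strategy $s_i(b_i)$ of competing until the virtual price crosses $\overline{\vv}_i(b_i)$. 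The first key step is to argue that a bidder's only material choice is the committed bid $b_i$: after committing, the binding commitment scheme together with the abort conditions of \Cref{def:ADRA} force bidder $i$ to $\mathsf{quit}$ at exactly the level $\ell$ with $p^{\vv}_{\ell-1} \le \overline{\vv}_i(b_i) < p^{\vv}_\ell$, since quitting too early, too late, or failing to reveal a bid consistent with $c_i$ all trigger an abort. Because the ascending auction is individually rational, following the protocol yields non-negative utility while any abort burns the collateral $f_\ell$ and yields strictly negative utility; thus committing $b_i$ is behaviorally identical to playing $s_i(b_i)$ in $\hat{\mechanism}$, and no off-threshold deviation is profitable.

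For EPIC, I would then transfer the incentive guarantee of $\hat{\mechanism}$. Fixing the truthful bids $s_{-i}(v_{-i})$ of the other bidders, \Cref{thm:APAOpt} states that $s_i(v_i)$ maximizes bidder $i$'s utility over all threshold strategies $s_i(b_i)$. Since the bidder's feasible deviations in the ADRA are exactly these threshold strategies---every abort being strictly dominated by honest participation by the argument above---committing $b_i = v_i$ is optimal given that the others bid truthfully. This establishes that truthful bidding is EPIC conditioned on the honest auctioneer.

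For revenue optimality, observe that when all bidders bid truthfully the ADRA realizes $\hat{\mechanism}$ on the true value profile, which by \Cref{thm:APAOpt} pointwise maximizes the ironed virtual surplus $\sum_i \overline{\vv}_i(v_i)\, x_i$ over feasible allocations in $\Feasibility$. By Myerson's characterization (the ironing discussion together with \Cref{thm:MyersonVV}), the expected revenue equals the expected ironed virtual surplus provided any two values of a bidder sharing the same ironed virtual value are allocated identically. This ironing-consistency condition holds here because values on a common flat region of $\overline{\vv}_i$ cross every virtual price $p^{\vv}_\ell$ simultaneously and hence $\mathsf{quit}$ at the same level, so $\hat{\mechanism}$ allocates them identically. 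Maximizing the ironed virtual surplus pointwise therefore maximizes expected revenue, and the truthful equilibrium is revenue-optimal.

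The main obstacle is the first step: rigorously collapsing the bidder's strategy space to the single committed bid by verifying that the abort rules leave no profitable off-path move---in particular that strategically aborting, which forfeits the escalating collateral, can never beat honest participation---and confirming the ironing-consistency needed to equate revenue with ironed virtual surplus. Once these structural facts are in place, the reduction to \Cref{thm:APAOpt} and \Cref{thm:MyersonVV} is routine.
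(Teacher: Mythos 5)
Your proposal is correct and takes essentially the same approach as the paper: the paper's own (very terse) justification is exactly that, under an honest auctioneer, the ADRA is a faithful stop-start implementation of the virtual-pricing transformation $\hat{\mechanism}$, so EPIC and revenue-optimality follow directly from \Cref{thm:APAOpt}. The extra details you supply---collapsing each bidder's strategy space to threshold strategies via the binding-commitment/collateral-burning argument, and checking ironing-consistency so that revenue equals ironed virtual surplus---are the same facts the paper implicitly relies on (the former reappears explicitly as \Cref{lemma:use-suggested-strategy} in the credibility analysis).
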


Finally, we show the credibility of ADRA for matroids. 
We defer the proof to \Cref{appendix:proof-of-ADRA}

\begin{theorem}\label{thm:ADRA}
    The ADRA is credible irrespective of the bidders' value distributions $D_1, \dots, D_n$ and the matroid feasibility constraint $\Feasibility$.
\end{theorem}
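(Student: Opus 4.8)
The plan is to mirror the single-bidder reductions of \Cref{sec:WarmUp} and the matroid accounting of \Cref{sec:MatroidMHR}, but to replace the static ``conceal-a-bid'' penalty argument by a dynamic one driven by the increasing collateral schedule. As in the DRA analysis, I would first reframe the objective: since bidders are fixed to bid truthfully and burnt collateral never returns to the auctioneer, I pretend each forfeited collateral is handed back to the real winners as a discount, so that the auctioneer's payoff equals the sum of net payments made by the real bidders. The goal then reduces to showing that, for every auctioneer strategy $\strategy$, the expected sum of net real-bidder payments is at most the expected ironed virtual surplus of the set of real winners. Because the real winners are constrained to lie in the true matroid $\Feasibility$ regardless of the declared $\hat{\Feasibility}$, this surplus is pointwise at most $\max_{S \in \Feasibility} \sum_{i \in S} \overline{\vv}_i(v_i)$, whose expectation is exactly the honest revenue by \Cref{thm:APAOpt} and \Cref{thm:MyersonVV}. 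This single bound simultaneously neutralises feasibility misreporting and the adversarial choice of distributions for the fabricated bids, since neither affects the final inequality.

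Next I would dispatch the baseline case in which $\strategy$ aborts no fabricated bid (revealing every fake bid consistently with the quit rules). Conditioned on the committed fake bids, $\hat{\mechanism}$ is then a genuine monotone, critical-price mechanism on the real bidders, hence DSIC; \Cref{thm:MyersonVV} gives that the expected real-bidder revenue equals the expected ironed virtual surplus of the real winners, which is at most the honest revenue by the previous paragraph. So no non-aborting deviation is profitable, and the entire difficulty concentrates on strategies that abort fabricated bids after observing revealed real bids.

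The core step is to show that aborting cannot increase net revenue. Aborting a fabricated bid only frees matroid capacity, and by the exchange/greedy structure of matroids (\Cref{thm:Greedy}, \Cref{thm:MatroidBasisExchange}) aborting $m$ fake bids can promote at most $m$ additional real winners relative to the reveal-all allocation, leaving the common winners' prices bounded exactly as in the baseline. An extra winner promised while the virtual price stands at $p^{\vv}_{\ell}$ pays a real price at most $\overline{\vv}^{-1}(p^{\vv}_{\ell}) \le \max_j \overline{\vv}_j^{-1}(p^{\vv}_{\ell})$, whereas a fake bid kept active that long has locked in collateral at least $f_{\ell-1} = \max_j \overline{\vv}_j^{-1}(\incr(p^{\vv}_{\ell}))$, which strictly exceeds that price because $\incr$ is strictly increasing. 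Matching one forfeited collateral to each extra winner and reinterpreting it as the promised discount makes every extra winner's net payment nonpositive, so the aborting strategy is dominated by its reveal-all counterpart. The two-levels-ahead calibration $f_\ell = \max_j \overline{\vv}_j^{-1}(\incr^2(p^{\vv}_{\ell}))$ is precisely what supplies the slack for the auctioneer's adaptivity: it observes the bids revealed at a level before the collateral indexed by that level is committed, yet the collateral already on deposit dominates the highest price it can exploit at the level where it acts.

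I expect the main obstacle to be exactly this adaptive matroid bookkeeping: pairing each abort's forfeited collateral with the revenue it could unlock, simultaneously over all levels and all fabricated bids, while the auctioneer adaptively selects which fake bids to drop as a function of the sequence of revealed real bids. Making the statement ``aborting $m$ bids buys at most $m$ extra winners, each exploitable only up to the current price'' rigorous requires a level-by-level exchange argument (via \Cref{thm:MatroidBasisExchange}) comparing the realised allocation to the reveal-all allocation and checking that the deposited collateral dominates the gain at the exact level each abort occurs; I anticipate this accounting, rather than any individual inequality, to be the crux, with the extra level in the collateral schedule being what renders the per-level domination strict.
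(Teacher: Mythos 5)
There is a genuine gap in your core step. You claim that any aborting strategy is dominated by its reveal-all counterpart (same fabricated bids, never aborted), via a charging argument in which each forfeited collateral exceeds the price paid by the extra winner it unlocks. Both the domination claim and the charging bound are false. Concretely: take the uniform rank-$2$ matroid, three real bidders with values $600$, $1000$, $500$ drawn from exponential mean-$1$ distributions (so $\overline{\vv}^{-1}(t)=t+1$), one fake bid $F$ with a huge committed value, and virtual prices doubling through $1,2,4,\dots$. If the auctioneer aborts $F$ at the level with virtual price $64$, it forfeits only $f_{\ell^*-1}=\overline{\vv}^{-1}(128)=129$; thereafter the $500$-bidder quits first and \emph{both} remaining real bidders are promised at price $500$, for a net of $2\cdot 500-129=871$. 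Under reveal-all, $F$ stays alive, the $500$- and $600$-bidders quit, and the winners are $F$ and the $1000$-bidder at price $600$ each; since $F$'s payment is the auctioneer paying itself, reveal-all nets only $600$. So aborting strictly beats reveal-all ($871>600$), and the extra winner (value $600$) is promoted long after the abort at price $500$, far above the $129$ forfeited. Two things break: first, a fake bid kept alive is not revenue-neutral --- it can be promised and occupy matroid capacity, converting a real bidder's payment into a net-zero self-payment, so aborting can be strictly profitable relative to reveal-all; second, your matching pairs the extra winner's promotion level with the abort level, but an early abort can unblock a bidder who is only promised many levels later at a price unbounded by the collateral on deposit at the abort. (The theorem itself survives: both $871$ and $600$ are below the honest revenue $1000$, but your intermediate inequality is the wrong one.)

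The paper's proof takes a fundamentally different route that never compares aborting to reveal-all. It restricts attention to undominated auctioneer deviations, shows that such deviations never abort a fabricated bid that is (or is about to be) $\mathsf{promised}$ (\Cref{lem:undominated-deviation}), and then uses the matroid exchange property through a hybrid argument over the sequence of aborts to establish allocation monotonicity of the suggested bidder strategies (\Cref{lem:clinched-before-quit}, \Cref{lem:allocation-monotonicity}). This yields that truthful bidding remains an ex-post Nash equilibrium under \emph{any} undominated deviation (\Cref{lem:ADRA-equilbrium}), so every undominated deviation induces an incentive-compatible mechanism on the real bidders; by \Cref{thm:MyersonVV} and \Cref{thm:APAOpt} its revenue is then at most the optimal incentive-compatible revenue, which the honest ADRA attains. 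Your opening paragraph sets up essentially this correct target (bounding the deviation's revenue by the ironed virtual surplus of the real winners), but the charging argument you use to reach it cannot be repaired by refining the level bookkeeping: any fix must compare deviations directly to the honest benchmark, which is exactly what the incentive-compatibility-plus-Myerson argument accomplishes.
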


\noindent\textbf{Acknowledgment.} We thank Matt Weinberg for helpful discussions throughout the duration of this work.

\bibliographystyle{ACM-Reference-Format}
\bibliography{references}

\appendix

\section{Properties of $\alpha$-Strongly Regular and MHR Distributions}
\label{appendix:mhr-properties}

The following lemmas argue that if the value is large, then the virtual value must also be quite large for $\alpha$-strongly regular distributions.

\begin{lemma}[\citealp{HR09}] \label{thm:MHRLightTail}
    Let $D$ be an MHR distribution with a virtual value function $\vv$ and a monopoly reserve $r$. Then, for all values $v \geq r$, $\vv(v) \geq v - r$.
\end{lemma}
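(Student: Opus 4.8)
The plan is to read the inequality straight off two facts the paper has already installed: that an MHR distribution is by definition $1$-strongly regular, and that for a regular distribution the monopoly reserve $r$ is exactly the value at which the virtual value vanishes. Granting these, I would apply the strong-regularity inequality to the pair $v \geq r$ and substitute $\vv(r) = 0$ to finish in one line.

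First I would pin down that $\vv(r) = 0$. Since $D$ is MHR it is in particular regular, so $\vv(\cdot)$ is continuous and monotone non-decreasing. The monopoly reserve $r$ is the supremum of the values at which $\vv$ is negative, so monotonicity gives $\vv(v) < 0$ for $v < r$ and $\vv(v) \geq 0$ for $v > r$, and continuity forces the common boundary value $\vv(r) = 0$. Next I would invoke $1$-strong regularity: by the paper's definition, MHR means $\vv(v) - \vv(\hat v) \geq 1 \cdot (v - \hat v)$ for all $v \geq \hat v$. Taking $\hat v = r$ and any $v \geq r$ yields $\vv(v) - \vv(r) \geq v - r$, and substituting $\vv(r) = 0$ gives $\vv(v) \geq v - r$, which is the claim.

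The only step that needs genuine care is the equality $\vv(r) = 0$ (rather than merely $\vv(r) \le 0$ or $\vv(r) \ge 0$), which relies on continuity of $\vv$; for distributions with atoms or without densities one would instead work with the ironed virtual value, but the continuous-density setting here makes this routine. As a self-contained alternative that bypasses strong regularity entirely, I could argue directly from the hazard rate: writing $\vv(v) = v - \frac{1-F(v)}{f(v)}$, the MHR condition makes the inverse hazard rate $\frac{1-F(v)}{f(v)}$ non-increasing, so for $v \geq r$ we have $\frac{1-F(v)}{f(v)} \leq \frac{1-F(r)}{f(r)} = r$, where the last equality is just $\vv(r)=0$ rearranged; hence $\vv(v) = v - \frac{1-F(v)}{f(v)} \geq v - r$. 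Either route closes the lemma, and I would present the strong-regularity version as primary since it keeps the argument consistent with the framing used elsewhere in the paper.
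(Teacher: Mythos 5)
Your proof is correct, but it is worth noting that the paper itself never proves this lemma: it is stated with a citation to \citet{HR09}, and the only in-paper justification is the remark that it is the special case $\alpha = 1$ of \Cref{thm:RegLightTail} (itself cited to \citealp{FW20}). Your primary route is exactly that specialization carried out in full --- MHR $=$ $1$-strong regularity applied at the pair $(v, r)$, together with $\vv(r) = 0$ --- so you are supplying the argument the paper delegates to a reference, and your hazard-rate alternative is the same argument in disguise, since under the paper's definitions $1$-strong regularity is equivalent to the inverse hazard rate $\frac{1-F(v)}{f(v)}$ being non-increasing. Two small points of care. First, your parenthetical ``$D$ is regular, so $\vv$ is continuous'' is not quite right as stated: regularity gives monotonicity of $\vv$ but not continuity, which instead comes from the standing assumption of a continuous density; you acknowledge this correctly a sentence later, so it is a wording issue rather than a gap. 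Second, your argument actually needs only $\vv(r) \geq 0$ rather than the full equality $\vv(r) = 0$: from $\vv(v) \geq \vv(r) + (v - r)$ the claim follows once $\vv(r) \geq 0$, and note that the lemma's assertion at $v = r$ \emph{is} the statement $\vv(r) \geq 0$, so some such continuity-type condition at $r$ is genuinely needed for the lemma itself, not just for your proof. Both observations leave your argument intact in the paper's setting.
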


\begin{lemma}[\citealp{FW20}] \label{thm:RegLightTail}
    Let $D$ be an $\alpha$-strongly regular distribution with a virtual value function $\vv$ and a monopoly reserve $r$. Then, for all values $v \geq r$, $\frac{1}{\alpha}\vv(v) \geq v - r$.
\end{lemma}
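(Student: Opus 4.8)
The plan is to derive the bound as a one-line application of the defining inequality of $\alpha$-strong regularity, once the value of the virtual value function at the monopoly reserve has been pinned down. Recall that a distribution is $\alpha$-strongly regular when $\vv(v) - \vv(\hat{v}) \geq \alpha\,(v - \hat{v})$ for all $v \geq \hat{v}$. The target $\frac{1}{\alpha}\vv(v) \geq v - r$ rearranges (using $\alpha > 0$) to $\vv(v) \geq \alpha\,(v - r)$, which is precisely the defining inequality taken at $\hat{v} = r$, provided $\vv(r) = 0$. So the whole proof reduces to understanding $\vv$ at the reserve.

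First I would argue that $\vv(r) \geq 0$, with equality in the generic case. Since $\alpha > 0$, the distribution is in particular regular, so its virtual value function is monotone non-decreasing; the defining inequality in fact forces $\vv$ to increase with slope at least $\alpha$, hence strictly. By definition the monopoly reserve $r$ is the supremum of values whose virtual value is negative, so $\vv(v) < 0$ for $v < r$ and $\vv(v) \geq 0$ for $v > r$. For a continuous distribution $\vv$ is continuous, and matching the left and right limits at $r$ forces $\vv(r) = 0$.

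With this in hand I would apply $\alpha$-strong regularity with $\hat{v} = r$: for every $v \geq r$,
$$\vv(v) = \vv(v) - \vv(r) \geq \alpha\,(v - r),$$
and dividing by $\alpha > 0$ gives $\frac{1}{\alpha}\vv(v) \geq v - r$, as claimed. More precisely, keeping $\vv(r)$ symbolic yields $\vv(v) \geq \vv(r) + \alpha\,(v - r) \geq \alpha\,(v - r)$, so the argument goes through as soon as $\vv(r) \geq 0$.

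The only genuine subtlety is the degenerate case where the virtual value is nowhere negative, so the set defining the reserve is empty and $r$ must be read off as the left endpoint of the support (or $0$). There is no interior crossing point, but then $\vv(r) \geq 0$ holds directly, and the displayed chain of inequalities still delivers $\vv(v) \geq \alpha\,(v - r)$ \emph{a fortiori}. I would dispatch this as a one-sentence remark rather than a separate case; everything else is a substitution, so I expect the writeup to be very short.
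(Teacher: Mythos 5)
Your proof is correct and is essentially the argument in the literature: the paper states this lemma by citation to \citet{FW20}, whose proof is exactly your substitution $\hat{v} = r$ into the $\alpha$-strong regularity inequality together with the observation that $\vv(r) = 0$ (equivalently $\vv(r) \geq 0$) at the monopoly reserve. Your handling of the degenerate case where $\vv$ is nowhere negative is a reasonable extra care, and the one-line derivation $\vv(v) \geq \vv(r) + \alpha\,(v-r) \geq \alpha\,(v-r)$ matches the standard writeup.
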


\noindent Note that \Cref{thm:MHRLightTail} is a special case of \Cref{thm:RegLightTail} with $\alpha = 1$.

\Cref{thm:MHRLightTail} is also tight.
The exponential distribution is a distribution that is MHR, but not $\alpha$-strongly regular for any $\alpha < 1$.
For an expected value $\mu$, the exponential distribution has a cumulative distribution function $1 - e^{-(\frac{v}{\mu})}$ and a density $\frac{1}{\mu}e^{-(\frac{v}{\mu})}$ for all $v \geq 0$.
This corresponds to a virtual value $\phi(v) = v - \mu$ and a monopoly reserve $\mu$.

The following lemma shows that the right-tail of $\alpha$-strongly regular distributions have to be light.

\begin{lemma}[\citealp{FW20}] \label{thm:MHRWeirdBound}
    Let $D$ be an $\alpha$-strongly regular distribution with a monopoly reserve $r$.
    Then, for all $f \geq r$,
    \begin{equation}
        \notag
        Pr_{v \sim D}(v \geq f) \leq Pr_{v \sim D}(v \geq r) \times \Bigg(\frac{r}{(1-\alpha) f + \alpha r}\Bigg)^{\frac{1}{1-\alpha}}.
    \end{equation}
\end{lemma}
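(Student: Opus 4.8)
The plan is to express the tail ratio $\Pr_{v\sim D}(v\geq f)/\Pr_{v\sim D}(v\geq r)$ through the hazard rate and then control it using $\alpha$-strong regularity (throughout I assume $\alpha<1$, since the exponent $\tfrac{1}{1-\alpha}$ otherwise degenerates; the case $\alpha=1$ is \Cref{thm:MHRLightTail}). Write $\bar{F}(v)=1-F(v)$ for the survival function and let $\rho(v)=\bar{F}(v)/F'(v)$ be the inverse hazard rate, where $F'$ denotes the density, so that $\vv(v)=v-\rho(v)$. Since $\frac{d}{dv}\ln\bar{F}(v)=-F'(v)/\bar{F}(v)=-1/\rho(v)$, integrating from $r$ to $f$ gives
\[
\frac{\Pr_{v\sim D}(v\geq f)}{\Pr_{v\sim D}(v\geq r)}=\frac{\bar{F}(f)}{\bar{F}(r)}=\exp\!\left(-\int_r^f\frac{dv}{\rho(v)}\right).
\]

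The next step is to bound $\rho$ above by a linear function. Because $\alpha>0$ the distribution is regular, so its monopoly reserve $r$ is the point where the (monotone) virtual value vanishes, $\vv(r)=0$, which translates directly into $\rho(r)=r$. The $\alpha$-strong regularity hypothesis, applied to the pair $v\geq r$, reads $\vv(v)-\vv(r)\geq\alpha(v-r)$; substituting $\vv=v-\rho$ and rearranging yields $\rho(v)-\rho(r)\leq(1-\alpha)(v-r)$. No differentiability is needed here, only the secant form of the definition together with the identity $\rho=v-\vv$. Combining this with $\rho(r)=r$ produces $\rho(v)\leq(1-\alpha)v+\alpha r$ for all $v\geq r$, and hence $1/\rho(v)\geq 1/\big((1-\alpha)v+\alpha r\big)$.

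Substituting this lower bound into the exponent and evaluating the elementary integral
\[
\int_r^f\frac{dv}{(1-\alpha)v+\alpha r}=\frac{1}{1-\alpha}\ln\frac{(1-\alpha)f+\alpha r}{r}
\]
(using $(1-\alpha)r+\alpha r=r$ at the lower limit) gives
\[
\frac{\bar{F}(f)}{\bar{F}(r)}\leq\exp\!\left(-\frac{1}{1-\alpha}\ln\frac{(1-\alpha)f+\alpha r}{r}\right)=\left(\frac{r}{(1-\alpha)f+\alpha r}\right)^{\frac{1}{1-\alpha}},
\]
which is exactly the claimed inequality after multiplying through by $\Pr_{v\sim D}(v\geq r)$.

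I do not expect a serious obstacle: the two reductions that make the argument essentially calculus-free are establishing $\rho(r)=r$ from the definition of the monopoly reserve of a regular distribution, and converting the secant inequality on $\vv$ into the linear upper bound on $\rho$, both of which follow from the algebraic identity $\rho=v-\vv$. The points to handle with care are the sign of $1-\alpha$ (positive, which keeps the logarithm and the exponent well-defined) and the degenerate boundary case $r=0$, which should be excluded since the bound is vacuous there; one should also note that the density and survival function are well-behaved enough for the integral representation of $\bar F$ to hold, which is guaranteed by the continuity assumptions on $D$.
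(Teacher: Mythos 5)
Your proof is correct. The paper itself imports this lemma from \citet{FW20} without reproducing a proof, so there is no internal argument to compare against; your derivation is the standard one for this tail bound (and essentially the one in the cited work): write the survival ratio as $\exp\bigl(-\int_r^f \rho(v)^{-1}\,dv\bigr)$ via the hazard rate, use $\vv(r)=0$ to get $\rho(r)=r$, convert the secant form of $\alpha$-strong regularity into the linear bound $\rho(v)\leq(1-\alpha)v+\alpha r$, and integrate. All steps check out, including the sign bookkeeping that makes the integrand comparison go the right way. Two minor quibbles: your aside that ``the case $\alpha=1$ is \Cref{thm:MHRLightTail}'' is not literally accurate --- \Cref{thm:MHRLightTail} bounds the virtual value, not the tail; the $\alpha=1$ analogue of this lemma is the limiting bound $e^{-(f-r)/r}$, which follows from the MHR property but is a different statement. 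This is harmless since the exponent $\tfrac{1}{1-\alpha}$ is undefined at $\alpha=1$ and the paper only invokes the lemma for $0<\alpha<1$. Also, one should note that if $\bar F$ vanishes somewhere in $[r,f]$ the claim is trivial, so $\rho>0$ can be assumed on the integration range; this is the same kind of degenerate-case care you already flag for $r=0$.
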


\section{Omitted Proofs} \label{sec:OmittedProofs}

\subsection{Proof of \Cref{thm:MatroidReg}}
\label{appendix:proof-of-MatroidReg}

We aim to recreate the framework from the MHR scenario.
While \Cref{thm:MatroidBeatCritical} holds seamlessly for all regular distributions, \Cref{thm:MatroidMoreThanCritical} fails to hold beyond MHR value distributions.
We obtain a variant for more general distributions.
We identify two simple conditions for which the auctioneer cannot increase its revenue by concealing bids.
We then prove a variant of \Cref{thm:MatroidMoreThanCritical} for bids that do not satisfy these simple conditions.

The first of the two simple conditions is when the critical bid of a bidder is smaller than $f$ (the second condition is discussed later in the section). Concealing bids will not increase the revenue extracted from the bidder since any additional revenue obtained would be offset through the penalty $f$ paid to the bidder.
Thus, for most parts of the proof, we focus solely on the scenario when $\critical_i \geq f$.
We start by deriving analogs of \Cref{thm:MatroidBeatCritical} and \Cref{thm:MatroidMoreThanCritical}.

\begin{lemma} \label{thm:RegBeatCritical}
    For a matroid feasibility constraint with regular value distributions $D_1, \dots, D_n$,
    $$\E_{v_i \sim D_i}[p_i(v_i, v_{-i}, \vec{b} | \strategy) \times \mathbbm{1}(v_i \geq \critical_i, \critical_i \geq f)] \leq \E_{v_i \sim D_i}[\vv_i(v_i) \, x_i(v_i , v_{-i}, \vec{b} | \strategy) \times \mathbbm{1}(v_i \geq \critical_i, \critical_i \geq f)]$$
    for any strategy $\strategy$ of the auctioneer that fabricates the set of bids $\vec{b}$.
\end{lemma}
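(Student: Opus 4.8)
The plan is to reduce \Cref{thm:RegBeatCritical} directly to the already-established \Cref{thm:MatroidBeatCritical} by observing that the additional indicator $\mathbbm{1}(\critical_i \geq f)$ does not depend on bidder $i$'s own value $v_i$. Recall that $\critical_i$ denotes the critical bid of bidder $i$ under the full-revelation counterfactual, which is a deterministic function of the competing bids $v_{-i}$, the fabricated bids $\vec{b}$, and the feasibility constraint and virtual values reported by the strategy $\strategy$. Once we condition on $v_{-i}$ and $\vec{b}$, the quantity $\critical_i$ is fixed, so the event $\{\critical_i \geq f\}$ is a constant (either $0$ or $1$) as $v_i$ ranges over the support of $D_i$.

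First I would split into two cases according to whether $\critical_i \geq f$. If $\critical_i < f$, the indicator $\mathbbm{1}(\critical_i \geq f)$ vanishes, so both sides of the claimed inequality are identically zero and there is nothing to prove. If $\critical_i \geq f$, the indicator equals $1$ and factors out of both expectations, reducing the statement to exactly \Cref{thm:MatroidBeatCritical}. Since that lemma holds for all regular distributions, it applies here without any strengthening; concretely, conditioned on $v_i \geq \critical_i$, the allocation cannot be undone by concealing bids (\Cref{thm:MatroidConceal}), so revealing all fabricated bids is the auctioneer's best response among strategies fabricating $\vec{b}$, and the resulting expected payment is bounded above by the expected virtual surplus via \Cref{thm:MyersonVV}.

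The proof is therefore immediate: I would factor out the constant $\mathbbm{1}(\critical_i \geq f)$ and invoke \Cref{thm:MatroidBeatCritical} on the remaining inequality. I do not anticipate a genuine obstacle, as this lemma is merely the restriction of the matroid beat-critical bound to the sub-event $\{\critical_i \geq f\}$. The only point requiring care is to confirm that $\critical_i$, and hence the truncation threshold $f$, is measurable with respect to $v_{-i}$ and $\vec{b}$ alone; this ensures that the indicator separates cleanly from the integration over $v_i$ rather than becoming entangled with it. Establishing this independence is the entire content of the argument, after which the inequality follows from the prior lemma verbatim.
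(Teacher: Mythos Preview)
Your proposal is correct and matches the paper's own argument essentially verbatim: the paper notes that for fixed $v_{-i}$ and $\vec{b}$ the indicator $\mathbbm{1}(\critical_i \geq f)$ is a non-negative constant independent of $v_i$, and obtains \Cref{thm:RegBeatCritical} by multiplying both sides of \Cref{thm:MatroidBeatCritical} by this constant. Your case split and the observation that $\critical_i$ depends only on $v_{-i}$, $\vec{b}$, and the reported parameters are exactly the content of that one-line justification.
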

For a given set of bids $v_{-i}$ and $\vec{b}$, $\mathbbm{1}(\critical_i \geq f)$ is a non-negative constant that is independent of $v_i$.
Thus, \Cref{thm:RegBeatCritical} follows by multiplying both sides of \Cref{thm:MatroidBeatCritical} by $\mathbbm{1}(\critical_i \geq f)$.

\begin{lemma} \label{thm:MatroidRegIntermediate}
    For a matroid feasibility constraint with $\alpha$-strongly regular bidder value distributions ($0 < \alpha < 1$), bids $v_{-i}$ of bidders apart from $i$ and any strategy $\strategy$ of the auctioneer than fabricates a set of bids $\vec{b}$,
    \begin{equation}
        \notag
        \begin{split}
            \E_{v_i \sim D_i}[p_i(v_i, v_{-i}, \vec{b} | \strategy) \mathbbm{1}(v_i < \beta_i, f \leq \critical_i)] 
            &\leq \frac{1}{\alpha} \cdot \E_{v_i \sim D_i}[\vv_i(v_i) \, x_i(v_i, v_{-i}, \vec{b} | \strategy) \times \mathbbm{1}(f \leq v_i, f \leq \critical_i)] \\
            &\qquad \qquad - \E_{v_i \sim D_i}[\vv_i(v_i) \, x_i(v_i, v_{-i}, \vec{b} | \strategy) \times \mathbbm{1}(\beta_i \leq v_i, f \leq \critical_i)].
        \end{split}
    \end{equation}    
\end{lemma}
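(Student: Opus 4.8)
The plan is to mirror the proof of \Cref{thm:MatroidMoreThanCritical}, but to split the range of $v_i$ into three pieces and to invoke $\alpha$-strong regularity where the MHR argument used the light-tail bound. Throughout I fix $v_{-i}$ and $\vec{b}$, so that $\critical_i = \beta_i$ is a fixed threshold and $\mathbbm{1}(f \le \critical_i)$ is a constant; I also adopt the discount convention of \Cref{sec:DRA}, treating the penalty for each concealed bid as a discount transferred to an allocated bidder, so that $p_i$ denotes the resulting effective payment.

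The first step, which I expect to be the crux, is the per-bidder payment bound: on the event $\{v_i < \critical_i\}$,
\[
p_i(v_i, v_{-i}, \vec{b} \mid \strategy) \le (v_i - f)\, x_i(v_i, v_{-i}, \vec{b} \mid \strategy).
\]
If $i$ is not allocated both sides vanish; if $i$ is allocated despite $v_i < \critical_i$, the auctioneer must have concealed a fabricated bid to admit $i$, and the generalized augmentation property (\Cref{thm:Greedy}) guarantees that concealing $k$ bids admits at most $k$ extra winners. Hence a full penalty $f$ can be amortized as a discount to each below-critical winner, and since the raw payment never exceeds $v_i$, the effective payment is at most $v_i - f$. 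This is the only place the matroid structure is used.

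Next I would split $\{v_i < \critical_i,\ f \le \critical_i\}$ at the point $f$. On $\{v_i < f \le \critical_i\}$ the bound gives $p_i \le (v_i - f)\,x_i \le 0$, matching the vanishing right-hand side there. On $\{f \le v_i < \critical_i\}$ we have $v_i \ge f \ge \max_j r_j \ge r_i$, so $\alpha$-strong regularity (\Cref{thm:RegLightTail}) yields $v_i - f \le v_i - r_i \le \tfrac1\alpha\,\vv_i(v_i)$, whence $p_i \le \tfrac1\alpha\,\vv_i(v_i)\,x_i$. Taking expectations and discarding the nonpositive piece gives
\begin{equation}
\notag
\begin{split}
&\E_{v_i \sim D_i}[p_i(v_i, v_{-i}, \vec{b} \mid \strategy)\,\mathbbm{1}(v_i < \critical_i,\ f \le \critical_i)] \\
&\qquad \le \tfrac1\alpha\,\E_{v_i \sim D_i}[\vv_i(v_i)\, x_i(v_i, v_{-i}, \vec{b} \mid \strategy)\,\mathbbm{1}(f \le v_i < \critical_i)].
\end{split}
\end{equation}

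To reach the stated right-hand side I would write $\mathbbm{1}(f \le v_i) = \mathbbm{1}(f \le v_i < \critical_i) + \mathbbm{1}(v_i \ge \critical_i)$ on $\{f \le \critical_i\}$, which reduces the claim to showing that the leftover term $(\tfrac1\alpha - 1)\,\E[\vv_i(v_i)\,x_i(\cdot \mid \strategy)\,\mathbbm{1}(v_i \ge \critical_i,\ f \le \critical_i)]$ is nonnegative. The subtlety is that the true $\vv_i$ can be negative above $\critical_i$ when the auctioneer misreports a distribution with a smaller reserve, so nonnegativity cannot be argued pointwise. Instead I would use that $x_i(\cdot \mid \strategy) = 1$ on $\{v_i \ge \critical_i\}$ (concealing never un-allocates, \Cref{thm:MatroidConceal}) and apply Myerson's identity (\Cref{thm:MyersonVV}) to the DSIC reveal-all mechanism, whose allocation is $\mathbbm{1}(v_i \ge \critical_i)$ and whose payment is $\critical_i\,\mathbbm{1}(v_i \ge \critical_i)$; this gives $\E[\vv_i(v_i)\,\mathbbm{1}(v_i \ge \critical_i)] = \critical_i\,\Pr(v_i \ge \critical_i) \ge 0$. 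Since $\alpha < 1$ the leftover term is nonnegative, and the lemma follows. The main obstacle is thus twofold: the matroid amortization of the penalty in the payment bound, and the realization that the $\{v_i \ge \critical_i\}$ remainder must be controlled in expectation via Myerson rather than pointwise.
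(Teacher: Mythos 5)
Your proof is correct and follows the paper's argument almost line for line: the same effective-payment bound $p_i \le (v_i - f)\,x_i$ below the critical bid (justified by the matroid amortization of the penalty), the same application of \Cref{thm:RegLightTail} after discarding the region $\{v_i < f\}$, and the same indicator decomposition $\mathbbm{1}(f \le v_i < \critical_i) = \mathbbm{1}(f \le v_i) - \mathbbm{1}(v_i \ge \critical_i)$ on the event $\{f \le \critical_i\}$. The only place you diverge is the remainder term $\E[\vv_i(v_i)\, x_i(\cdot \mid \strategy)\, \mathbbm{1}(v_i \ge \critical_i,\ f \le \critical_i)]$, whose nonnegativity the paper treats as immediate --- and indeed it is immediate pointwise, so the ``subtlety'' you flag is a false alarm: the indicator $\mathbbm{1}(f \le \critical_i)$, combined with the standing assumption $f \ge \max_j r_j$ (which you yourself invoke earlier in the proof), forces $v_i \ge \critical_i \ge f \ge r_i$ on that event, so $\vv_i(v_i) \ge 0$ by regularity of the true $D_i$, regardless of what distribution the auctioneer misreports. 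The misreported-reserve scenario you worry about can only make $\critical_i$ small, and precisely then the indicator kills the term. Your Myerson-identity detour ($\E[\vv_i(v_i)\,\mathbbm{1}(v_i \ge \critical_i)] = \critical_i \Pr(v_i \ge \critical_i) \ge 0$) is valid and reaches the same conclusion, so nothing breaks; it is simply unnecessary machinery here, though it would be the right tool if the event did not already pin $v_i$ above the true reserve.
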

\begin{proof}
The following chain of inequalities are similar to \Cref{thm:MatroidMoreThanCritical}.
    \begin{equation}
        \notag
        \begin{split}
            \E_{v_i \sim D_i}[p_i(v_i, v_{-i}, \vec{b} | \strategy) \mathbbm{1}(v_i < \beta_i, f \leq \critical_i)] &\leq \E_{v_i \sim D_i}[(v_i - f) \, x_i(v_i, v_{-i}, \vec{b} | \strategy) \times \mathbbm{1}(v_i < \beta_i, f \leq \critical_i)] \\
            &\leq \E_{v_i \sim D_i}[(\frac{1}{\alpha}\vv_i(v_i) + r_i - f) \, x_i(v_i, v_{-i}, \vec{b} | \strategy) \times \mathbbm{1}(f \leq v_i < \beta_i)]
        \end{split}
    \end{equation}
The first inequality follows since the maximum payment extracted from a bidder is their value.
The second inequality follows by first ignoring all values $v_i < f$, in which case $v_i - f$ is negative, and then applying \Cref{thm:RegLightTail}, which argues $(v_i - r_i) \leq \frac{1}{\alpha} \vv_i(v_i)$ for all $v_i \geq r_i$ for all $\alpha$-strongly regular distributions.
We will be choosing $f \geq r_i$ for all bidders $i$, and thus,
    \begin{equation}
        \notag
        \begin{split}
            \E_{v_i \sim D_i}[p_i(v_i, v_{-i}, \vec{b} | \strategy) \mathbbm{1}(v_i < \beta_i, f \leq \critical_i)]
            &\leq \E_{v_i \sim D_i}[\frac{1}{\alpha}\vv_i(v_i) \, x_i(v_i, v_{-i}, \vec{b} | \strategy) \times \mathbbm{1}(f \leq v_i < \beta_i)] \\
            &= \frac{1}{\alpha} \cdot \E_{v_i \sim D_i}[\vv_i(v_i) \, x_i(v_i, v_{-i}, \vec{b} | \strategy) \times \mathbbm{1}(f \leq v_i, f \leq \critical_i)] \\
            & \qquad - \frac{1}{\alpha} \cdot \E_{v_i \sim D_i}[\vv_i(v_i) \, x_i(v_i, v_{-i}, \vec{b} | \strategy) \times \mathbbm{1}(\beta_i \leq v_i, f \leq \critical_i)] \\
            &\leq \frac{1}{\alpha} \cdot \E_{v_i \sim D_i}[\vv_i(v_i) \, x_i(v_i, v_{-i}, \vec{b} | \strategy) \times \mathbbm{1}(f \leq v_i, f \leq \critical_i)] \\
            & \qquad - \E_{v_i \sim D_i}[\vv_i(v_i) \, x_i(v_i, v_{-i}, \vec{b} | \strategy) \times \mathbbm{1}(\beta_i \leq v_i, f \leq \critical_i)].
        \end{split}
    \end{equation}
The final inequality follows since $\alpha \in (0, 1)$.
\end{proof}

We will now describe the second condition.
Let $\FullRank_i$ be the event that in the virtual-surplus-optimal allocation that does not include bidder $i$, replacing any allocated fabricated bid by bidder $i$ makes the allocation infeasible.
In other words, if bidder $i$ is to be included, it must be by displacing another real bidder.
Note that $\FullRank_i$ is a property of bids $v_{-i}$ and $\vec{b}$, and is independent of $v_i$.
In order to displace a real bidder, bidder $i$ will have to place a bid with a virtual value larger than that of the replaced bidder and thus, its critical bid $\critical_i$ is independent of $\vec{b}$.
Concealing bids will, therefore, not alter the payment made by bidder $i$.
Similar to the scenario $f \geq \critical_i$, $\FullRank_i$ is the easier case to analyze.
Hence, we will focus on $\EmptyRank_i$, the negation of $\FullRank_i$.
We do so by comparing the auctioneer's revenue from being strategic against the revenue in a digital goods environment (where any subset of bidders can be feasibly allocated) with an additional constraint that only bidders with a value larger than $f$ can be allocated.

\begin{lemma} \label{thm:RegEmptyRank}
    Consider the $\alpha$-strongly regular values distributions $D_1, \dots, D_n$ in a matroid environment, with reserves $r_1, \dots, r_n$ respectively.
    Let the collateral $f$ satisfy \Cref{eqn:RegColIneq}.
    Then,
\begin{equation}
    \notag
    \begin{split}
        \sum_{i = 1}^n \E_{\vec{v} \sim \Pi_{j = 1}^n D_j} \Big[\sum_{i=1}^n p_i(v_i, \vec{v}, \vec{b} | \strategy) \times \mathbbm{1}(f \leq \critical_i, \EmptyRank_i) \Big]
        &= \frac{1}{n} \sum_{i = 1}^n \E_{\vec{v} \sim \Pi_{j = 1}^n D_j} \Big[\vv_i(v_i) \times \mathbbm{1}(r_i \leq v_i, f \leq \critical_i, \EmptyRank_i) \Big].
    \end{split}
\end{equation}
\end{lemma}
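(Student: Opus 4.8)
The plan is to bound the payment collected from each bidder $i$ in the regime $\{f \leq \critical_i, \EmptyRank_i\}$ by a single-agent ``digital-goods with reserve $f$'' quantity, and then to absorb the gap between the reserve $f$ and the true monopoly reserve $r_i$ using the light-tail estimate of \Cref{thm:MHRWeirdBound} together with the collateral inequality \Cref{eqn:RegColIneq}. The event $\EmptyRank_i$ (bidder $i$ can only be inserted into the allocation by displacing a \emph{fabricated} bid) is exactly the regime in which a concealment is required to allocate $i$ below its critical value, so the discount-$f$ reasoning underlying \Cref{thm:MatroidRegIntermediate} applies; the complementary event $\FullRank_i$ is handled separately, since there bidder $i$ can only enter by displacing a real bidder and hence $\critical_i$ is independent of $\vec{b}$.

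First I would combine the two per-bidder bounds already in hand. Adding \Cref{thm:RegBeatCritical} (the $v_i \geq \critical_i$ contribution) to \Cref{thm:MatroidRegIntermediate} (the $v_i < \critical_i$ contribution), the term $\E_{v_i}[\vv_i(v_i)\,x_i\,\mathbbm{1}(\critical_i \leq v_i,\, f\leq\critical_i)]$ cancels, leaving
\[
\E_{v_i \sim D_i}[p_i(v_i,v_{-i},\vec{b}\mid\strategy)\,\mathbbm{1}(f\leq\critical_i)] \;\leq\; \tfrac{1}{\alpha}\,\E_{v_i\sim D_i}[\vv_i(v_i)\,x_i(v_i,v_{-i},\vec{b}\mid\strategy)\,\mathbbm{1}(f\leq v_i,\, f\leq\critical_i)].
\]
Since $f\geq r_i$ forces $\vv_i(v_i)\geq 0$ on the event $\{v_i\geq f\}$, I can discard $x_i\leq 1$ and then intersect with $\EmptyRank_i$ (which depends only on $v_{-i},\vec{b}$) to get $\E[p_i\,\mathbbm{1}(f\leq\critical_i,\EmptyRank_i)] \leq \tfrac1\alpha \E[\vv_i(v_i)\,\mathbbm{1}(f\leq v_i,\, f\leq\critical_i,\EmptyRank_i)]$.

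The heart of the argument is the final reduction. Conditioning on $v_{-i},\vec{b}$ (so that the factor $\mathbbm{1}(f\leq\critical_i,\EmptyRank_i)$ pulls out of the $v_i$-expectation), I would invoke the single-agent revenue identity $\E_{v_i}[\vv_i(v_i)\,\mathbbm{1}(v_i\geq t)] = t\,\Pr_{v_i\sim D_i}(v_i\geq t)$, valid at any threshold $t$ because posting price $t$ is DSIC with expected payment $t\,\Pr(v_i\geq t)$. This rewrites the left side at threshold $f$ as $\tfrac1\alpha f\,\Pr(v_i\geq f)$ and the target right side at threshold $r_i$ as $\tfrac1n r_i\,\Pr(v_i\geq r_i)$, so it suffices to prove $\tfrac1\alpha\,\tfrac{f}{r_i}\,\tfrac{\Pr(v_i\geq f)}{\Pr(v_i\geq r_i)}\leq\tfrac1n$. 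Bounding the tail ratio by \Cref{thm:MHRWeirdBound} collapses this to $\tfrac1\alpha\,f\,g(r_i) \leq \tfrac1n$, where $g(r)=\tfrac1r\bigl(\tfrac{r}{(1-\alpha)f+\alpha r}\bigr)^{1/(1-\alpha)}$.

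I expect the main obstacle to be replacing the per-distribution reserve $r_i$ with $\max_i\{r_i\}$, which is the only reserve appearing in \Cref{eqn:RegColIneq}; this requires monotonicity of $g$. A short computation gives $g'(r)/g(r) = \alpha(f-r)\big/\bigl(r\,((1-\alpha)f+\alpha r)\bigr)$, so $g$ is nondecreasing for $r\leq f$. As we take $f\geq\max_i\{r_i\}\geq r_i$, this yields $g(r_i)\leq g(\max_i\{r_i\})$, and then $\tfrac1\alpha f\, g(\max_i\{r_i\})\leq \tfrac1n$ is precisely \Cref{eqn:RegColIneq}. Taking expectations over $v_{-i},\vec{b}$ and summing over $i$ then delivers the stated relation.
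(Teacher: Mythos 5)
Your proposal is correct and follows essentially the same route as the paper: combine \Cref{thm:RegBeatCritical} with \Cref{thm:MatroidRegIntermediate}, relax to the digital-goods benchmark with reserve $f$, apply Myerson's identity at thresholds $f$ and $r_i$, invoke \Cref{thm:MHRWeirdBound}, and use monotonicity in $r$ (your logarithmic-derivative computation is exactly the content of the paper's \Cref{thm:Mon}) before applying \Cref{eqn:RegColIneq}. The only cosmetic difference is that you make the cancellation of the $\mathbbm{1}(v_i \geq \critical_i)$ virtual-surplus terms explicit, which the paper leaves implicit.
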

\begin{proof}
Since the environment is given by a matroid feasibility constraint, an injection can be drawn from the set of allocated bidders with values smaller than their critical value $\critical_i$ and the set of fake bids concealed by the auctioneer.
We can therefore continue assuming that each bidder allocated because of concealed bids are given a discount $f$ each by the auctioneer.

We will combine \Cref{thm:RegBeatCritical} and \Cref{thm:MatroidRegIntermediate}.
However, before taking expectation with respect to 
$v_{-i}$, we will multiply both sides by $\mathbbm{1}(\EmptyRank_i)$.
\begin{equation}
    \notag
    \begin{split}
                    \E_{\vec{v} \sim \Pi_{j = 1}^n D_j} \Big[\sum_{i=1}^n p_i(v_i, \vec{v}, \vec{b} | \strategy) \times \mathbbm{1}(f \leq &\critical_i, \EmptyRank_i)\Big] \\ &\leq \frac{1}{\alpha} \cdot \sum_{i = 1}^n \E_{\vec{v} \sim \Pi_{j = 1}^n D_j}[\vv_i(v_i) \, x_i(v_i, v_{-i}, \vec{b} | \strategy) \times \mathbbm{1}(f \leq v_i, f \leq \critical_i, \EmptyRank_i)]
    \end{split}
\end{equation}
The right hand side can be upper-bounded by the auctioneer's revenue in the digital goods environment, where the bidders are allocated whenever their values are at least the collateral $f$.
By \Cref{thm:MyersonVV}, $\E_{\vec{v} \sim \Pi_{j = 1}^n D_j}[\vv_i(v_i) \times \mathbbm{1}(f \leq v_i, f \leq \critical_i, \EmptyRank_i)] = f \cdot Pr(v_i \geq f) \times \mathbbm{1}(f \leq \critical_i, \EmptyRank_i)$.
We therefore have
\begin{equation}
    \notag
    \begin{split}
        \E&_{\vec{v} \sim \Pi_{j = 1}^n D_j}\Big[\sum_{i=1}^n p_i(v_i, \vec{v}, \vec{b} | \strategy) \times \mathbbm{1}(f \leq \critical_i, \EmptyRank_i) \Big] \leq \frac{1}{\alpha} \cdot \sum_{i = 1}^n \E_{v_{-i} \sim D_{-i}} \Big[f \cdot Pr(v_i \geq f) \times \mathbbm{1}(f \leq \critical_i, \EmptyRank_i)\Big] \\
        &\leq \frac{1}{\alpha} \cdot \sum_{i = 1}^n \Bigg(\frac{r_i}{(1-\alpha) \, f + \alpha r_i}\Bigg)^{\frac{1}{1-\alpha}} \times \Big(\frac{f}{r_i} \Big) \cdot r_i Pr_{v_i \sim D_i}(v_i \geq r_i) \cdot Pr_{v_{-i} \sim D_{-i}}(f \leq \critical_i, \EmptyRank_i) \\
        &\leq \frac{1}{\alpha} \Bigg(\frac{\max_i \{r_i\}}{(1-\alpha) \, f + \alpha \max_i \{r_i\}}\Bigg)^{\frac{1}{1-\alpha}} \times \Big(\frac{f}{\max_i \{r_i\}} \Big) \sum_{i = 1}^n r_i \cdot Pr_{v_i \sim D_i}(v_i \geq r_i) \cdot Pr_{v_{-i} \sim D_{-i}}(f \leq \critical_i, \EmptyRank_i)
    \end{split}
\end{equation}
The second line is a direct consequence of \Cref{thm:MHRWeirdBound} in \Cref{appendix:mhr-properties}.
As for the final inequality, we prove that $\big(\frac{r}{(1-\alpha) \, f + \alpha r}\big)^{\frac{1}{1-\alpha}} \times \big(\frac{f}{r} \big)$ is monotone increasing in $r$ as long as $r \leq f$ in \Cref{thm:Mon}.

\begin{lemma} \label{thm:Mon}
    When $\frac{f}{t} \geq 1$, $\Bigg(\frac{t}{(1-\alpha) \, f + \alpha t}\Bigg)^{\frac{1}{1-\alpha}} \times \Big(\frac{f}{t} \Big)$ is increasing in $t$.
\end{lemma}
\begin{proof}
    Let $x = \frac{f}{t}$. We would want to prove $\Bigg(\frac{1}{(1-\alpha) \, x + \alpha}\Bigg)^{\frac{1}{1-\alpha}} \times x$ is decreasing in $x$ when $x \geq \frac{1}{\alpha}$.
    The above is equivalent to proving
    $$\frac{1}{(1-\alpha)x^\alpha + \alpha x^{-(1-\alpha)}}$$
    is decreasing when $x \geq 1$.
    The derivative of the denominator easily confirms the same.
\end{proof}

Since we choose $f$ such that $\frac{1}{\alpha} \times \big(\frac{\max_i \{r_i\}}{(1-\alpha) \, f + \alpha \max_i \{r_i\}}\big)^{\frac{1}{1-\alpha}} \times \big(\frac{f}{\max_i \{r_i\}} \big) \leq \frac{1}{n}$,
\begin{equation}
    \notag
    \begin{split}
        \E_{\vec{v} \sim \Pi_{i = 1}^n} \Big[\sum_{i=1}^n p_i(v_i, \vec{v}, \vec{b} | \strategy) \times \mathbbm{1}(f \leq \critical_i, \EmptyRank_i) \Big] &\leq \frac{1}{n} \sum_{i = 1}^n r_i \cdot Pr_{v_i \sim D_i}(v_i \geq r_i) \cdot Pr_{v_{-i} \sim D_{-i}}(f \leq \critical_i, \EmptyRank_i) \\
        &= \frac{1}{n} \sum_{i = 1}^n \E_{\vec{v} \sim \Pi_{j = 1}^n} \Big[\vv_i(v_i) \times \mathbbm{1}(r_i \leq v_i, f \leq \critical_i, \EmptyRank_i) \Big]
    \end{split}
\end{equation}
\end{proof}

\begin{proof}[Proof of \Cref{thm:MatroidReg}]
We recall our observations that the auctioneer cannot increase the revenue extracted from bidder $i$ whenever $f > \critical_i$ or the event $\FullRank_i$ happens.
After combining the above observation with \Cref{thm:RegEmptyRank}, we sketch a mechanism with a virtual surplus greater than the auctioneer's expected revenue from being strategic in the DRA.
The optimal mechanism will only obtain a greater virtual surplus and thus, a greater revenue.

Condition on $\critical_i < f$ or the event $\FullRank_i$ getting realized.
Then, the auctioneer cannot increase its revenue from bidder $i$ by concealing any bids.
Thus,
\begin{equation} \label{eqn:Smallf}
    \begin{split}
        \E_{\vec{v} \sim \Pi_{j = 1}^n}\Big[p_i(v_i, \vec{v}, \vec{b} | \strategy) \times \mathbbm{1}(f > \critical_i)\Big] &\leq \E_{\vec{v} \sim \Pi_{j = 1}^n}\Big[p_i(v_i, \vec{v}, \vec{b}) \times \mathbbm{1}(f > \critical_i)\Big] \\
        &= \E_{\vec{v} \sim \Pi_{j = 1}^n}\Big[\vv_i(v_i) \, x_i(v_i, \vec{v}, \vec{b}) \times \mathbbm{1}(f > \critical_i)\Big]
    \end{split}
\end{equation}

\begin{equation} \label{eqn:Largef}
    \begin{split}
        \E_{\vec{v} \sim \Pi_{j = 1}^n}\Big[p_i(v_i, \vec{v}, \vec{b} | \strategy) \times \mathbbm{1}(f \leq \critical_i, \FullRank_i)\Big] &\leq \E_{\vec{v} \sim \Pi_{j = 1}^n}\Big[p_i(v_i, \vec{v}, \vec{b}) \times \mathbbm{1}(f \leq \critical_i, \FullRank_i)\Big] \\
        &= \E_{\vec{v} \sim \Pi_{j = 1}^n}\Big[\vv_i(v_i) \, x_i(v_i, \vec{v}, \vec{b}) \times \mathbbm{1}(f \leq \critical_i, \FullRank_i)\Big]
    \end{split}
\end{equation}
Adding \Cref{eqn:Smallf} and \Cref{eqn:Largef}, summing over all bidders $i$ and adding \Cref{thm:RegEmptyRank}, we get
\begin{equation}
    \notag
    \begin{split}
        \sum_{i=1}^n \E_{\vec{v} \sim \Pi_{j = 1}^n}\Big[p_i(v_i, \vec{v}, \vec{b} | \strategy)\Big] &\leq \frac{1}{n} \sum_{i = 1}^n \E_{\vec{v} \sim \Pi_{j = 1}^n} \Big[\vv_i(v_i) \times \mathbbm{1}(r_i \leq v_i, f \leq \critical_i, \EmptyRank_i) \Big] \\
        &\qquad + \E_{\vec{v} \sim \Pi_{j = 1}^n}\Big[\vv_i(v_i) \, x_i(v_i, \vec{v}, \vec{b}) \times (1-\mathbbm{1}(f \leq \critical_i, \EmptyRank_i))\Big]
    \end{split}
\end{equation}

The following allocation rule achieves a virtual surplus at least the right hand side of the above inequality.
Compute the allocation rule $x(\vec{v}, \vec{b})$ from fabricating a set of bids $\vec{b}$.
\begin{enumerate}
    \item Amongst all bidders $i$ such that $v_{-i}$ satisfy $f \leq \critical_i$ and $\EmptyRank_i$, choose one uniformly at random and allocate to the chosen bidder $i$ if $v_i \geq r_i$.
    \item Allocate all other bidders according to $x(\vec{v}, \vec{b})$.
\end{enumerate}
We still have to verify that the above allocation rule is feasible.
We prove feasibility from allocating bidders according to bullet 1 by considering the following two cases -- whether bidder $i$ is allocated by $x(\vec{v}, \vec{b})$ or not.

When bidder $i$ is excluded by $x(\vec{v}, \vec{b})$, from the definition of $\EmptyRank_i$, there exists an allocated fabricated bid that can be feasibly replaced by bidder $i$ and thus, allocating bidder $i$ alongside the other allocated bidders is feasible.
If $x(\vec{v}, \vec{b})$ allocates bidder $i$, then, by \Cref{thm:MatroidConceal}, the optimal allocation with bidder $i$'s bid concealed allocates the same set of bidders and possibly one additional bidder.
From the definition of $\EmptyRank_i$, at least one of the allocated bids that can be feasibly replaced by $i$ must be fake.
Thus, including bidder $i$ does not break feasibility.
The optimal mechanism $x^{\OPT}$ only has a larger virtual surplus, and thus, a larger revenue that the auctioneer's revenue from being strategic.
\end{proof}

\subsection{Omitted Proofs in \Cref{sec:lowerbound-beyond-matroids}}
\label{appendix:proof-of-CreativeConstraint}

For \Cref{thm:non-matroid-impossibility}, we will recreate the proof of \Cref{thm:CreativeConstraint}.
For some bidder $i$ and a non-matroid feasibility constraint $\Feasibility$, we will find two disjoint sets of bidders $X$ and $Y$ of the same cardinality such that $X \cup \{i\}, Y \in \Feasibility$ and all feasible subsets of $X \cup Y \cup \{i\}$ other than $X \cup i$ have a cardinality at most $|X|$. 
Once such sets $X$ and $Y$ are identified, the remainder of the proof follows closely to \Cref{thm:CreativeConstraint}, replacing the bidders $x$ and $y$ with the sets $X$ and $Y$ respectively.

\begin{proof}[Proof of \Cref{thm:non-matroid-impossibility}]
We will first prove \Cref{thm:non-matroid-impossibility} assuming the existence of a bidder $i$, sets $X$ and $Y$ such that $|X| = |Y|$, $X \cup \{i\}, Y \in \Feasibility$ and all feasible subsets of $X \cup Y \cup \{i\}$ apart from $X \cup \{i\}$ have a cardinality at most $|X|$.
We show that such sets exist later in \Cref{lem:non-matroid}.

When there is only a single bidder $i$ whose value is drawn from the exponential distribution with mean $1$, the miner fabricates the bidders $X \cup Y$ and reports their distributions to be the same exponential distribution as $i$.
Further, it reports a bid $b_x = b_y = f+2$ for all bidders $x \in X$ and $y \in Y$ for any collateral $f$ charged by the DRA.

In the revelation phase, the auctioneer follows a strategy similar to \Cref{thm:CreativeConstraint}.
If bidder $i$'s value $v_i \leq |X| (f+1) + 1$, the auctioneer reveals all its bids.
All feasible subsets contained in $X \cup Y \cup \{i\}$ that does not include $i$ has a cardinality at most $|X|$.
Thus, the virtual-value-optimal allocation allocates the set $X$ and the bidder $i$ as long as it has a positive virtual value.
Thus, bidder $i$'s critical bid equals $1$, the same as when the auctioneer is honest.

Now suppose that $v_i > |X| (f+1) + 1$.
Then, the auctioneer conceals all bidders $x \in X$, and pays a penalty $|X| \times f$.
However, $\{i\}$ is the virtual-value-optimal allocation with a virtual value at least $\vv(|X| (f+1) + 1) = |X| (f+1)$, against the total virtual value equal to $|Y| (f+1) = |X| (f+1)$ of the bidders in $Y$.
Further, bidder $i$'s critical bid equals $|X| (f+1) + 1$.
The auctioneer receives a net revenue equal to $|X| (f+1) + 1 - |X| \times f = |X| + 1 > 1$, the revenue from the honest strategy.
Thus, the strategic deviation generates a strictly larger revenue than the honest strategy, and thus, the DRA is not credible.

It only remains to show the existence of the bidder $i$ and the sets $X$ and $Y$ for an arbitrary non-matroid feasibility constraint. 

\begin{lemma} \label{lem:non-matroid}
    For any downward-closed feasibility constraint $\Feasibility$ that is not a matroid, there exists two feasible sets $\hat{X}, Y \in \Feasibility$ such that $\hat{X} \cap Y = \emptyset$, $|\hat{X}| = |Y| + 1$, and for any feasible subset $S$ of $\Feasibility$ contained in $\hat{X} \cup Y$, $S \neq \hat{X}$, then $|S| < |\hat{X}|$.
\end{lemma}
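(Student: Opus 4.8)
The plan is to convert ``not a matroid'' into a concrete finite obstruction and then extract the two sets from it. Since $\Feasibility$ is downward-closed, it fails to be a matroid exactly when the augmentation property of \Cref{def:Matroids} fails, i.e.\ there are feasible sets $W, \hat{W}$ with $|W| < |\hat{W}|$ and no $w \in \hat{W} \setminus W$ such that $W \cup \{w\}$ is feasible. First I would trim $\hat{W}$ down one element at a time: for any $\hat{W}' \subseteq \hat{W}$ we still have $\hat{W}' \setminus W \subseteq \hat{W} \setminus W$, so non-augmentability is preserved, and I may assume $|\hat{W}| = |W| + 1$. Equivalently, on $T_0 = W \cup \hat{W}$ the set $W$ is a maximal feasible set while $\hat{W}$ is a strictly larger feasible set; this is the seed of a ``two maximal feasible subsets of different sizes'' obstruction.

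To sharpen the seed into the clean form the lemma demands, I would pass to a minimal obstruction. Using the standard characterization (a downward-closed system is a matroid iff, for every ground set, all its maximal feasible subsets are equinumerous), pick a set $T$ of minimum cardinality whose maximal feasible subsets do not all have the same size. Let $Q \subseteq T$ be a maximum feasible subset, $|Q| = r$, and let $P \subseteq T$ be a maximal feasible subset with $|P| = p < r$. From minimality I would establish: (i) for every $e \in T \setminus P$ all maximal feasible subsets of $T \setminus \{e\}$ have size $p$ (since $P$ stays maximal), and for every $e \in T \setminus Q$ they all have size $r$ (since $Q$ survives); (ii) hence $T\setminus P$ and $T \setminus Q$ are disjoint, so $P \cup Q = T$; (iii) $p = r-1$, since otherwise a single-element deletion would expose a strictly smaller non-equinumerous subset, contradicting minimality; and (iv) $Q$ is the \emph{unique} maximum feasible subset of $T$.

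With $\hat{X} := Q$, the remaining task is to produce $Y$: a feasible set, disjoint from $Q$, of size exactly $r-1$, such that $Q$ is still the unique maximum feasible subset of $Q \cup Y$. The complement $T \setminus Q = P \setminus Q$ is feasible and disjoint from $Q$, but its size is $(r-1) - |P \cap Q|$, falling short of $r-1$ by the overlap $|P \cap Q|$. I would therefore enlarge it to $Y$ by adjoining $|P \cap Q|$ further ground-set elements that (a) keep $Y$ feasible and (b) cannot be combined with elements of $Q$ into a feasible set of size $r$ --- concretely, replacing each element of $P \cap Q$ by a fresh ``stand-in'' lying off $Q$. Verifying uniqueness is then a case check: any feasible $S \subseteq Q \cup Y$ with $S \cap Y = \emptyset$ lies in $Q$, so $|S| \le r$ with equality only at $S = Q$; the mixed case, where $S$ meets both $Q$ and $Y$, is exactly what the construction of $Y$ is designed to rule out, via the non-augmentability of $P$ together with a symmetric-exchange argument in the spirit of \Cref{thm:MatroidBasisExchange} (run so as to show that no feasible $r$-set other than $Q$ can appear).

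I expect the main obstacle to be precisely step (iv) together with the construction of $Y$. The two natural witnessing sets $P$ and $Q$ generically overlap, and the complement $T \setminus Q$ is too small, so the heart of the proof is simultaneously guaranteeing disjointness of $\hat{X}$ and $Y$ and preserving the uniqueness of the maximum after padding. Making the stand-in elements rigorous --- showing that the ground set supplies feasible elements enlarging $Y$ without ever creating a competing maximum feasible set of size $r$ --- is where the downward-closed-but-non-matroid structure must be exploited most carefully, and is the step I would expect to require the most delicate argument.
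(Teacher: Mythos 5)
Your route---reducing ``not a matroid'' to a minimum-cardinality set $T$ whose maximal feasible subsets are not equinumerous---is genuinely different from the paper's proof, which inducts on the size of the ground set after choosing a violating pair $(X_0,Y_0)$ that maximizes $|X_0 \cap Y_0|$. Your claims (i)--(iii) are correct and can be verified essentially as you indicate. Claim (iv), however, is false, and it is exactly where the proposal breaks. Take $E=\{a,b,c,d\}$ and let $\Feasibility$ be the downward closure of $\{a,b,c\}$, $\{a,b,d\}$ and $\{c,d\}$. This $\Feasibility$ is downward-closed and not a matroid ($\{c,d\}$ cannot be augmented from $\{a,b,c\}$, since $\{a,c,d\}$ and $\{b,c,d\}$ are infeasible). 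Every proper subset of $E$ is either itself feasible or has all of its maximal feasible subsets of size $2$, so the unique minimal bad set is $T=E$, with deficient maximal set $P=\{c,d\}$, $p=2$, $r=3$, and \emph{two} maximum feasible subsets, $\{a,b,c\}$ and $\{a,b,d\}$. So uniqueness fails for either choice of $Q$; moreover your ``stand-in'' padding has nowhere to go: taking $Q=\{a,b,c\}$ gives $T\setminus Q=\{d\}$, and there is no ground-set element outside $Q\cup\{d\}$ left to adjoin, so no feasible $Y$ disjoint from $Q$ of size $r-1=2$ exists.

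The deeper point is that the obstacle is not a fixable defect in your padding argument: the statement you are aiming at is false. Since any bad set has $r \geq 2$, your construction always outputs $|Y|=r-1\geq 1$; but in the example above an exhaustive check shows the conclusion of \Cref{lem:non-matroid} holds only degenerately, with $\hat{X}$ a feasible singleton and $Y=\emptyset$ (for instance $\hat{X}=\{a,b\}, Y=\{c\}$ fails because $\{a,b,c\}$ is feasible; $\hat{X}=\{c,d\}, Y=\{a\}$ fails because $\{a,c\}$ is feasible with size $|\hat{X}|$; and no feasible $3$-set admits any disjoint feasible $2$-set). That degenerate reading cannot support \Cref{thm:non-matroid-impossibility}, whose deviation needs $|\hat{X}|\geq 2$. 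For completeness, the same example also defeats Case 1 of the paper's own induction: for the violating pair $(\{a,b,c\},\{c,d\})$ with intersection $\{c\}$, restricting to $E\setminus\{c\}=\{a,b,d\}$ yields the free matroid, contradicting the paper's assertion that non-matroidness is preserved, so the induction hypothesis cannot be invoked. In short: the gap to flag in your write-up is step (iv) together with the construction of $Y$, and no amount of care with the stand-in elements can repair it, because the nontrivial form of the lemma itself fails on the constraint above.
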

For any $i \in \hat{X}$, $X = \hat{X} \setminus \{i\}$ and $Y$ satisfy these properties of the sets $X$ and $Y$.
\end{proof}

\begin{proof}[Proof of \Cref{lem:non-matroid}]
    Let the constraint $\Feasibility$ be defined over the set of elements $E$.
    Suppose that $|E| \geq 3$ (all feasibility constraints over $2$ or less elements are matroids).
    It is not hard to verify that the claim holds when $E = 3$.
    We will prove the claim by induction on $e$, the number of items in $E$.

    Let the lemma be true for all non-matroid constraints over $e$ elements for all $e \leq n$.
    We will prove the claim when $|E| = n+1$.

    Since $\Feasibility$ is not a matroid, there exists some sets $A, B$ such that the augmentation property is not satisfied, i.e, $|A| > |B|$ and $B \cup \{i\} \not \in \Feasibility$ for any $i \in A$.
    Let 
    $$\mathcal V = \{(A,B) : A,B\in \Feasibility, |A|>|B|, B \cup \{e\} \notin \Feasibility~\forall e \in A \setminus B\}$$
    be the collection of all such sets $A$ and $B$.

    Choose $(X_0, Y_0) \in \mathcal{V}$ to maximize $|X_0 \cap Y_0|$ amongst all pairs in $\mathcal{V}$.
    We handle the following three cases separately --- either $X_0 \cap Y_0 \neq \emptyset$, or $X_0 \cap Y_0 = \emptyset$ and $|X_0| > |Y_0| + 1$, or $X_0 \cap Y_0 = \emptyset$ and $|X_0| = |Y_0| + 1$.
    \begin{enumerate}
        \item $X_0 \cap Y_0 \neq \emptyset$: Consider the collection $\hat{\Feasibility}$ of all feasible sets contained in $\Feasibility \setminus (X_0 \cap Y_0)$.
        Observe that $\hat{\Feasibility}$ is also not a matroid constraint since $X_0 \setminus (X_0 \cap Y_0)$ and $Y_0 \setminus (X_0 \cap Y_0)$ contained in $\hat{\Feasibility}$ also does not satisfy the augmentation property.
        By the induction hypothesis, there exists $\hat{X}, Y \in \hat{\Feasibility}$ which satisfies all the necessary properties.
        By construction, $\hat{X}$ and $Y$ are also feasible under $\Feasibility$ as required.
        \item $X_0 \cap Y_0 = \emptyset$ and $|X_0| > |Y_0| + 1$: For some element $i \in X_0$, consider the collection $\hat{Feasibility}$ of feasible sets contained in $E \setminus \{i\}$.
        $\hat{\Feasibility}$ is still not a matroid yet, since $X_0 \setminus \{i\}$, $Y$ continue to violate the augmentation property.
        By induction hypothesis, there exists $\hat{X}, Y \in \hat{\Feasibility}$ satisfying all the required properties.
        \item $X_0 \cap Y_0 = \emptyset$ and $|X_0| = |Y_0| + 1$: We will set $\hat{X} = X_0$ and $Y = Y_0$.
        We only have to verify that any feasible subset contained in $X_0 \cup Y_0$ that is not equal to $X_0$ has a cardinality strictly less than $X_0$.
        Assume otherwise and suppose that there exists $S$ such that $S \subseteq X_0 \cup Y_0$ and $|S| \geq X_0$.
        Note that $(S, Y_0) \in \mathcal{V}$.
        Further, $S \cap Y_0 > 0$.
        Otherwise, $S$ is contained in $X_0 \cup Y_0$, has cardinality at least $X_0$ and does not intersect with $Y_0$, and thus, $S$ must be equal to $X_0$.
        However, we choose $(X_0, Y_0) \in \mathcal{V}$ that maximizes the intersection between the two sets.
        Contradiction.
        Thus, any feasible subset contained in $X_0 \cup Y_0$ must either be $X_0$ or have a cardinality strictly smaller than $|X_0|$, as required. \qedhere
    \end{enumerate}
\end{proof}

\subsection{Omitted Proofs in \Cref{sec:LowerMHR}} \label{sec:proofofkkMHR}

\begin{proof}[Proof of \Cref{thm:nnMHR}]
As an immediate corollary to \Cref{thm:11MHR}, we can see that the largest monopoly reserve is also necessary in single-item multi-agent environments.
Consider a bidder with its value drawn from the exponential distribution (called the large bidder) and $(n-1)$ bidders with a deterministic value $\eta$ for a sufficiently small $\eta$ (called small bidders).
The revenue from being honest can be upper bounded by the revenue of an environment with two items --- the first item can solely be allocated to the large bidder and the second item is always allocated to one of the small bidders.
The auctioneer's expected revenue from allocating the item to the large bidder at a price equal to the monopoly reserve $1$ equals $\frac{1}{e}$.
Therefore, the expected honest revenue is at most $\frac{1}{e}+\eta$.
Choose $\eta$ to be sufficiently small such that the auctioneer's revenue while playing the strategy described in \Cref{thm:11MHR} allocating only the large bidder is greater than $\frac{1}{e} + \eta$.
Then, a collateral equal to $1 - \epsilon$ is not sufficient for single-item multi-agent environments.    
\end{proof}

For \Cref{thm:kkMHR} and \Cref{thm:1nMHR}, at a high level, the proofs are similar to \Cref{thm:11MHR} and consist of considering the difference in the auctioneer's revenue from being strategic and being honest when the bidders' values belong to different intervals (albeit requiring a more careful analysis for these environments).

\begin{proof}[Proof of \Cref{thm:kkMHR}]
Consider the auctioneer's strategy where it inserts a fake bid equal to $1+\delta$ and hides the bid when all bidders have values larger than $1$, but there exists a bidder $i$ with value $v_i \in [1, 1+\delta]$.
It reveals the fake bid in all other scenarios.

The auctioneer receives a revenue $k \, (1+\delta)$ when all bidders have a value larger than $1+\delta$, and $1$ per allocated bidder even if a single bidder has a value in the range $[0, 1+\delta]$.
The auctioneer pays a penalty $1-\epsilon$ when (a) there are a positive number of bidders with values between $1$ and $1+\delta$ and (b) all bidders have values that beat the monopoly reserve (this is exactly when the auctioneer hides its fake bid).

Once again, we will evaluate the difference in the auctioneer's revenue when it is strategic and when it is honest.
In  both the strategies, the auctioneer receives a revenue of $1$ per allocated bidder irrespective of the allocated bidders' values.
Further, the strategic auctioneer receives an extra $k\delta$ revenue when all bidders have a value at least $(1+\delta)$, which happens with a probability $e^{-k\, (1+\delta)}$.
The auctioneer pays a penalty precisely when all bidders have value greater than $1$ (probability of which equals $e^{-k}$), but not all of them have a value greater than $1+\delta$ (which happens with a probability $e^{-k(1+\delta)}$).
Thus, the auctioneer's expected marginal revenue from being strategic equals $k \delta e^{-k \, (1+\delta)} - (1-\epsilon) \, (e^{-k} - e^{-k \, (1+\delta)})$
which is positive if and only if
$$\frac{k\delta \, e^{-k \, (1+\delta)}}{(e^{-k} - e^{-k \, (1+\delta)})} \geq (1-\epsilon).$$
Once again, the limit of the left hand side as $\delta \xrightarrow{} 0$ is $1$, and thus, for a small enough $\delta$, the strategic auctioneer receives a larger revenue than being honest.    
\end{proof}

\begin{proof}[Proof of \Cref{thm:1nMHR}]    
    Similar to the single-item single agent environment, the auctioneer inserts a fake bid equal to $1+\delta$, and conceals the fake bid if the highest bid is in the range $[1, 1+\delta]$.

    We break down the proof into five cases depending on the values of the largest and the second largest bids:
    \begin{enumerate}
        \item Both the largest and the second largest bids are greater than $1+\delta$: In this case, the presence of the fake bid is irrelevant, and the auctioneer generates the same revenue irrespective of whether it is honest or strategic.
        \item The largest bid is greater than $1+\delta$, but the second largest bid is in the range $[1, 1+\delta]$:
        The auctioneer receives a payment $1+\delta$ when being strategic, but receives the second highest bid (smaller than $1+\delta$) when being truthful.
        However, as the number of bidders $n \xrightarrow{} \infty$, it is very likely that the second highest bid is quite close to $1+\delta$.
        Thus, we will pretend that the auctioneer receives the same revenue both, when being strategic and when being honest.
        \item The largest bid is greater than $1+\delta$ and the second largest bid is smaller than $1$: The auctioneer makes $1+\delta$ when being strategic, while it makes $1$ by being honest.
        Thus, the auctioneer makes $\delta$ more by being strategic in this case, which happens with a probability $n \, e^{-(1+\delta)}(1-e^{-1})^{(n-1)}$.
        \item The largest bid is in the range $[1, 1+\delta]$ (irrespective of the value of the second largest bid): The auctioneer receives the same payment when being strategic and when being honest.
        However, the auctioneer has to pay a penalty $(1-\epsilon)$ since it has to conceal its fake bid.
        This case happens with a probability $(1-e^{-(1+\delta)})^n - (1-e^{-1})^n$ (i.e, the difference between the probabilities of all bids being smaller than $1+\delta$ and smaller than $1$, respectively).
        \item Both, the largest and the second largest bids are smaller than $1$: No bids beat the reserve and thus, no bidder is allocated the item.
        The auctioneer does not have to conceal its fake bid, since no bidder will get allocated anyways.
    \end{enumerate}

    Thus, the expected difference in revenues equals $$\delta ne^{-(1+\delta)}(1-e^{-1})^{n-1} - (1-\epsilon) \, \Big[(1-e^{-(1 + \delta)})^n - (1-e^{-1})^n \Big]$$
    which is greater than zero if and only if
    $$\frac{\delta n e^{-(1+\delta)}(1-e^{-1})^{n-1}}{\Big[(1-e^{-(1 + \delta)})^n - (1-e^{-1})^n \Big]} > 1-\epsilon.$$
    As was the case in the single-item single-agent environment, the left hand side tends to $1$ as $\delta$ approaches zero.
    Thus, the auctioneer can inject a bid sufficiently close to $1$ to extract larger revenue by being strategic than while playing the honest strategy.
\end{proof}

\subsection{Proof of \Cref{thm:PrivatePublicSep}}
\label{appendix:proof-of-PrivatePublicSep}

Consider the following strategy by the auctioneer.
The auctioneer does not communicate any of the bids placed by the bidders to each other.
The auctioneer sends $(k-1)$ large bids, all greater than $1$ to each of the $k$ bidders and reveals all of them irrespective of the bids submitted by the agents.
Finally, the auctioneer communicates a bid equal to $1+\delta$ to all users, and conceals the bid from any bidder who realizes a value between $1$ and $1+\delta$.

From the perspective of each bidder, $k-1$ items are sold to the $k-1$ large fake bids inserted by the auctioneer and the bidder is competing for the final available item.
Thus, the auctioneer has reduced the auction into $k$ parallel single item auctions, with two bidders in each of them (one real bidder and the other, the fake bid $1+\delta$). However, the total collateral put up by the fake bid is $1$, and not $1$ each for the $k$ parallel auctions.

The auctioneer's revenue from being strategic can be lower bounded as follows.
The auctioneer always pays a penalty $1$, to get a ``license'' to cheat with its fake bid.
With the penalty paid upfront, the auctioneer makes a revenue pointwise larger than the optimal auction --- $\delta$ more whenever a bidder has a value larger than $1+\delta$ (which happens with a probability $e^{-(1+\delta)}$), and exactly the same as the optimal auction at other values.
Thus, the expected additional revenue from inserting a fake bid with value $1+\delta$ equals $k \, \delta e^{-(1+\delta)}$.

For a sufficiently large $k$, the additional revenue is larger than the ``license fee'' $1$ paid by the auctioneer, and thus, it is profitable to deviate from the honest strategy.

\subsection{Proof of \Cref{thm:ADRA}}
\label{appendix:proof-of-ADRA}


To analyze credibility of the ADRA, we will make the following simplifying modification.
When the set of $\mathsf{active}$ bidders have been trimmed down sufficiently so that all bidders in $\activeBid_{\ell}$ can be simultaneously allocated by the ADRA, we break the loop in \Cref{bul:ADRALoop} in \Cref{def:ADRA} (i.e, the original version) to stop the auction (let this happen during level $\ell$).
We then allocate all $\mathsf{active}$ bidders at the final price, and ask them to reveal their bids in level $\ell + 1$.
An alternative could be to continue with the loop in \Cref{bul:ADRALoop}, increasing the virtual price until it exceeds all virtual bids and allocate all $\mathsf{promised}$ bidders in the final simulation of $\hat{\mechanism}$ before all bidders $\mathsf{quit}$.
The alternate design ensures that all bidders reveal their bids only after the virtual price has exceeded the virtual value of their bids.
Note that this is not the case in the original version, since bidders allocated during the final stage (\Cref{bul:ADRAFinal}) might be asked to reveal their bids even though the virtual price is smaller than their virtual bids.

The two variants of the ADRA are strategically equivalent to the auctioneer.
Let $\activeBid_{\ell}$ be the set of active bidders at the start of level $\ell + 1$.
For any set of bids $\mathsf{aborted}$ by the auctioneer during level $\ell + 1$ (or after, in case of the modified version), the payment charged to any bidder is independent of the bids of the bidders in $\activeBid_{\ell}$, as long as they are large enough to remain $\mathsf{active}$ during round $\ell + 1$.
All virtual bids not belonging to $\activeBid_{\ell}$ are strictly smaller than the virtual bids belonging to $\activeBid_{\ell}$.
Since $\hat{\mechanism}$ optimizes virtual-surplus, and the feasibility constraint is a matroid, all bidders in $\activeBid_{\ell}$ are allocated irrespective of the bids $\mathsf{aborted}$ by the auctioneer.
Thus, the critical bids of these bidders are independent of the other bids in $\activeBid_{\ell}$, and is purely a function of the bids $\mathsf{aborted}$ by the auctioneer in levels $\ell + 1$ (or after, in the modified version) and the bids already revealed by the bidders that have $\mathsf{quit}$ in previous levels.
Similarly, conditioned on allocating all bidders in $\activeBid_{\ell}$, the critical bids of all $\mathsf{promised}$ bidders are also purely a function of the bids $\mathsf{aborted}$ by the auctioneer and previously revealed bids.
Therefore, learning the bids in $\activeBid_{\ell}$ during level $\ell + 1$ in the original version does not allow the auctioneer to increase its revenue.
On the other hand, $\mathsf{aborting}$ bids in levels after $\ell + 1$ in the modified game also does not help the auctioneer, since the auctioneer could have extracted a strictly larger revenue by $\mathsf{aborting}$ those bid during level $\ell + 1$.

While the original version achieves a better communication complexity, since it terminates in a lesser number of levels, it is convenient to show that the modified version is credible.
In the following, we show credibility of ADRA for matroids by proving any deviation $G'$ for the auctioneer cannot improve the revenue.



But first, let us define the suggested strategy $S_i^\ADRA(b_i)$ for bidder $i$ as follows:
\begin{itemize}
    \item Commit to bidding $b_i$ (which may differ from her true value $v_i$).
    \item $\mathsf{Quit}$ at the first level $\ell$ where $\overline{\vv_i}(b_i) < p_\ell^\vv$ and reveal the commitment.
\end{itemize}

Observe that bidders should always use some suggested strategy, as other strategies always end up with a negative utility, and it suffices for us to consider suggested strategies only.

\begin{lemma}\label{lemma:use-suggested-strategy}
    It is always a best response for a real bidder $i \in [n]$ to implement the suggested strategy $S_i^\ADRA(b_i)$ for some $b_i$.
\end{lemma}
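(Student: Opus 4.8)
The plan is to prove a clean dichotomy: any participating strategy of bidder $i$ begins by committing to some bid $b_i$, and among all strategies that fix this commitment, the \emph{only} one that never triggers an $\mathsf{abort}$ is $S_i^\ADRA(b_i)$. The key observation is that the virtual-price schedule $p^\vv_\ell = \incr^\ell(0)$ is deterministic and strictly increasing (as $\incr(p) > p$), so the committed bid pins down a unique ``legal'' quit level $\ell^\ast$, namely the first level with $p^\vv_{\ell^\ast} > \overline{\vv}_i(b_i)$. I would then walk through the abort conditions in \Cref{def:ADRA}: if the bidder reveals while quitting at some $\ell < \ell^\ast$ then $\overline{\vv}_i(b_i) \geq p^\vv_\ell$ violates the upper bound in abort condition~(ii); if she quits at the right level but withholds her bid or reveals a value inconsistent with $c_i$, condition~(i) fires (here I use that $\comm$ is perfectly binding, so she cannot open $c_i$ to anything other than $b_i$); and if she stays $\mathsf{active}$ past $\ell^\ast$, she is eventually asked to reveal — either at a later quit in \Cref{bul:ADRAQuit} or in the final level \Cref{bul:ADRAFinal} — where now the lower-bound check $p^\vv_{\ell-1} \leq \overline{\vv}_i(b_i)$ fails because $\overline{\vv}_i(b_i) < p^\vv_{\ell^\ast} \leq p^\vv_{\ell-1}$. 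The unique surviving behaviour is to $\mathsf{quit}$ exactly at $\ell^\ast$ and reveal $b_i$ truthfully, which is precisely $S_i^\ADRA(b_i)$.

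Next I would quantify the cost of aborting. Allocations and payments are realized only in the final simulation of $\hat{\mechanism}$, and whenever a bidder is declared $\mathsf{aborted}$ the mechanism burns her entire collateral and re-simulates $\hat{\mechanism}$ with her bid deleted; consequently any interim $\mathsf{promise}$ she held is voided and she is never allocated and makes no further payment. Hence her utility equals exactly $-(\text{collateral burnt})$. Since every $\mathsf{active}$ bidder has locked in at least the initial deposit $f_0 = \max_j r_j > 0$, which only grows across levels, this utility is strictly negative. Combining with the first paragraph, every non-suggested strategy triggers an abort somewhere along its execution and therefore earns bidder $i$ strictly negative utility.

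Finally I would close by dominance. Consider the suggested strategy $S_i^\ADRA(r_i)$ that commits the reserve bid, so $\overline{\vv}_i(r_i) = 0$: the bidder is asked to $\mathsf{quit}$ at level $1$ (as $0 < p^\vv_1$), the check $p^\vv_0 = 0 \leq 0 < p^\vv_1$ passes, she reveals consistently, wins nothing, and recovers her deposit in full, for a utility of exactly $0$. Since every non-suggested strategy yields strictly negative utility, each is strictly dominated by this trivial suggested strategy; hence the supremum of bidder $i$'s utility over \emph{all} strategies is attained within the family $\{S_i^\ADRA(b_i)\}_{b_i}$, which is the claim.

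I expect the delicate step to be the exhaustive case analysis of the first paragraph — establishing that the non-aborting behaviours coincide \emph{exactly} with the suggested strategies. In particular the ``stay active too long'' branch and its interaction with bidders who have already been $\mathsf{promised}$ (who, per \Cref{def:ADRA}, still $\mathsf{quit}$ only as dictated by \Cref{bul:ADRAQuit} and hence are still subject to the reveal-and-check at their price) require verifying that no adaptive deviation can simultaneously evade both abort checks and improve on $S_i^\ADRA(b_i)$; everything else reduces to the single structural fact that an abort forfeits the deposit while erasing any allocation.
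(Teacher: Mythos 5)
Your proposal follows essentially the same route as the paper's (very terse) proof: any deviation from a suggested strategy triggers an $\mathsf{abort}$, which burns the deposit and voids any allocation, hence yields strictly negative utility, and is therefore dominated by a suggested strategy with non-negative utility. Your contribution beyond the paper is the explicit case analysis showing that, once a commitment $c_i$ is fixed, the abort checks in \Cref{def:ADRA} rule out every behaviour except quitting exactly at the unique legal level $\ell^\ast$ and revealing $b_i$; that analysis is correct (including the use of perfect binding) and is a worthwhile expansion of what the paper leaves implicit.

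There is, however, one concrete slip: your dominating benchmark $S_i^\ADRA(r_i)$ does \emph{not} always yield utility exactly $0$. By the footnote in \Cref{alg:APA} (inherited by the ADRA's simulation of $\hat{\mechanism}$), a bidder who $\mathsf{quits}$ when the virtual price first rises can still be allocated at the \emph{previous} level's price whenever she can be added to every feasible set --- e.g., a single bidder under a rank-one matroid. In virtual space that price is $p^\vv_0 = 0$, i.e., a charge of $\overline{\vv}_i^{-1}(0) = r_i$ in value space; if the bidder's true value satisfies $v_i < r_i$, the strategy $S_i^\ADRA(r_i)$ earns strictly negative utility, so it cannot serve as the non-negative anchor of your dominance argument (and a deviating strategy that aborts before posting any collateral, earning $0$, would then beat it). The repair is exactly the paper's choice: use the truthful suggested strategy $S_i^\ADRA(v_i)$, which is always non-negative since the bidder only remains $\mathsf{active}$ at prices below $v_i$ and any allocation charges her at most her committed bid. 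With that one substitution your argument is complete and matches the paper's.
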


\begin{proof}
    Whenever bidder $i$ deviates from suggested strategies, they will not be allocated and lose their deposit according to the perfect binding property of the commitment scheme, resulting a negative utility. Therefore, it is dominated by the truthful strategy $S_i^\ADRA(v_i)$ which always yields a non-negative utility.
\end{proof}

Also note that a bidder will always be allocated once $\mathsf{promised}$ with a fixed payment.

\begin{lemma}\label{lem:clinched-implies-allocation}
    A real bidder $i \in [n]$ using a suggested strategy $S_i^\ADRA(b_i)$ for some $b_i$ will always be allocated once $\mathsf{promised}$ with a payment that only depends on auctioneer and bidders' actions before she becomes $\mathsf{promised}$.
\end{lemma}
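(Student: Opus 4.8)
The plan is to reduce both assertions — that a $\mathsf{promised}$ real bidder is ultimately allocated, and that her payment is pinned down by the pre-promise history — to two structural facts about the virtual-pricing ascending auction $\hat{\mechanism}$ that the ADRA simulates: the matroid clinching characterisation and the monotonicity of the matroid closure, the latter already packaged for us in \Cref{thm:MatroidConceal}. First I would make the promise rule explicit in matroid language. Writing $A(p^\vv) = \{\,j \ne i : \overline{\vv}_j(b_j) \ge p^\vv,\ j \text{ not } \mathsf{aborted}\,\}$ for the pool of surviving bidders whose ironed virtual bid clears the virtual price $p^\vv$, a bidder $i$ is $\mathsf{promised}$ at the first price at which she can be appended to every feasible subset of $A(p^\vv)$, i.e. at the smallest $p^\vv_{\ell_0}$ with $i \notin \operatorname{cl}(A(p^\vv_{\ell_0}))$ in $M(E,\I)$. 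Since $i$ plays a suggested strategy $S_i^{\ADRA}(b_i)$, she remains $\mathsf{active}$ until the price reaches $\overline{\vv}_i(b_i)$, so $p^\vv_{\ell_0} \le \overline{\vv}_i(b_i)$ is well defined.

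For the allocation claim, I would show the clinch is preserved under every event that can follow it. After $i$ is $\mathsf{promised}$ the pool only shrinks: other bidders $\mathsf{quit}$ as the price rises, and the auctioneer may $\mathsf{abort}$ fake bids, each abort triggering a fresh simulation of $\hat{\mechanism}$ on a smaller bid set. In every such step the relevant pool passes from $A$ to some $A' \subseteq A$, and monotonicity of closure ($A' \subseteq A \Rightarrow \operatorname{cl}(A') \subseteq \operatorname{cl}(A)$) preserves $i \notin \operatorname{cl}(A')$, so $i$ re-clinches in the new simulation. Equivalently, a clinched element is the one the max-weight greedy adds and never discards, hence $i$ lies in the ironed-virtual-surplus-optimal set of each simulation (\Cref{thm:APAOpt}); \Cref{thm:MatroidConceal}, whose proof invokes only the exchange property and therefore applies verbatim with ironed virtual values, is precisely the statement that deleting the $\mathsf{aborted}$ fake bids keeps the surviving real winners — in particular $i$ — in the optimal set. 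Because the (modified) ADRA terminates by allocating every $\mathsf{active}$ and $\mathsf{promised}$ bidder, $i$ is allocated; note that even if $S_i^{\ADRA}(b_i)$ later has $i$ reveal, her $\mathsf{promised}$ status and locked price persist.

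For the payment claim, I would argue that the clinch price $p^\vv_{\ell_0}$, and hence the posted price $\overline{\vv}_i^{-1}(p^\vv_{\ell_0})$ charged to $i$, is a threshold blind to the exact bids of the still-$\mathsf{active}$ bidders. For every $p^\vv \le p^\vv_{\ell_0}$, any bidder $j$ that is still $\mathsf{active}$ when $i$ clinches satisfies $\overline{\vv}_j(b_j) \ge p^\vv_{\ell_0} \ge p^\vv$, so $j \in A(p^\vv)$ irrespective of how large $b_j$ is; consequently $j$'s precise value never enters the closure computation that fixes the threshold $p^\vv_{\ell_0}$. Thus $p^\vv_{\ell_0}$ depends only on the bids that have already dropped below the price (the revealed $\mathsf{quit}$ bids) together with the auctioneer's fake bids — all of which are actions recorded before $i$ is $\mathsf{promised}$ — and $\hat{\mechanism}$'s rule of charging a bidder the price at which she was $\mathsf{promised}$ locks the payment at that instant.

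The hard part will be reconciling the payment claim with the \emph{restart} mechanics: an abort of a high fake bid occurring \emph{after} $i$ has clinched re-runs $\hat{\mechanism}$ and could, a priori, recompute $i$'s clinch price. The same closure-monotonicity argument shows such a recomputation can only remove a potential spanning element and hence only lower $p^\vv_{\ell_0}$, so it never raises $i$'s payment; invoking the promised-price locking rule then pins the payment to the pre-promise quantity. Making this step airtight — certifying that no late abort changes $i$'s payment to the auctioneer's advantage — is the delicate point, and it is exactly the fact consumed by the subsequent credibility argument that aborting after level $\ell+1$ can never be profitable.
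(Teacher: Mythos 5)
Your proof is correct and takes essentially the same route as the paper's (much terser) two-sentence argument: a $\mathsf{promised}$ bidder is unspanned by the surviving competitor pool, this persists as the pool only shrinks through later $\mathsf{quit}$s and $\mathsf{abort}$s (matroid closure monotonicity), and the promise price is blind to the exact values of bids still above it, hence fixed by the pre-promise history. If anything you are more careful than the paper, whose proof disposes of exactly the restart subtlety you flag by asserting that bidders who eventually $\mathsf{abort}$ ``cannot affect bidder $i$'s threshold price''---an assertion that, as you observe, is valid only under the promise-price-locking reading of the re-simulation (a late abort can strictly lower a recomputed clinch price, never raise it).
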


\begin{proof}
    To see that bidder $i$ will always be allocated once $\mathsf{promised}$ by the end of some level $\ell$ with a payment $p$, note that $\overline{\vv_i}(p)\le p_\ell^\vv$ and every bidder $j$ that does not $\mathsf{quit}$ or $\mathsf{abort}$ by the end of the level must have committed to a bid $\overline{\vv_j}(b_j) > p_\ell^\vv$ or eventually $\mathsf{abort}$, and thus cannot affect bidder $i$'s threshold price.
\end{proof}

Given any auctioneer's deviation $G'$, a strategy profile $\vb{S}$, and bidders' value $\vb{v}$,  denote the final allocation by $A(G',\vb{S}(\vb{v})) \in \hat{\Feasibility}$.  
When it is clear from context, we may abbreviate this as $A(G')$.
In the following, we assume every real bidder $i \in [n]$ implements a suggested strategy $S_i^\ADRA(b_i)$ for some $b_i$, i.e., $\vb{S}=\vb{S}^\ADRA$.

We first show the auctioneer never $\mathsf{abort}$ fabricated bids that is $\mathsf{promised}$ using any undominated deviation $G'$.

\begin{lemma}\label{lem:undominated-deviation}
    When the auctioneer employs any undominated deviation $G'$, a fabricated bid $b_i$ is never $\mathsf{aborted}$ at some level if it is $\mathsf{promised}$ or it would be $\mathsf{promised}$ by the end of the level if not $\mathsf{aborted}$.\footnote{Here we implicitly assume the auctioneer always act last at each level and decide whether to $\mathsf{quit}$ or $\mathsf{abort}$ fabricated bids after receiving all real bidders' actions. \label{footnote:auctioneer-action-last}}
\end{lemma}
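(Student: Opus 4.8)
The plan is to argue by contradiction via a single-swap exchange argument, showing that any deviation $G'$ which $\mathsf{aborts}$ such a bid is weakly dominated by one that retains it. Invoking \Cref{lemma:use-suggested-strategy}, I fix a value profile $\vb{v}$ and assume every real bidder $i$ plays a suggested strategy $S_i^\ADRA(b_i)$. Suppose toward a contradiction that $G'$ is undominated yet $\mathsf{aborts}$ a fabricated bid $b_i$ at some level $\ell$ at which $b_i$ is already $\mathsf{promised}$, or would become $\mathsf{promised}$ by the end of level $\ell$ if kept $\mathsf{active}$. I would define a companion deviation $G''$ that agrees with $G'$ on every fabricated bid and every $\mathsf{quit}/\mathsf{abort}$ decision, except that it never $\mathsf{aborts}$ $b_i$, instead letting it become (or remain) $\mathsf{promised}$ and be allocated. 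The goal is to show $\text{revenue}(G'',\vb{v}) \ge \text{revenue}(G',\vb{v})$ for every $\vb{v}$, with strict inequality on the profiles where $G'$ actually executes the abort, contradicting undominatedness.

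The first step isolates the effect of the swap. By \Cref{lem:clinched-implies-allocation}, once $b_i$ is $\mathsf{promised}$ its allocation and payment are frozen, and a $\mathsf{promised}$ bid is set aside and no longer competes, so the threshold prices (and hence payments) of all other bidders are unaffected by whether $b_i$ is retained or removed. Consequently $G'$ and $G''$ produce identical outcomes for every bidder other than $b_i$ and at most one further real bidder: deleting the $\mathsf{promised}$ element $b_i$ from the allocated independent set and re-maximizing over the matroid re-introduces, by the augmentation property (\Cref{def:Matroids}) and the exchange structure (\Cref{thm:MatroidBasisExchange}), at most one additional competitor $j$. In $G''$, $b_i$ occupies that matroid slot and contributes nothing to the auctioneer's revenue, since a fabricated bid pays the auctioneer itself and its collateral is returned rather than burnt; in $G'$, the slot instead goes to the real bidder $j$, who, being $\mathsf{active}$ at level $\ell$, becomes $\mathsf{promised}$ as soon as the slot frees and therefore pays at most the level-$\ell$ threshold $\overline{\vv}_j^{-1}(p^{\vv}_\ell)$.

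The decisive step is the collateral comparison. The auctioneer's net gain from aborting is the extra real payment (at most $\overline{\vv}_j^{-1}(p^{\vv}_\ell)$, or zero if no such $j$ exists) minus the burnt collateral, which is at least the amount locked by the start of level $\ell$. Because the collateral is indexed ahead of the current virtual price, namely $f_\ell = \max_i \overline{\vv}_i^{-1}(\incr^2(p^{\vv}_\ell)) = \max_i \overline{\vv}_i^{-1}(p^{\vv}_{\ell+2})$ with $\overline{\vv}^{-1}$ nondecreasing and $p^{\vv}_{\ell+2} > p^{\vv}_\ell$, the burnt collateral strictly exceeds $\overline{\vv}_j^{-1}(p^{\vv}_\ell) \ge p_j$; this is precisely why the collateral is set two levels ahead. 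Hence $\text{revenue}(G',\vb{v}) - \text{revenue}(G'',\vb{v}) \le p_j - f_\ell < 0$ pointwise on every profile where the abort fires, while the two coincide elsewhere. Thus $G''$ weakly dominates $G'$ and strictly improves on the abort profiles, contradicting that $G'$ is undominated, and the same reasoning covers case (b) since an about-to-be-$\mathsf{promised}$ bid is handled by simply letting it enter $\promiseBid_\ell$.

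I expect the main obstacle to be making the ``only one real bidder $j$ changes'' claim fully rigorous: I must verify that deleting the $\mathsf{promised}$ fabricated element and re-running $\hat{\mechanism}$ alters the final allocation by exactly one matroid exchange and does not cascade into the payments of the other $\mathsf{promised}$ bidders. This is where \Cref{lem:clinched-implies-allocation} and the exchange property \Cref{thm:MatroidBasisExchange} must be combined carefully. A secondary subtlety is confirming that a bidder $j$ filling the freed slot becomes $\mathsf{promised}$ at virtual price no higher than $p^{\vv}_\ell$, so that the bound $p_j \le \overline{\vv}_j^{-1}(p^{\vv}_\ell) < f_\ell$ remains valid; together these ensure the escalating collateral always outweighs the marginal revenue from the freed slot.
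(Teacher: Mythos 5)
Your proposal follows the same skeleton as the paper's proof --- a contradiction via a companion deviation $G''$ that retains $b_i$ but otherwise mirrors $G'$, the matroid exchange property (\Cref{thm:MatroidBasisExchange}) to show the two final allocations differ by at most one swapped bidder $j$, and a comparison of the swap gain against the burnt collateral --- but the decisive quantitative step has a genuine gap. You bound the swapped-in bidder's payment by asserting that $j$ is $\mathsf{active}$ at level $\ell$ and ``becomes $\mathsf{promised}$ as soon as the slot frees,'' hence pays at most $\overline{\vv}_j^{-1}(p^{\vv}_\ell)$. Neither half of this is justified, and the premise is in fact impossible. Freeing $i$'s slot does not promise anyone immediately: a replacement bidder can keep competing and be promised at a later level at a strictly higher price, which for a large enough bid would exceed any fixed collateral --- so if one grants your premise that $j$ is $\mathsf{active}$, your bound fails and the abort could look profitable. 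What actually rescues the lemma, and what your argument never uses, is the hypothesis that $b_i$ is (about to be) $\mathsf{promised}$: the swap partner is the minimum-virtual-bid element of the fundamental circuit of $i$ with respect to $A(G')$, and if every element of that circuit were still $\mathsf{active}$ or $\mathsf{promised}$ at $i$'s promise level, then the circuit minus $i$ would be a feasible set of competing/promised bidders whose union with $\{i\}$ is infeasible, contradicting the promise condition. Hence the circuit, and in particular its minimum-bid element $j$, must be a bidder who had already $\mathsf{quit}$ and revealed a bid below the then-current threshold; only then does the escalating collateral dominate the payment gain. This is exactly the paper's step ``the difference of revenue will be at most $b_j - d_{\ell'}$, which must be negative since otherwise $b_i$ would not be $\mathsf{promised}$'' --- the promise hypothesis caps $b_j$; an immediate re-promise of $j$ is not the mechanism. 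You yourself flag this as a ``secondary subtlety,'' but it is the heart of the lemma, and your proposed resolution for it is incorrect.

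Two smaller inaccuracies, both fixable: (i) your claim, attributed to \Cref{lem:clinched-implies-allocation}, that all other bidders' payments are unaffected by whether $b_i$ is retained is false --- aborting $b_i$ triggers a fresh simulation without it, which can strictly lower promise prices of bidders allocated in both runs --- though the error is in the safe direction, since removing a bid in a matroid only weakly decreases critical prices, so the comparison survives once stated as an inequality rather than an equality; (ii) the collateral burnt by a bid aborted at level $\ell$ is $f_{\ell-1}$, the amount locked by the start of the level, not $f_\ell$; the bound $f_{\ell-1} = \max_k \overline{\vv}_k^{-1}(p^{\vv}_{\ell+1}) > \max_k \overline{\vv}_k^{-1}(p^{\vv}_{\ell})$ still suffices once the main gap above is repaired.
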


\begin{proof}
Consider a deviation $G'$ where the auctioneer $\mathsf{abort}$ a fabricated bid $b_i$ at some information set $I$ of level $\ell$ where $b_i$ is $\mathsf{promised}$ or going to be $\mathsf{promised}$.
Let $\ell'\le\ell$ be the last level where $b_i$ is not $\mathsf{promised}$ at the beginning of the level.
Define $G''$ be another deviation that is identical to $G'$ except that $G''$ does not $\mathsf{abort}$ $b_i$ at information set $I$, and after this point, $G''$ proceeds exactly as if $b_i$ had been $\mathsf{aborted}$ in $G'$.\footnote{This is feasible because the auctioneer's subsequent decisions remain unaffected by whether a fabricated bid is $\mathsf{aborted}$.}

Let $B',B''$ denote sets of bidders who were not $\mathsf{aborted}$ under $G'$ and $G''$, respectively.
By definition, for any strategy profile $\vb{S}$ and bidders' value $\vb{v}$, $G''$ reaches the information set $I$ if and only if $G'$ reaches $I$, and the final set of bidders who did not $\mathsf{abort}$ will be identical expect for $i$, i.e., $B'' = B' \cup \{i\}$.
By the optimality of ADRA, the allocations $A(G'')$ and $A(G'')$ correspond to the virtual-surplus-maximizing feasible subsets of $B'$ and $B''$, respectively, and $i \in A(G'')$ by \Cref{lem:clinched-implies-allocation}.
Since $\hat{\Feasibility}$ is a matroid, the size of $G''$ and $G'$ differ by at most $1$, and we consider these two cases separately:
\begin{enumerate}
    \item When $|A(G'')| = |A(G')|+1$, we have $A(G'') = A(G') \cup \{i\}$.
    Then, compared to $G''$, $G'$ loses the collateral $d_{\ell'}$ for aborting $b_i$ without getting anything in return.
    \item When $|A(G'')| = |A(G')|$, there exists some $j \in [\hat{n}]$ such that $\overline{\vv_j}(b_j)\le\overline{\vv_i}(b_i)$ and $(A(G') \setminus \{j\}) \cup \{i\} \in \hat{\Feasibility}$ by \Cref{thm:MatroidBasisExchange} and optimality of $A(G'')$. 
    Then, the difference of revenue between $G'$ and $G''$ will be at most $b_j-d_{\ell'}$ (when $j$ is a real bidder), which must be a negative number since otherwise $b_i$ would not be $\mathsf{promised}$ at the beginning of level $\ell'+1$ (or before aborting if $\ell'=\ell$).
\end{enumerate}
In conclusion, $G''$ dominates $G'$. 
\end{proof}

We further show allocation monotonicity between two suggested strategies $S^\ADRA_i(b_i), S^\ADRA_i(b_i')$ with $b_i < b_i'$ via \Cref{lem:clinched-before-quit} and \Cref{lem:allocation-monotonicity}.

\begin{lemma}\label{lem:clinched-before-quit}
   When the auctioneer employs any undominated deviation $G'$, consider two suggested strategies $S_i^\ADRA(b_i), S_i^\ADRA(b_i')$ for a real bidder $i \in [n]$ with $b_i < b_i'$.
    Whenever bidder $i$ $\mathsf{quit}$ at some level $\ell$ and is eventually allocated using $S_i^\ADRA(b_i)$, she must be $\mathsf{promised}$ by the end of level $\ell$ using $S_i^\ADRA(b_i')$.
\end{lemma}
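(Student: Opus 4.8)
The plan is to couple the two executions of the ADRA that share the same auctioneer deviation $G'$ and the same realized behaviour of all bidders other than $i$, but in which bidder $i$ plays $S_i^\ADRA(b_i)$ (call this Run A) versus $S_i^\ADRA(b_i')$ (Run B). Because the commitment scheme is perfectly hiding, bidder $i$'s committed bid is invisible to the auctioneer and the other bidders until she reveals it when she $\mathsf{quit}$s, so the two runs are literally identical --- same virtual prices $p^{\vv}_\ell$, same $\mathsf{quit}$ and $\mathsf{abort}$ events, same simulations of $\hat{\mechanism}$, and same promise sets $\promiseBid$ --- up until the first level at which $i$ reveals. Since $\overline{\vv}_i$ is monotone and $b_i < b_i'$, we have $\overline{\vv}_i(b_i) \le \overline{\vv}_i(b_i')$, so $i$ $\mathsf{quit}$s no earlier in Run B than in Run A; in particular the two runs agree through the end of level $\ell - 1$.

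First I would dispose of the easy part. By \Cref{lem:clinched-implies-allocation}, since $i$ is allocated in Run A she is $\mathsf{promised}$ by the end of level $\ell$; let $\ell_0 \le \ell$ be the first level at which the simulation of $\hat{\mechanism}$ promises her in Run A, clinching at some internal virtual price $q_A$. If $\ell_0 < \ell$, then $i$ has not revealed in either run by level $\ell_0$, so that simulation is identical in Run B and promises $i$ at the same level $\ell_0 < \ell$; the claim follows. The only remaining case is $\ell_0 = \ell$, where $i$ is first clinched in the level-$\ell$ simulation triggered (in Run A) by her own $\mathsf{quit}$ together with the other level-$\ell$ reveals. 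Here $q_A$ satisfies $p^{\vv}_{\ell - 1} < q_A \le \overline{\vv}_i(b_i) < p^{\vv}_\ell$: the upper bound because $i$ is still active at $q_A$ inside the simulation, and the lower bound because otherwise an earlier simulation would already have clinched her.

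The heart of the argument is a monotonicity observation about the matroid ascending auction $\hat{\mechanism}$: bidder $i$ clinches at the internal virtual price $q_A$, which lies strictly below $\overline{\vv}_i(b_i)$, so the clinch is determined entirely by the configuration at prices $\le q_A$, where $i$ is active in both runs and raising her bid to $b_i'$ (which only delays her own $\mathsf{quit}$, at a price $> q_A$) changes nothing. The other real bidders $\mathsf{quit}$ at their own virtual bids, independently of $i$, so the bidders whose dropping out in $(p^{\vv}_{\ell - 1}, q_A]$ forces $i$'s clinch are revealed at level $\ell$ regardless of $i$; this guarantees that a level-$\ell$ simulation is also triggered in Run B (whether or not $i$ herself reveals at level $\ell$), and that simulation, run up to $p^{\vv}_\ell > q_A$ with $i$ active, clinches $i$ at $q_A$ as well. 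Hence $i$ is $\mathsf{promised}$ by the end of level $\ell$ in Run B. The sub-case $\overline{\vv}_i(b_i') < p^{\vv}_\ell$ (where $i$ also $\mathsf{quit}$s at level $\ell$ in Run B, but reveals the larger bid) is handled by the same simulation, now with $i$'s virtual bid raised from $\overline{\vv}_i(b_i)$ to $\overline{\vv}_i(b_i')$, which again leaves the clinch at $q_A$ untouched.

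The main obstacle is the auctioneer's $\mathsf{abort}$ decisions at level $\ell$. Once $i$ reveals in Run A but not (yet) in Run B, the deterministic strategy $G'$ sees different histories and may $\mathsf{abort}$ a different set of fabricated bids, which would change the competitor set entering the level-$\ell$ simulation and could, a priori, push $i$'s clinch above $p^{\vv}_\ell$. To close this gap I would invoke the undominatedness of $G'$ through \Cref{lem:undominated-deviation}: the auctioneer never $\mathsf{abort}$s a fabricated bid that is, or would become, $\mathsf{promised}$, so the only fabricated bids whose $\mathsf{abort}$ status can differ across the two runs are losing bids, and I would argue, using the symmetric basis-exchange property (\Cref{thm:MatroidBasisExchange}) exactly as in the proof of \Cref{lem:undominated-deviation}, that adding or deleting such losers from the matroid restriction cannot prevent $i$ from lying in every basis by price $q_A$. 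Formalizing this last step --- that the differing treatment of low-virtual-value fabricated bids preserves $i$'s clinch price --- is where I expect the real work to be, and it is the linchpin that upgrades the clean hidden-commitment coupling into a rigorous monotonicity statement.
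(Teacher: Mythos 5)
Your proposal has two genuine gaps, and the second one you acknowledge yourself.

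First, your ``easy part'' rests on a step that does not hold: you assert that since $i$ is allocated in Run A she must be $\mathsf{promised}$ by the end of level $\ell$ in Run A, citing \Cref{lem:clinched-implies-allocation}. That lemma gives only the converse direction ($\mathsf{promised}$ implies allocated). The converse you need is false for a general deviation: bidder $i$ can $\mathsf{quit}$ at level $\ell$ without being $\mathsf{promised}$, and only enter the final allocation because the auctioneer later $\mathsf{aborts}$ competing fabricated bids, triggering re-simulations of $\hat{\mechanism}$ in which $i$ clinches. Ruling out exactly this ``rescue-by-abort'' scenario for \emph{undominated} $G'$ is the real content of the lemma, so assuming $i$ is promised at level $\ell$ in Run A at the outset begs the question. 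The paper's proof never makes this assumption: it argues by contradiction on \emph{final allocations}, ordering the bids $i_1,\dots,i_k$ that $\mathsf{abort}$ after $i$'s $\mathsf{quit}$, defining hybrid allocations $A^{(0)},\dots,A^{(k)}$ (only a prefix of the aborts happens), and using perfect hiding to get $i \notin A^{(0)}$ from the Run-B hypothesis while $i \in A^{(k)}$. Matroid optimality and \Cref{thm:MatroidBasisExchange} then locate a pivotal abort $i_j$ with $A^{(j)} = (A^{(j-1)}\setminus\{i_j\})\cup\{i\}$, so $i_j$ would itself have been $\mathsf{promised}$ had it not been $\mathsf{aborted}$ --- contradicting \Cref{lem:undominated-deviation}.

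Second, the step you call the linchpin --- that the two runs' differing $\mathsf{abort}$ decisions at level $\ell$ cannot destroy $i$'s clinch --- is not merely unformalized; the claim you sketch is false as a matroid statement. Deleting a non-$\mathsf{promised}$ fabricated bid of high virtual value from the competitor set \emph{can} change whether $i$ lies in the span of the bids above the price, i.e., it can create a clinch that would not otherwise occur (this is precisely how a strategic auctioneer would rescue $i$). The correct resolution is not that such deletions are harmless, but that under an undominated $G'$ they never happen: any bid whose removal would let $i$ in is, by the exchange argument above, a bid that would itself be allocated, and \Cref{lem:undominated-deviation} forbids aborting it. So the undominatedness hypothesis must be invoked to kill the divergence between the runs, not to show the divergence is innocuous; your coupling-plus-monotonicity plan cannot be completed in the form you describe, whereas the paper's hybrid argument over post-quit aborts handles both of these difficulties at once.
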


\begin{proof}
    Assume for some strategy profile $\vb{S}$ and bidders' value $\vb{v}$,  bidder $i$ $\mathsf{quit}$ at level $\ell$ but ends up being allocated (i.e., $i \in A(G')$) using $S_i^\ADRA(b_i)$, while she would not be $\mathsf{promised}$ by the end of level $\ell$ when using $S_i^\ADRA(b_i')$ with $b_i'>b_i$.
    Let $i_1,i_2,...,i_k \in [\hat{n}]$ be the bidders who $\mathsf{abort}$ after bidder $i$ $\mathsf{quit}$, listed in the order in which they abort.
    For every $0 \le j \le k$, further define $A^{(j)} \in \hat{\Feasibility}$ to be the final allocation assuming only a prefix of these bidders $\{i_1,i_2,\ldots,i_j\}$ $\mathsf{abort}$, and we have:
    \begin{itemize}
        \item $i \notin A^{(0)}$ since bidder $i$ would not be $\mathsf{promised}$ by the end of level $\ell$ when using $S_i^\ADRA(b_i')$ with $b_i'>b_i$ (which is  indistinguishable from $S_i^\ADRA(b_i)$ before she $\mathsf{quit}$ at level $\ell_i$ due to the perfectly hiding property of the commitment scheme).
        \item $i \in A^{(k)}$ since $A(G') = A^{(k)}$.
    \end{itemize}
    
    By a hybrid argument, there exists some $j \in [k]$ such that $i \notin A^{(j-1)}$ while $i \in A^{(j)}$.
    By the optimality of ADRA and the matroid augmentation property, it must be the case that $(A^{(j-1)} \setminus \{i_j\}) \cup \{i\} = A^{(j)}$. However, that further implies the fabricated bidder $i_j$ has already become $\mathsf{promised}$ before being $\mathsf{aborted}$, due to the optimality of $A^{(j)}$ and the fact that $i$ is already $\mathsf{quit}$. By \Cref{lem:undominated-deviation}, this gives us a contradiction to $G'$ being undominated.
\end{proof}

\begin{lemma}\label{lem:allocation-monotonicity}
    When the auctioneer employs any undominated deviation $G'$, consider two suggested strategies $S_i^\ADRA(b_i), S_i^\ADRA(b_i')$ for a real bidder $i \in [n]$ with $b_i < b_i'$.
    Whenever bidder $i$ is allocated using $S_i^\ADRA(b_i)$, she will also be allocated using $S_i^\ADRA(b_i')$ with an identical payment.
\end{lemma}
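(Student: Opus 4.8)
The plan is to fix the auctioneer's undominated deviation $G'$, the values and suggested strategies of all bidders other than $i$, and bidder $i$'s realized value, and then compare the two executions in which bidder $i$ plays $S_i^\ADRA(b_i)$ versus $S_i^\ADRA(b_i')$. Let $\ell_i$ be the level at which $S_i^\ADRA(b_i)$ prescribes $\mathsf{quit}$, i.e.\ the first level with $\overline{\vv}_i(b_i) < p^{\vv}_{\ell_i}$. Since $b_i < b_i'$ forces $\overline{\vv}_i(b_i) \le \overline{\vv}_i(b_i')$, the strategy $S_i^\ADRA(b_i')$ keeps bidder $i$ $\mathsf{active}$ at least as long, and by the perfectly hiding property of the commitment scheme the two executions are \emph{identical} through the end of level $\ell_i - 1$: up to that point bidder $i$ is $\mathsf{active}$ in both runs and reveals nothing, so every other agent and every simulation of $\hat{\mechanism}$ sees the same transcript. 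The proof then splits on whether bidder $i$ becomes $\mathsf{promised}$ strictly before she would $\mathsf{quit}$ or only afterward; note this split is consistent across the two strategies precisely because the executions coincide before $\ell_i$.

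In the first case bidder $i$ is $\mathsf{promised}$ during a simulation at some level $\ell' < \ell_i$. Because the two executions coincide through level $\ell_i - 1$ and bidder $i$ is $\mathsf{active}$ there, her committed bid is never consulted by that simulation (active bidders are treated as not dropping up to $p^{\vv}_{\ell'}$), so she is $\mathsf{promised}$ at the identical level and the identical price under $S_i^\ADRA(b_i')$. Invoking \Cref{lem:clinched-implies-allocation}, she is then allocated under both strategies at the same promised payment, settling this case.

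In the second case bidder $i$ is allocated only after $\mathsf{quit}$ting at level $\ell_i$; being allocated, she must clinch inside the level-$\ell_i$ simulation at some virtual price $p^\star \le \overline{\vv}_i(b_i) < p^{\vv}_{\ell_i}$. Applying \Cref{lem:clinched-before-quit} gives that under $S_i^\ADRA(b_i')$ bidder $i$ is $\mathsf{promised}$ by the end of level $\ell_i$, and since she is not promised earlier (by the Case~2 hypothesis and the coincidence of the executions before $\ell_i$) this promise also occurs during level $\ell_i$; by \Cref{lem:clinched-implies-allocation} she is again allocated. To match payments I would argue that the clinch price depends only on the drop-out behaviour of the \emph{other} bidders and on the auctioneer's $\mathsf{abort}$ decisions: in the clinching test of \Cref{alg:APA}, bidder $i$ becomes $\mathsf{promised}$ at the smallest price at which $S \cup \{i\} \in \Feasibility$ for every feasible $S \subseteq \competeBid_\ell \cup \promiseBid_{\ell-1}$. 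Since $p^\star \le \overline{\vv}_i(b_i) \le \overline{\vv}_i(b_i')$, bidder $i$ is still $\mathsf{competing}$ at every price up to $p^\star$ in \emph{both} runs while the competing status of all other bidders is identical, so this condition is first met at exactly $p^\star$ in both; hence the promised price in value space, $\sup\{\theta:\overline{\vv}_i(\theta)\le p^\star\}$, and therefore the payment, coincide.

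The main obstacle is exactly this payment-equality step in the second case: I must rule out that running the post-$\mathsf{quit}$ simulation with bidder $i$'s actually-revealed lower bid $b_i$—in which she would eventually drop out at $\overline{\vv}_i(b_i)$—produces a clinch price different from the one obtained when she stays $\mathsf{active}$ with $b_i'$. The crux is the standard ascending-auction fact that a bidder's clinch threshold is independent of her own bid as long as she remains in the auction up to that threshold; verifying it in our matroid, deferred-revelation setting amounts to checking that the clinching condition of \Cref{alg:APA} sees bidder $i$ only through whether she is still $\mathsf{competing}$, which holds at every price below $p^\star$ in both executions, so the two simulations reach the promise of $i$ at one and the same price.
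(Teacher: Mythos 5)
There is a genuine gap, and it sits exactly where you flagged it --- but your proposed resolution does not close it. Your allocation claim and Case~1 are fine (they follow the paper's route: \Cref{lem:clinched-before-quit} plus \Cref{lem:clinched-implies-allocation}, and the executions indeed coincide while bidder $i$ is $\mathsf{active}$ by the hiding property). The problem is the payment equality in Case~2. You argue the level-$\ell_i$ simulation clinches bidder $i$ at the same virtual price in both runs because ``the competing status of all other bidders is identical.'' That premise is false in general: the \emph{real} bidders behave identically in the two runs, but the \emph{fabricated} bids need not. At level $\ell_i$ the auctioneer observes different information in the two executions --- with $S_i^\ADRA(b_i)$ bidder $i$ $\mathsf{quit}$s and reveals $b_i$, while with $S_i^\ADRA(b_i')$ she remains $\mathsf{active}$ --- and an arbitrary deviation $G'$ may $\mathsf{abort}$ a different set of fabricated bids in response (recall the auctioneer acts last in each level). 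Since bidder $i$'s clinch price is determined by the surviving bids, the ``standard ascending-auction fact'' that a bidder's threshold does not depend on her own bid is beside the point; what must be ruled out is dependence on the auctioneer's aborts, which \emph{do} adapt to her behaviour. Nothing in the mechanism itself forces the two simulations to see the same bid sets.

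Closing this requires invoking undominatedness of $G'$ a second time, which is the missing idea in your proposal and the core of the paper's proof. Writing $p_i$ and $p_i'$ for the two payments, the paper argues: if $p_i < p_i'$, then in the run with $b_i$ the auctioneer must have $\mathsf{aborted}$ some fabricated bid $b_j = p_i'$ (the bid setting the price in the other run) after seeing bidder $i$ $\mathsf{quit}$; but not aborting $b_j$ saves its collateral $d_\ell$ and weakly increases revenue whether $b_j < b_i$ or $b_j \ge b_i$ (using $p_i \le b_i < d_\ell$), so this abort is dominated --- contradicting $G'$ undominated. Symmetrically, if $p_i > p_i'$, the auctioneer must have $\mathsf{aborted}$ some $b_j = p_i \le b_i < b_i'$ in the run where bidder $i$ stays $\mathsf{active}$, and again not aborting dominates. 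An argument of this shape --- converting any payment discrepancy into a dominated abort --- is what your Case~2 needs and currently lacks.
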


\begin{proof}
    By \Cref{lem:clinched-before-quit}, bidder $i$ using $S_i^\ADRA(b_i)$ can be allocated only if she is $\mathsf{promised}$ by the end of level $\ell$ (and eventually allocated) when using $S_i^\ADRA(b_i')$. 
    Denote bidder $i$'s payment when using $S_i^\ADRA(b_i)$, $S_i^\ADRA(b_i')$ by $p_i, p_i'$, correspondingly.
    Note that $p_i \le b_i$ and $p_i' \le b_i'$.
    We want to show $p_i=p_i'$.


    Assume $p_i<p_i'$.
    It must be the case that the auctioneer $\mathsf{abort}$ some fabricated bid $b_j$ such that $b_j = p_i'$ after bidder $i$ $\mathsf{quit}$ at level $\ell$ when using $S_i^\ADRA(b_i)$.
    When $b_j < b_i$, not aborting it will save its collateral $d_\ell$ while increasing the payment from bidder $i$;
    when $b_j \ge b_i$, the collateral $d_\ell$ is saved while losing the payment of from bidder $i$, still resulting a positive increment to the total revenue as $p_i \le b_i < d_\ell$.
    In conclusion, aborting $b_j$ is always dominated by not aborting $b_j$, which contradicts to $G'$ being undominated. 

    Now assume $p_i>p_i'$.
    It must be the case that the auctioneer $\mathsf{abort}$ some fabricated bid $b_j$ such that $b_j = p_i$ at level $\ell$ if bidder $i$ does not $\mathsf{quit}$ using $S_i^\ADRA(b_i')$.
    But that implies $b_j\le b_i <b_i'$, and hence not aborting it will save its collateral $d_\ell$ while increase the payment from bidder $i$.
    Therefore, aborting $b_j$ is always dominated by not aborting $b_j$, which contradicts $G'$ being undominated. 
\end{proof}

Now we are ready to prove that truth-telling forms an ex-post Nash no matter which undominated deviation $G'$ the auctioneer employs.

\begin{lemma}\label{lem:ADRA-equilbrium}
    When the auctioneer employs any undominated deviation $G'$, the truthful strategy $\vb{S}^\ADRA(\vb{v})=(S_1^\ADRA(v_1),S_2^\ADRA(v_2),\ldots,S_n^\ADRA(v_n))$ forms an ex-post Nash equilibrium.
\end{lemma}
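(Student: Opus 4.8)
The plan is to fix a bidder $i$, a value profile $\vb{v}$, and an arbitrary undominated auctioneer deviation $G'$, assume every other bidder $j \neq i$ plays truthfully via $S_j^\ADRA(v_j)$, and show that $i$'s best response is $S_i^\ADRA(v_i)$. By \Cref{lemma:use-suggested-strategy} it suffices to compare $i$'s utility across the one-parameter family $\{S_i^\ADRA(b_i)\}_{b_i \ge 0}$; writing $x_i(b_i) \in \{0,1\}$ for whether $i$ is allocated and $p_i(b_i)$ for her payment, her utility is $v_i\,x_i(b_i) - p_i(b_i)$, since a bidder who follows a suggested strategy never forfeits her deposit.

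The backbone of the argument is the single-parameter (Myerson) characterization. \Cref{lem:allocation-monotonicity} already supplies two of the three ingredients: the allocation $x_i(b_i)$ is monotone non-decreasing in $b_i$, and the payment conditioned on winning is a constant $p_i^\star$ independent of $b_i$. Moreover, the proof of \Cref{lem:allocation-monotonicity} records that a winning bid always satisfies $p_i^\star \le b_i$. Let $\critical_i := \inf\{b_i : x_i(b_i) = 1\}$ be $i$'s critical bid. Since every winning bid is at least $p_i^\star$, taking the infimum gives $p_i^\star \le \critical_i$. The remaining---and crucial---ingredient is the matching inequality $\critical_i \le p_i^\star$, i.e.\ that the payment equals the threshold rather than merely lower-bounding it.

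To establish $\critical_i \le p_i^\star$, I would argue that any bid with $b_i > p_i^\star$ wins. Because a suggested strategy has $i$ $\mathsf{quit}$ only once the virtual price strictly exceeds $\overline{\vv}_i(b_i)$, and $p_i^\star$ is precisely the value-space price at which $i$ is $\mathsf{promised}$ along a winning trajectory, monotonicity of $\overline{\vv}_i$ ensures that bidding $b_i > p_i^\star$ keeps $i$ $\mathsf{active}$ at the level $\ell^\star$ where the $\mathsf{Promise}$ condition of \Cref{alg:APA} first fires for $i$. That condition depends only on the feasibility of adjoining $i$ to feasible subsets of the current $\mathsf{active}$/$\mathsf{promised}$ set, hence only on when the other bidders $\mathsf{quit}$ (fixed by $v_{-i}$) and on which fabricated bids the auctioneer has $\mathsf{aborted}$; by \Cref{lem:undominated-deviation} an undominated $G'$ never $\mathsf{aborts}$ a bid that is or would be $\mathsf{promised}$, so $\ell^\star$ is independent of $b_i$. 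Thus $i$ is $\mathsf{promised}$ at $\ell^\star$, and by \Cref{lem:clinched-implies-allocation} is then allocated at payment $p_i^\star$. Taking the infimum over such $b_i$ yields $\critical_i \le p_i^\star$, so $\critical_i = p_i^\star$.

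With $\critical_i = p_i^\star$ in hand, truthfulness is immediate. If $v_i > \critical_i$, truthful bidding wins and yields $v_i - p_i^\star > 0$; no deviation can do better, since winning always costs $p_i^\star$ and losing yields $0$. If $v_i < \critical_i$, truthful bidding loses and yields $0$, whereas any deviation that wins costs $p_i^\star = \critical_i > v_i$ and so yields negative utility. Hence $S_i^\ADRA(v_i)$ is a best response against $S_{-i}^\ADRA(v_{-i})$ and $G'$ for every $i$ and every $\vb{v}$, which is exactly the ex-post Nash condition. I expect the main obstacle to be the direction $\critical_i \le p_i^\star$: one must verify that raising one's bid cannot leave a ``gap'' in which a bidder loses yet could have won at a strictly lower price, and this is precisely where the clinching structure of \Cref{alg:APA} together with the no-spurious-abort guarantee of \Cref{lem:undominated-deviation} does the work.
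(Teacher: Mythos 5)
Your overall architecture (restrict to suggested strategies via \Cref{lemma:use-suggested-strategy}, get monotone allocation and constant winning payment $p_i^\star$ from \Cref{lem:allocation-monotonicity}, then prove the threshold-equals-payment identity $\critical_i = p_i^\star$ and read off truthfulness) is a reasonable reorganization of the paper's case analysis, and the easy direction $p_i^\star \le \critical_i$ plus the final utility comparison are fine. But there is a genuine gap exactly at the step you flag as crucial: your justification for $\critical_i \le p_i^\star$ does not go through. You argue that the level $\ell^\star$ at which bidder $i$ is first $\mathsf{promised}$ is ``independent of $b_i$'' because, by \Cref{lem:undominated-deviation}, an undominated $G'$ never $\mathsf{aborts}$ a bid that is or would be $\mathsf{promised}$. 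That lemma, however, says nothing about $\mathsf{aborts}$ of \emph{non-promised} fabricated bids, and those abort decisions are adaptive: the auctioneer observes at each level whether $i$ has $\mathsf{quit}$, and may condition its aborts on that observation. Concretely, $G'$ could be: ``if $i$ is still $\mathsf{active}$ at level $\ell$, abort a non-promised fake blocker $b_j$; if $i$ $\mathsf{quit}$ at level $\ell$, keep $b_j$.'' Under such a strategy, a high bid wins at the low price $p_i^\star$ created by the abort, while a bid in $(p_i^\star,\, \theta^i_\ell)$ quits at level $\ell$, the blocker is retained, and $i$ loses --- so $\critical_i > p_i^\star$ is perfectly consistent with \Cref{lem:undominated-deviation}. (Relatedly, your assertion that any $b_i > p_i^\star$ stays $\mathsf{active}$ through $\ell^\star$ is also false in general: promises in the simulation of $\hat{\mechanism}$ can occur at intermediate virtual prices, so $p_i^\star$ can be strictly below the level-$\ell^\star$ value-space price.)

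What closes this gap in the paper is a separate domination argument, which your proof never makes. In the only dangerous case ($i$ loses when bidding $v_i$, but wins at price $p \le v_i$ when bidding some $b_i > v_i$), the two trajectories coincide up to the level $\ell$ at which $v_i$ would $\mathsf{quit}$ (by the hiding property of the commitments), so the price drop must be caused by the auctioneer $\mathsf{aborting}$, \emph{after} observing that $i$ did not quit at level $\ell$, some fabricated bid $b_j > v_i$. The paper then shows this abort is itself dominated: not aborting saves the collateral $d_\ell \ge b_j$ while forgoing revenue at most $p \le v_i < b_j$, contradicting that $G'$ is undominated. This collateral-versus-revenue comparison (which is where the specific collateral schedule $f_\ell$ of the ADRA enters) is the substantive content of the equilibrium proof; without it, the identity $\critical_i = p_i^\star$ --- and hence your case $v_i < \critical_i$ --- is unproven. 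If you insert this argument in place of the ``$\ell^\star$ is independent of $b_i$'' claim, your Myerson-style packaging of the result does work.
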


\begin{proof}
    Consider the best response for a real bidder $i \in [n]$ against auctioneer using $G'$ and other bidders using $\vb{S}_{-i}^\ADRA(\vb{v}_{-i})$.
    By \Cref{lemma:use-suggested-strategy}, it suffices for us to consider $S_i^\ADRA(b_i)$ for $b_i \ne v_i$ and show $S_i^\ADRA(v_i)$ weakly dominates all of them.

    When $b_i<v_i$, we immediately have $S_i^\ADRA(v_i)$ weakly dominates $S_i^\ADRA(b_i)$ by \Cref{lem:allocation-monotonicity}.
    On the other hand, when $b_i>v_i$, if bidder $i$ gets allocated using $S_i^\ADRA(v_i)$, then $S_i^\ADRA(b_i)$ will also end up with a same payment by \Cref{lem:allocation-monotonicity}.
    The only case left to analyze is where bidder $i$ only gets allocated by using $S_i^\ADRA(b_i)$ but not when using $S_i^\ADRA(v_i)$.

    Assume the payment $p$ for $S_i^\ADRA(b_i)$ satisfies $p \le v_i$.
    Let $\ell$ be the level that bidder $i$ $\mathsf{quit}$ when using $S_i^\ADRA(v_i)$, and we know bidder $i$ will never be $\mathsf{promised}$ in that case by \Cref{lem:clinched-implies-allocation}.
    Therefore, the auctioneer have to $\mathsf{abort}$ some fabricated bid $b_j$ such that $b_j > v_i$ after observing that bidder $i$ does not $\mathsf{quit}$ at level $\ell$. However, not aborting $b_j$ can save a collateral of at least $b_j$ while losing a revenue of $p \le v_i < b_j$, which contradicts $G'$ begin undominated. Therefore, the payment $p$ for $S_i^\ADRA(b_i)$ satisfies $p > v_i$, which yields a negative utility for bidder $i$.
\end{proof}

\begin{proof}[Proof of \Cref{thm:ADRA}]
    Recall that \Cref{lem:ADRA-equilbrium} states that any undominated deviation $G'$ is also an BIC mechanism. Since it is already shown that ADRA for matroid is revenue-optimal among BIC mechanisms, we conclude that no deviation can provide more revenue than following the protocol exactly, and thus ADRA for matroids is credible.
\end{proof}

%
%
%
%
%

\end{document}